\newtheorem{lemma}{Lemma}
\newtheorem{theorem}{Theorem}
\newtheorem{corollary}{Corollary}
\newtheorem{definition}{Definition}
\title{Constructing Long Paths in Graph Streams}
\author{
Christian Konrad\footnote{Supported by EPSRC New Investigator Award EP/V010611/1.} \\ 
University of Bristol \\ 
\texttt{christian.konrad@bristol.ac.uk} 
\and
Chhaya Trehan\footnote{Supported by EPSRC New Investigator Award EP/V010611/1.} \\ 
 Unaffiliated Researcher\\ 
\texttt{chhaya.dhingra@gmail.com} 
}
\date{}
\definecolor{darkgreen}{rgb}{0, 0.5, 0}
\DeclareMathOperator{\poly}{poly}
\DeclareMathOperator{\ICost}{ICost}
\DeclareMathOperator{\IC}{IC}
\DeclareMathOperator{\lp}{lp}
\DeclareMathOperator{\lc}{lc}
\DeclareMathOperator{\Exp}{{\mathop{\mathbb{E}}}}
\begin{document}

\maketitle
\thispagestyle{empty}

\begin{abstract}

In the {\em graph stream model of computation}, an algorithm processes the edges of an $n$-vertex input graph in one or more sequential passes while using a memory that is sublinear in the input size. 

\smallskip

The streaming model poses significant challenges for algorithmically constructing long paths. Many known algorithms that are tasked with extending an existing path as a subroutine require an entire pass over the input to add a single additional edge. 
 This raises a fundamental question: Are multiple passes inherently necessary to construct paths of non-trivial lengths, or can a single pass suffice? To address this question, we systematically study the \textsf{Longest Path} problem in the one-pass streaming model. In this problem, given a desired approximation factor $\alpha$, the objective is to compute a path of length at least $\lp(G) / \alpha$, where $\lp(G)$ is the length of a longest path in the input graph $G$.

\smallskip

We give algorithms as well as space lower bound results for both undirected and directed graphs. Besides the insertion-only model, where the input stream solely consists of the edges of the input graph, we also study the insertion-deletion model, where previously inserted edges may be deleted again. Our results include:

\begin{enumerate}
 \item We show that for undirected graphs, in both the insertion-only and the insertion-deletion streaming models, there are {\em semi-streaming algorithms}, i.e., algorithms that use space $O(n \poly \log n)$, that compute a path of length at least $d /3$ with high probability, where $d$ is the average degree of the input graph. These algorithms can also yield an $\alpha$-approximation to \textsf{Longest Path} using space $\tilde{O}(n^2 / \alpha)$.

 \item Next, we show that such a result cannot be achieved for directed graphs, even in the insertion-only model. We show that computing a $(n^{1 - o(1)})$-approximation to \textsf{Longest Path} in directed graphs in the insertion-only model requires space $\Omega(n^2)$. This result is in line with recent results that demonstrate that processing directed graphs is often significantly harder than undirected graphs in the streaming model.
 
 \item We further complement our results with two additional lower bounds. First, we show that semi-streaming space is insufficient for small constant factor approximations to \textsf{Longest Path} for undirected graphs  in the insertion-only model. 
 Last, in undirected graphs in the insertion-deletion model, we show that computing an $\alpha$-approximation requires space $\Omega(n^2 / \alpha^3)$.
\end{enumerate}
 \end{abstract}
 \clearpage
 \pagenumbering{arabic} 
 \newpage

 \section{Introduction}
  In the {\em graph stream model of computation}, an algorithm processes a stream of edge insertions and deletions that make up an $n$-vertex input graph $G=(V, E)$ via one or multiple sequential passes. The primary objective is to design algorithms that use as little space as possible. 
 
 The most studied and best-understood setting is the one-pass insertion-only setting, where only a single pass is allowed and the input stream does not  contain any edge deletions. It is known that many fundamental graph problems can be solved well in this setting, e.g., there are {\em semi-streaming algorithms}, i.e., algorithms that use space $\tilde{O}(n) = O(n \poly \log n)$\footnote{We write $\tilde{O}(.)$ to mean the usual Big-$O$ notation with poly-logarithmic dependencies suppressed. $\tilde{\Theta}(.)$ and $\tilde{\Omega}(.)$ are defined similarly.}, for computing a spanning tree and a maximal matching \cite{fkmsz05}, while no $o(n^2)$ space algorithms exist for other problems, such as computing a BFS or DFS tree and computing a maximal independent set \cite{cdk19,ack19}. 
 For such problems, algorithms that perform multiple passes over the input are then usually considered \cite{km19,cfht20,acgmw21,akns24}.
 
 Our work is motivated by the observation that streaming algorithms that construct long paths, often as a subroutine, require a large number of passes. For example, many streaming algorithms for computing large matchings construct augmenting paths as a subroutine (e.g., \cite{m05,kns23,mmss25}), and these algorithms typically only add a single edge or a few edges per pass to any not-yet-completed augmenting path. Another example are streaming algorithms for computing BFS/DFS trees that extend partial trees/paths by adding only a single edge per pass \cite{km19,cfht20}. This raises a fundamental question: Is the one/few-edges-per-pass strategy best possible and multiple passes are inherently necessary to construct paths of non-trivial lengths, or can a single pass suffice?

 \vspace{-0.3cm}
 \subparagraph{The \textsf{Longest Path} Problem}
 We address this question by systematically studying the \textsf{Longest Path} (\textsf{LP}) problem in the one-pass streaming setting. In this problem, the objective is to compute a path of length $\lp(G) / \alpha$, where $\lp(G)$ denotes
 the length of a longest path in the input graph $G$, and $\alpha \ge 1$ is the approximation factor. 
 
 In the offline setting, in undirected graphs, it is known that the problem is NP-hard and hard to approximate in the sense that outputting a path of length $n-n^{\epsilon}$ in Hamiltonian graphs is NP-hard \cite{kmr97}, for any $\epsilon  < 1$. In directed graphs, a much stronger hardness result is known. Bj\"{o}rklund, Husfeldt and Khanna \cite{bhk04} showed that it is NP-hard to approximate \textsf{LP} in directed graphs within a factor of $n^{1-\epsilon}$, for any $\epsilon > 0$. Regarding upper bounds, 
it is known that a path of length $d_{\text{min}}$ can be constructed greedily in polynomial time, where $d_{\text{min}}$ is the min-degree of the input graph \cite{kmr97}. There are also FPT algorithms with runtime polynomial in $n$ but exponential in the length of the path constructed (see, for example, \cite{bhkk17} and the references therein).  
 We note that streaming algorithms for the \textsf{LP} problem have also previously received  attention, however, only from a practical perspective. Kliemann et al. \cite{kss16} studied practical multi-pass algorithms without providing any theoretical guarantees. 
 
 \subsection{Our Results}
 In this work, we give one-pass streaming algorithms and space lower bounds for undirected graphs and a strong space lower bound for directed graphs. We consider both the insertion-only model (no deletions) and the insertion-deletion model (deletions allowed). 
 
 As our first result, we show that, in undirected graphs, if we sample $O(n \log n)$ random edges from the input graph, then this sample contains a path of length $\Omega(d)$ with high probability, where $d$ is the average degree of the input graph. Since this sampling task can be implemented in both the insertion-only and the insertion-deletion streaming models using standard techniques, we obtain the following theorem:
 
 \begin{theorem} \label{thm:alg-1}
 In both the insertion-only and the insertion-deletion models, for undirected graphs, there are semi-streaming (i.e. $O(n \poly \log n)$ space) algorithms that compute a path of length at least $d/3$ with high probability, where $d$ is the average degree of the input graph.
\end{theorem}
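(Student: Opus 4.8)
The plan is to reduce Theorem~\ref{thm:alg-1} to a purely combinatorial statement and then invoke standard one-pass sampling primitives. The statement I would prove is: \emph{if $G$ is an $n$-vertex graph of average degree $d$ and $R$ is a set of $k = \Theta(n \log n)$ edges of $G$ picked (near-)uniformly at random, then with high probability $R$ contains a path of length at least $d/3$.} Given this, the algorithm maintains such a sample $R$ during the pass and afterwards outputs a path of length at least $d/3$ contained in $R$; storing $R$ uses only $O(n \poly\log n)$ bits, since each edge is described by $O(\log n)$ bits. In the insertion-only model, $R$ is collected by reservoir sampling (for instance, $k$ independent reservoir samplers each of size one), which needs no advance knowledge of the number of edges. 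In the insertion-deletion model, $R$ is collected by running $k$ independent $\ell_0$-samplers on the indicator vector of the current edge set; each returns an essentially uniform present edge in $O(\poly\log n)$ space, and a union bound over the $k$ samplers keeps the overall failure probability polynomially small. If the stream defines at most $k$ edges we store all of them; in the remaining (harder) case we may analyze the idealized model in which every edge of $G$ is placed in $R$ independently with probability $p = \Theta(\log n/d) \le 1$, which the above primitives realize up to lower-order terms.

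For the combinatorial statement, I would first recall the standard fact that $G$ has a subgraph $H$ of minimum degree at least $d/2$: iteratively delete any vertex whose current degree is below $d/2$; each deletion destroys fewer than $d/2$ edges and there are at most $n$ deletions, so fewer than $dn/2 = m$ edges are destroyed and the process ends with a nonempty graph $H$ of minimum degree at least $d/2$ (hence $|V(H)| \ge d/2 + 1$). Importantly, $H$ depends on $G$ only, not on $R$.

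Next I would build a path of length $\lceil d/3 \rceil$ inside $H$ using only edges of $R$, exposing $R$ lazily. Start at an arbitrary $v_0 \in V(H)$. Suppose a path $v_0 v_1 \cdots v_\ell$ inside $V(H)$ with $\ell < d/3$ has been built, with endpoint $u = v_\ell$. Then $u$ has at least $\deg_H(u) - \ell \ge d/2 - \ell > d/6$ neighbours in $H$ that are off the path; call the corresponding edges the \emph{candidate edges}. No candidate edge was examined at an earlier step, because every previously examined edge is incident to one of $v_0, \dots, v_{\ell-1}$, whereas every candidate edge joins the freshly added vertex $u = v_\ell$ to a vertex off the path. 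Hence, conditioned on everything revealed so far, each of the more than $d/6$ candidate edges still lies in $R$ independently with probability $p$, so the probability that none of them does is at most $(1-p)^{d/6} \le e^{-p d /6} = n^{-\Omega(1)}$; in the complementary event we extend the path along any sampled candidate edge, which keeps it inside $V(H)$. A union bound over the at most $\lceil d/3 \rceil \le n$ extension steps shows that with high probability the construction never gets stuck before reaching length $\lceil d/3 \rceil \ge d/3$, and choosing the constant hidden in $k = \Theta(n \log n)$ large enough drives the failure probability below any prescribed $n^{-c}$. This proves the statement, and hence the theorem. (When the stream has at most $k$ edges, the same greedy extension inside $H$ runs deterministically on $R = E(G)$ and already yields a path of length at least $d/2 \ge d/3$.)

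I expect the path-extension step to be the crux. The reason a union bound is affordable at all is the disjointness observation: the batches of edges examined at successive steps are pairwise disjoint, so whether the next batch meets $R$ is independent of the path constructed so far --- a naive union bound over all candidate paths would be hopeless, as there can be exponentially many. A secondary, more routine point is reconciling this clean Bernoulli analysis with the fixed-size samples actually produced by reservoir and $\ell_0$-sampling; one can handle this either by conditioning on the realized sample size or by replacing $(1-p)^{d/6}$ with the corresponding without-replacement bound, using negative association, without affecting the conclusion.
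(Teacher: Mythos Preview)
Your proposal is correct and follows essentially the same route as the paper: pass to a minimum-degree-$d/2$ subgraph, greedily extend a path inside it, observe that the current endpoint has at least $d/6$ fresh candidate edges, show that with high probability one of them is sampled, and union-bound over the at most $d/3$ steps; the streaming implementation via reservoir sampling and $\ell_0$-samplers is likewise identical. The only noteworthy difference is in bookkeeping: the paper analyzes the fixed-size sample directly with a hypergeometric bound (conditioning on the $i-1$ path edges being in $F$ and using $\binom{a-c}{b}/\binom{a}{b} \le e^{-bc/a}$), whereas you first pass to the Bernoulli model and invoke your disjointness observation to get genuine independence across steps, deferring the without-replacement reconciliation to a negative-association remark---your version is arguably cleaner on the conditioning, while the paper's is more self-contained.
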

\newcounter{counterALG}  
\setcounter{counterALG}{\value{theorem}}

 We remark that our algorithm can be used to obtain an $\alpha$-approximation algorithm to \textsf{LP} using $\tilde{O}(n^2 / \alpha)$ space, for any $\alpha \ge 1$. This is achieved by running our algorithm in parallel with the trivial algorithm that stores $\tilde{\Theta}(n^2 / \alpha)$ edges. Then, if the space constraint of $\tilde{\Theta}(n^2 / \alpha)$ is large enough to store the entire graph then we can find an exact solution. If not, then we are guaranteed that the average degree of the input graph is $\Omega(n / \alpha)$, which implies that the algorithm of Theorem~\ref{thm:alg-1} yields the desired result. We thus obtain the following corollary:

\newcounter{counterCOR}  
\setcounter{counterCOR}{\value{theorem}}

 \begin{corollary}\label{cor:upper-bound}
 In both the insertion-only and the insertion-deletion models, for undirected graphs, there are $\tilde{O}(n^2 / \alpha)$-space streaming algorithms  that compute an $\alpha$-approximation to \textsf{Longest Path}.
\end{corollary}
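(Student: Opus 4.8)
The plan is to flesh out the sketch given in the paragraph preceding the corollary: run the algorithm of Theorem~\ref{thm:alg-1} in parallel with a ``store-everything-up-to-a-budget'' procedure, and argue that whichever of the two outcomes occurs, we end up with an $\alpha$-approximation while respecting the space bound.

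Concretely, I would run two procedures simultaneously on the stream. The first is the semi-streaming algorithm $\mathcal{A}$ of Theorem~\ref{thm:alg-1}: it uses $O(n \poly \log n)$ space and, with high probability, returns a path of length at least $d/3$, where $d = 2m/n$ is the average degree and $m = |E(G)|$. The second is a trivial procedure $\mathcal{B}$ that maintains the current edge set of the stream (in the insertion-deletion model, the current symmetric difference of inserted and deleted edges), but aborts and stops storing edges as soon as the stored set would exceed $\tau := \tfrac{3}{2}\, n^2/\alpha$ edges; this is routine bookkeeping in both models. Procedure $\mathcal{B}$ uses $O(\tau \log n) = \tilde{O}(n^2/\alpha)$ space, so the combined space usage is $\tilde{O}(n^2/\alpha)$, as required.

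At the end of the stream we do a case analysis. If $\mathcal{B}$ did not abort, it has stored all of $G$ (so $m \le \tau$), and since the streaming model restricts only memory, we may compute a longest path of $G$ exactly from the stored graph and output it — a $1$-approximation, hence an $\alpha$-approximation. If instead $\mathcal{B}$ aborted, then $m > \tau = \tfrac{3}{2} n^2/\alpha$, so $d = 2m/n > 3n/\alpha$; by Theorem~\ref{thm:alg-1}, with high probability $\mathcal{A}$ returns a path of length at least $d/3 > n/\alpha \ge \lp(G)/\alpha$, where the last step uses the trivial bound $\lp(G) \le n$. In this case we output $\mathcal{A}$'s path, which is an $\alpha$-approximation with high probability. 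Combining the two cases yields the corollary.

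I do not expect any genuine obstacle here: the only points needing (minor) care are choosing the threshold $\tau$ so that the ``aborted'' regime forces the average degree to be large enough for Theorem~\ref{thm:alg-1} to beat $\lp(G)/\alpha$, and implementing the bounded-size edge buffer of $\mathcal{B}$ correctly in the insertion-deletion setting — but the latter is immediate since we only ever need to maintain an edge set of size at most $\tau$ and detect when that bound is exceeded. One may additionally remark that the statement is vacuous for $\alpha \ge n$, so it suffices to treat $\alpha < n$, in which case $\tau > \tfrac{3}{2} n$ and $\mathcal{B}$ is never ``wasteful''.
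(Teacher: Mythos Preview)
Your overall strategy---run the sampling algorithm of Theorem~\ref{thm:alg-1} in parallel with a bounded-budget ``store everything'' procedure and case-split on whether the budget was exceeded---is exactly the paper's approach, and your analysis for the insertion-only model is correct and matches the paper.

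There is, however, a genuine gap in the insertion-deletion case. Your procedure $\mathcal{B}$ aborts as soon as the \emph{current} edge set exceeds $\tau$, and you then infer ``if $\mathcal{B}$ aborted, then $m > \tau$''. In an insertion-deletion stream this inference fails: the stream may first insert far more than $\tau$ edges (triggering the abort) and later delete almost all of them, leaving a final graph with $m \ll \tau$ and hence average degree $d \ll 3n/\alpha$. In that scenario $\mathcal{B}$ has aborted, yet the path returned by $\mathcal{A}$ need not be an $\alpha$-approximation, and you no longer have the graph stored to fall back on. So the case analysis breaks. Your remark that this implementation detail is ``immediate'' is precisely where the argument goes wrong.

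The fix is standard but worth stating: replace the explicit edge buffer by an $s$-sparse recovery sketch (a linear sketch of size $\tilde{O}(s)$ that exactly recovers the edge set whenever the final graph has at most $s$ edges), together with a simple counter for the current number of edges. At the end of the stream, if the counter reads $m \le \tau$, the sketch recovers $G$ exactly and you solve \textsf{Longest Path} offline; otherwise the final graph genuinely has $m > \tau$, and your average-degree argument via Theorem~\ref{thm:alg-1} goes through unchanged. This keeps the space at $\tilde{O}(n^2/\alpha)$ and makes the case split depend on the \emph{final} edge count, which is what both your argument and the paper's need.
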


 Next, we ask whether a similar algorithmic result is possible in directed graphs. As our main and most technical result, we show that this is not the case in a strong sense:
 
 \begin{theorem}\label{thm:lb-directed}
  Every one-pass streaming algorithm for \textsf{Longest Path} in directed graphs with approximation factor $n^{1-o(1)}$ requires space $\Omega(n^2)$.
 \end{theorem}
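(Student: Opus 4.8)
The plan is to prove the lower bound through the standard reduction from one-way two-party communication to one-pass streaming space: Alice holds a prefix of the edge stream, Bob holds the suffix, and a streaming algorithm using space $s$ yields an $s$-bit one-way (Alice~$\to$~Bob) protocol, since Alice can run the algorithm on her edges and send the memory state to Bob, who then finishes the stream and reads off the answer. Thus it suffices to exhibit a distribution over instances on which any protocol that distinguishes ``$\lp(G)\ge \ell$'' from ``$\lp(G)< \ell/n^{1-o(1)}$'' (which an $n^{1-o(1)}$-approximation must do) needs $\Omega(n^2)$ communication. The natural hard primitive is $\textsf{Index}$ on $N=\Theta(n^2)$ bits, whose one-way complexity is $\Omega(N)$; Alice's bits will be encoded as (the edge set of) a dense directed bipartite/layered structure on $\Theta(n)$ vertices, and Bob's short query will select a ``target'' substructure.

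A subtlety to get right from the outset is that the textbook single-edge $\textsf{Index}$ reduction cannot work here: if a path of length $\Theta(n)$ uses a single critical edge $u\to v$, then both halves of that path survive in $G$ minus the edge, so removing one edge only changes $\lp$ by a factor $2$, whereas we need a multiplicative gap of $n^{1-o(1)}$. Hence the construction must be ``global''. The route I would take is to build a graph that is \emph{morally a layered DAG with only $L=n^{o(1)}$ layers} — so every path respecting the layering has length at most $L$ — together with a sparse collection of ``backward'' re-entry edges that let a path \emph{sweep} through the $L$ layers $T=n^{1-o(1)}$ times on pairwise disjoint vertices, for total length $\Theta(TL)=\Theta(n)$. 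Alice's $\Theta(n^2)$ bits control which sweeps are simultaneously realizable through her dense inter-layer structure; Bob's input supplies the backward edges stitching consecutive sweeps together. In the \textsc{yes} case all $T$ sweeps close up and $\lp(G)=\Theta(n)$; in the \textsc{no} case the sweeping collapses after $n^{o(1)}$ steps, so $\lp(G)=n^{o(1)}$ and the gap is $n^{1-o(1)}$, giving the $\Omega(n^2)$ space bound.

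The technical heart of the proof — and the step I expect to be the main obstacle — is designing the gadgetry so that three requirements hold simultaneously: (a) deciding which family of sweeps is realizable is genuinely as hard as $\textsf{Index}$ on $\Theta(n^2)$ bits and admits no compression, i.e. the $\Theta(n^2)$ bits are each ``pivotal'' for some choice of Bob's query; (b) the \textsc{yes}/\textsc{no} gap is truly $n^{1-o(1)}$, which forces the \textsc{no}-case graph to have \emph{uniformly} short longest path rather than merely failing one long path, ruling out any ``prefix-style'' partial sweep that would survive; and (c) the abundant decoy edges that make Alice's part dense enough to carry $\Omega(n^2)$ bits cannot be exploited to ``route around'' missing structure and thereby create an unintended long path in the \textsc{no} case. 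Establishing (c) will require a careful combinatorial analysis of the possible long paths in the composed graph (e.g.\ showing that any path that deviates from a planned sweep lands in a low-height DAG portion and dies quickly). Once the gadget is in place, I would complete the argument with a routine Yao's-principle / fooling-distribution step to pass from distributional hardness to the worst-case space lower bound, and a final check that the hard instances use $\Theta(n)$ vertices and $O(n^2)$ edges so that the bound is stated in the correct parameters.
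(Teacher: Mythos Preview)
Your framing via one-way communication and your diagnosis that a single ``critical edge'' cannot create an $n^{1-o(1)}$ multiplicative gap are both correct and are exactly the right starting point. The gap in your plan is that you then still aim for a \emph{direct} reduction from $\textsf{Index}$ on $\Theta(n^2)$ bits. In $\textsf{Index}$, Bob's query is a single coordinate $J$ and the predicate to be decided is a function of the \emph{one} bit $X[J]$; but your own one-edge observation shows that a single bit of Alice's input cannot flip $\lp(G)$ by more than a factor~$2$. Your ``sweeps'' picture quietly shifts the goalposts: for the \textsc{no} case to collapse after $n^{o(1)}$ steps you need \emph{many} of Alice's bits (one per sweep) to be wrong simultaneously, so the predicate Bob is computing is an AND over $n^{1-o(1)}$ coordinates chosen by Bob, not $\textsf{Index}$. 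You never say which communication problem this actually is or why its one-way complexity is $\Omega(n^2)$, and the obstacles you list under (a)--(c) are precisely the places where this would have to be pinned down.

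The paper's route is genuinely different and sidesteps this tension. It does \emph{not} reduce from $\textsf{Index}$ at all. Alice holds a dense $(r,t)$-Ruzsa--Szemer\'edi graph with $r=n^{1-o(1)}$ and $rt=\Theta(n^2)$, with every edge directed $A\to B$; Bob picks a uniformly random induced matching $M_J$ and inserts a random perfect matching $N$ on $V(M_J)$ directed $B\to A$. Because $M_J$ is \emph{induced} and $N$ is the only source of back-edges, every directed path of length $\ell$ (up to its first and last edge) alternates $M_J$--$N$ and therefore uses $\Omega(\ell)$ edges of $M_J$; and $M_J\cup N$ contains a path of length $\Theta(r)$ via the longest-cycle-in-a-random-permutation fact. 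The lower bound is then proved in the \emph{information complexity} framework: a direct combinatorial argument shows that any deterministic protocol sending $\le r/100$ bits lets Bob output only an $O(\log r)$-length path on the single-matching instance, message compression upgrades this to an $\Omega(r)$ bound on \emph{information cost}, and a direct-sum over the $t$ matchings (Alice does not know $J$) yields information cost $\Omega(rt)=\Omega(n^2)$, hence $\Omega(n^2)$ communication and space. The two ingredients you are missing are exactly the RS-graph (which gives the ``globally hidden'' induced matching so that your worry (c) evaporates: decoy edges can only be the first or last edge of any path) and the information-complexity direct-sum paradigm (which replaces the single-bit $\textsf{Index}$ predicate by ``Bob must recover $\Omega(r)$ bits of whichever matching he chose'').
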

 \newcounter{counterLB-directed}  
\setcounter{counterLB-directed}{\value{theorem}}

Theorem~\ref{thm:lb-directed} together with Corollary~\ref{cor:upper-bound} establish a separation in the space complexity between algorithms for undirected and directed graphs for \textsf{LP} in the insertion-only model. This lower bound is also in line with recent results on streaming algorithms for directed graphs that demonstrate that problems on general directed graphs are often hard to solve in the streaming model \cite{cgmv20}. 

Finally, we complement our results with two additional lower bound results. First, we give a lower bound for insertion-only streams and undirected graphs, ruling out the existence of semi-streaming algorithms with constant approximation factor close to $1$.

\begin{theorem}
 Every one-pass insertion-only streaming algorithm for \textsf{Longest Path} on undirected graphs with approximation factor $1+\frac{1}{25} - \gamma$, for any $\gamma > 0$, requires space $n^{1 + \Omega(\frac{1}{\log \log n})}$. 
\end{theorem}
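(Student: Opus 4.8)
The plan is to prove this lower bound by a reduction from one-way two-party communication complexity, with the superlinear term $n^{\Omega(1/\log\log n)}$ extracted from a \emph{Ruzsa--Szemer\'edi} graph. I would use the (standard) fact that for all large $n$ there is a Ruzsa--Szemer\'edi graph $R$ on $\Theta(n)$ vertices — an edge-disjoint union of $r=n^{o(1)}$ \emph{induced} matchings $M_1,\dots,M_r$ — with $m:=n^{1+\Omega(1/\log\log n)}$ edges in total, where the $\log\log n$ is inherited from the best known bounds on packing large induced matchings. This edge count is exactly what will reappear in the space bound, and the \emph{induced} property is what later stops a low-memory algorithm from splicing edges of different matchings into one long path.

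On top of $R$ I would design a hard distribution of \textsf{Longest Path} instances, presented as a one-way game. Alice's input is an \textsf{Index}-like string over the edges of $R$ (for each matching $M_j$, a block recording which of its edges are ``active''); she streams the active edges together with a fixed collection of vertex-disjoint short paths (the ``skeleton''). Bob's input is an index $j^\star\in[r]$ plus a small amount of local data; he streams a ``backbone'' gadget that wires the skeleton paths together \emph{through} the edges of $M_{j^\star}$. The gadget must be built so that in the \textsf{yes} case the skeleton, the active edges of $M_{j^\star}$, and the backbone splice into one path of length $L$ that visits all but an $o(1)$-fraction of the vertices and physically traverses the active edges of $M_{j^\star}$, while in the \textsf{no} case the needed edges of $M_{j^\star}$ are absent and, since $M_1,\dots,M_r$ are induced, no other present edge can substitute, so every path must abandon a constant fraction of the vertices and $\lp(G)\le L/(1+\tfrac1{25})$. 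Calibrating the sizes of the junction pieces so that a single ``miss'' costs a $\tfrac1{26}$-fraction of the affected segment yields the ratio $1+\tfrac1{25}$; this rules out, for every fixed $\gamma>0$ and all sufficiently large $n$, any $(1+\tfrac1{25}-\gamma)$-approximation, with the $o(1)$-fraction of unvisited vertices absorbed by $\gamma$.

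The reduction is closed by the standard streaming-to-communication simulation. If $\mathcal A$ is a one-pass insertion-only streaming algorithm for \textsf{Longest Path} with space $s$ and approximation factor $1+\tfrac1{25}-\gamma$, then Alice runs $\mathcal A$ on her part of the stream, sends the $s$-bit memory state to Bob, who resumes $\mathcal A$ on his part; in the \textsf{yes} case the returned near-optimal path is forced through the active edges of $M_{j^\star}$, so Bob reads Alice's block for $M_{j^\star}$ off the output, whereas in the \textsf{no} case no such long path can exist. Hence $s$ is at least the one-way (distributional, bounded-error) communication complexity of the game, i.e.\ the cost of recovering an adversarially chosen one of the $r$ equal blocks of an $m$-bit string; since $r=n^{o(1)}\ll m$, an entropy argument gives this complexity as $\Omega(m)=n^{1+\Omega(1/\log\log n)}$. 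As the statement concerns arbitrary randomised algorithms, I would prove this communication bound against a fixed hard input distribution — a uniform \textsf{Index}/block-recovery distribution over the matchings.

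I expect the crux to be the construction and correctness proof of the junction gadget, in particular the \textsf{no}-case upper bound on $\lp(G)$. The gadget must simultaneously (i) convert a single ``wrong'' index into only a \emph{bounded} constant-factor loss in path length — a large drop would only rule out large approximation factors, not one as close to $1$ as $1+\tfrac1{25}$; (ii) be robust against every rerouting attempt, which is exactly where the induced-matching property of $R$ and a careful case analysis of how a hypothetical long path can enter and leave the $R$-region must be used; and (iii) be interleaved with the skeleton so that every near-optimal path reveals Alice's block for $M_{j^\star}$, which is what forces $\lp(G)$ and the algorithm's output to depend on all $\Theta(m)$ of Alice's input bits and so loses nothing against the RS edge count. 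The remaining ingredients — the simulation, the handling of the $o(1)$ slack, and the block-recovery communication bound — are routine.
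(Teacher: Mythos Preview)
Your high-level architecture matches the paper --- an RS graph with $t=n^{\Omega(1/\log\log n)}$ induced matchings of size $\Theta(n)$, a reduction from one-way \textsf{Index}, and a gadget forcing any long path to use many edges of the designated matching $M_{j^\star}$. But the encoding and the \textsf{yes}/\textsf{no} framing contain a real gap. If Alice's bit for an edge $e\in M_{j^\star}$ controls whether $e$ is \emph{present}, then Bob cannot ``wire the skeleton through the edges of $M_{j^\star}$'', because he does not know which of those edges exist --- that is precisely what he is trying to learn. Moreover, your \textsf{yes} case (the needed edges are present) versus \textsf{no} case (they are absent) is not a function of a single \textsf{Index} bit or of the block index $j^\star$ alone but of Alice's entire block, so the dichotomy does not line up with the communication problem you want to lower-bound. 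On a uniformly random input roughly half of the $M_{j^\star}$-edges are absent in every instance, so $\lp(G)$ itself fluctuates with Alice's input in a way your sketch does not control, and there is no clean gap in $\lp(G)$ between two well-defined instance classes.

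The paper repairs this with a different encoding: every RS edge $\{a,b\}$ is \emph{always} inserted, but Alice's bit decides whether it lands between $A$ and a copy $B_1$ of $B$ or between $A$ and a second copy $B_2$. Bob adds a uniformly random matching $N$ on $V(M_{j^\star})$, duplicated to both copies, so that $M_{j^\star}\cup N$ contains a path of length $\Omega(|M_{j^\star}|)$ with constant probability \emph{regardless} of Alice's bits; the bits are read off only from which copy each output edge sits in, and a random permutation plus XOR with public random bits ensures that every edge of $M_{j^\star}$ is equally likely to encode the queried \textsf{Index} bit. The gadget on $V\setminus V(M_{j^\star})$ is a path with every edge subdivided $\ell=4$ times; the key combinatorial lemma is that any output path $\mathcal Q$ satisfies $|\mathcal Q|\le 3(n-r)\ell+2|\mathcal Q\cap M_{j^\star}|+2\ell$, because each crossing from the subdivided path into $V(M_{j^\star})$ sacrifices a hoop of $\ell-1$ subdivision vertices. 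Comparing this to $\lp(G)\ge (3(n-r)-1)\ell+\tfrac12 r$ with $r=(\tfrac12-\epsilon)n$ and $\ell=4$ is what produces the constant. So the piece you flagged as the crux is indeed where the work lies, but the missing idea one step earlier is the two-copies encoding, without which Bob cannot build any gadget of the kind you describe.
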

\newcounter{counterLBundirected}  
\setcounter{counterLBundirected}{\value{theorem}}

We note that this lower bound result is significantly weaker than our lower bound for directed graphs, both in terms of approximation factor and space. This, however, is in line with the status of the problem in the offline setting, where a very strong impossibility result for directed graphs is known, but only significantly weaker impossibility results for undirected graphs exist.  

Last, we show that, in insertion-deletion streams, space $\Omega(n^2 / \alpha^3)$ is required for computing an $\alpha$-approximation.

\begin{theorem}
 Every one-pass insertion-deletion streaming algorithm for \textsf{Longest Path} on undirected graphs with approximation factor $\alpha \ge 1$ requires space $\Omega(n^2 / \alpha^3)$. 
\end{theorem}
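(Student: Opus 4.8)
The plan is to prove the lower bound by a one-way communication reduction from the $\textsc{Index}$ problem on a universe of size $N=\Theta(n^2/\alpha^3)$: Alice holds $x\in\{0,1\}^N$, Bob holds $i\in[N]$, Alice sends one message to Bob, and Bob must output $x_i$; any bounded-error protocol for this requires $\Omega(N)$ bits of communication. Given a one-pass insertion-deletion $\alpha$-approximation algorithm for \textsf{Longest Path} using space $S$, I would construct an $n$-vertex graph stream whose first part depends only on $x$ (it consists of edge insertions) and whose second part depends only on $i$ (it consists of edge deletions), arranged so that the length of a longest path in the final graph reveals $x_i$. Alice then runs the algorithm on the insertions, sends its $S$-bit memory state, Bob resumes it on the deletions and reads off the output; this is a one-way protocol of cost $S+O(\log n)$, so $S=\Omega(N)=\Omega(n^2/\alpha^3)$.

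For the hard graph, fix a \emph{skeleton} $R$ consisting of $m=\Theta(n/\alpha)$ vertex-disjoint paths $Q_1,\dots,Q_m$, each of length $\ell=\Theta(\alpha)$ (so $R$ uses $\Theta(n)$ vertices and $\lp(R)=\ell$), where $Q_j$ has endpoints $a_j$ and $z_j$. Call a pair $\{z_j,a_k\}$ with $j\ne k$ a \emph{jumper}; there are $m(m-1)=\Theta(n^2/\alpha^2)$ jumpers. A \emph{booster} is a set of the chained form $H=\{\{z_{j_0},a_{j_1}\},\{z_{j_1},a_{j_2}\},\dots,\{z_{j_{r-1}},a_{j_r}\}\}$ for distinct $j_0,\dots,j_r$ with $r=\Theta(\alpha)$: adding $H$ to $R$ creates the path $a_{j_0}\rightsquigarrow z_{j_0}-a_{j_1}\rightsquigarrow\cdots\rightsquigarrow z_{j_r}$ of length $(r+1)\ell+r=\Theta(\alpha^2)$, and with the constants in $\ell$ and $r$ chosen appropriately this length exceeds $\alpha\cdot\lp(R)$. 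The combinatorial core is a family of $N=\Theta(n^2/\alpha^3)$ \emph{pairwise edge-disjoint} boosters $H_1,\dots,H_N$; this count matches the target because each booster uses $\Theta(\alpha)$ of the $\Theta(n^2/\alpha^2)$ jumpers, and such a family is obtained, e.g., by decomposing the complete digraph on $[m]$ (the arc $j\!\to\!k$ representing the jumper $\{z_j,a_k\}$) into directed Hamiltonian paths and cutting each of them into subpaths of length $r$.

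The reduction: on input $(x,i)$, Alice inserts $R$ together with $H_{i'}$ for every $i'$ with $x_{i'}=1$; Bob then deletes every jumper not lying in $H_i$. By pairwise edge-disjointness of the boosters, all boosters other than $H_i$ are then gone, so the final graph is $R\cup H_i$ when $x_i=1$ — with $\lp=\Theta(\alpha^2)$ — and $R$ alone when $x_i=0$ — with $\lp=\ell=\Theta(\alpha)$. Because these two values differ by more than a factor $\alpha$, an $\alpha$-approximate path is strictly longer than $\ell$ in the first case and of length at most $\ell$ in the second, so Bob recovers $x_i$ from the returned path's length. (The routine point that Bob's deletions should target the intended edges, including the standard convention for deleting edges that may already be absent in an insertion-deletion stream, is handled as in previous dynamic-stream lower bounds.)

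The part I expect to require the most care is making all the quantitative pieces cohere: keeping the vertex count at $n$ while simultaneously securing $\lp(R)=\Theta(\alpha)$, booster length $\Theta(\alpha^2)$, and $\Theta(n^2/\alpha^3)$ pairwise edge-disjoint boosters, and in particular making the length gap strictly exceed $\alpha$ (so that the $\alpha$-approximation genuinely separates the two cases) uniformly over the range of $\alpha$ for which the bound is non-trivial, i.e.\ $\alpha=O(n^{2/3})$. Rounding issues for $\ell$ and $r$ when $\alpha$ is a small non-integer, the treatment of leftover jumpers not assigned to any booster, and the precise bookkeeping of Bob's deletions are secondary technical points to nail down.
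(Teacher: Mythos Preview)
Your reduction has a genuine gap at exactly the step you flag as ``routine''. In the insertion-deletion model used here, only edges currently present may be deleted (edge multiplicities must remain non-negative throughout the stream), so Bob cannot ``delete every jumper not lying in $H_i$'': for each $i'\neq i$ with $x_{i'}=0$ the jumpers of $H_{i'}$ were never inserted, and Bob has no way to tell which these are since he does not know $x$. There is no ``standard convention'' that rescues this in the strict model; this is precisely why the paper reduces from \textsf{Augmented-Index} rather than \textsf{Index}, giving Bob the suffix $X[J{+}1,N]$ so that every deletion he issues targets an edge he \emph{knows} Alice inserted. Note that patching your construction with \textsf{Augmented-Index} does not work either: Bob can then legally remove the boosters $H_{i'}$ with $i'>J$, but the boosters with $i'<J$ and $x_{i'}=1$ survive, so the final graph contains a long path regardless of the value of $x_J$ and the two cases are no longer distinguishable.

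The paper's construction is engineered around this obstacle. It takes $N=\Theta(n^2/\alpha^2)$ (not $n^2/\alpha^3$), encodes $X$ as a hidden matching $M$ inside a dense bipartite graph, and arranges Bob's deletions so that the undeletable residue from indices preceding $J$ is incident to only $O(\sqrt{N})=O(n/\alpha)$ vertices; a structural lemma then shows that any long path in the final graph must contain $\Omega(n/\alpha)$ edges of $M$, so one run recovers $X[J]$ with probability $\Omega(1/\alpha)$. Running $\Theta(\alpha)$ independent copies amplifies this to constant success probability and yields $\Omega(N/\alpha)=\Omega(n^2/\alpha^3)$. A secondary issue in your proposal is the parameter range: a single booster needs $r{+}1=\Theta(\alpha)$ of the $m=\Theta(n/\alpha)$ skeleton paths, which forces $\alpha=O(\sqrt{n})$ and leaves the regime $\sqrt{n}\ll\alpha\ll n^{2/3}$ (where the bound is still non-trivial) uncovered.
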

\newcounter{counterLB-deletion}  
\setcounter{counterLB-deletion}{\value{theorem}}

 This lower bound together with our algorithm show that the optimal dependency of the space complexity on $\alpha$ in insertion-deletion streams is between $1/ \alpha$ and $1 / \alpha^3$.

 \subsection{Our Techniques}

 \subparagraph{Algorithm.} We will first explain the key ideas behind our algorithm. 
 As observed by Karger et al. \cite{kmr97}, a path of length $d_{\text{min}}$, where $d_{\text{min}}$ is the minimum degree of the input graph, can be computed as follows: Start at any vertex $v_0$ and visit an arbitrary neighbor that we denote by $v_1$. In a general step $i$, we have already constructed the path $v_0, \dots, v_i$. We then visit any neighbor $v_{i+1}$ of $v_i$ that has not previously been visited. Then, as long as $i < d_{\text{min}}$, we can always find a yet unvisited neighbor, and, hence, we obtain a path of length  $d_{\text{min}}$.

 We first see that this argument can also be applied to the average degree $d$ of the input graph $G$. It is well known that, by repeatedly removing vertices of degree at most $d/2$ from $G$ until no such vertex remains, we are left with a non-empty graph $G'$ with min-degree at least $d/2$. We can now apply the same argument as above to $G'$ and thus find a path of length at least $d/2$. 

 The approach outlined above, however, does unfortunately not yield a small space streaming algorithm since a) we do not know which vertices are contained in $G'$, and b) we cannot afford to store $\Theta(d)$ incident edges on each vertex.

 To overcome these obstacles, we resort to randomization. Our algorithm solely samples $O(n \log n)$ random edges $F$ from the input graph and outputs a longest path among the edges $F$. To see that a long path in $F$ exists, we argue that the subset of edges $F' \subseteq F$ that are also contained in $G'$ contains a path of length at least $d/3$ with high probability. Such a path can be constructed greedily. Suppose we have already constructed a partial path of length $\ell < d/3$ solely using the edges $F'$, and let $v_{\ell+1} \in V(G')$ denote its current endpoint. Then, since $v_{\ell+1}$ has a degree of at least $d/2$ in $G'$, there are at least $d/2 - d/3 = d/6$ neighbors of  $v_{\ell+1}$ in $G'$ that have not yet been visited in the path. Since we sample $\Theta(n \log n)$ edges overall, the probability that any one of these edges incident to $v_{\ell+1}$ that connect to these $d/6$ vertices is sampled is $\Omega(\frac{n \cdot \log n}{n \cdot d}) =  \Omega(\frac{\log n}{d})$, which implies that at least one of these edges is sampled with high probability and we can extend the path. 

 \vspace{-0.3cm}
\subparagraph{Space Lower Bounds.} All our space lower bounds are proved in the one-way two-party communication setting. In this setting, two parties that we denote by Alice and Bob each hold a subset of the edges of the input graph $G = (V, E = E_A \cup E_B)$, with $E_A$ being Alice's edges, and $E_B$ being Bob's edges. Alice sends a single messages $\Pi$ to Bob, and Bob computes the output of the protocol. Then, it is well-known that a lower bound on the size of the message $\Pi$ also constitutes a lower bound on the space of any one-pass streaming algorithm.

We work with {\em induced matchings} in all our lower bound constructions. In a graph $G=(V, E)$, a matching $M \subseteq E$ is {\em induced} if the edges of the vertex-induced subgraph $G[V(M)]$ are precisely the edges $M$. Suppose now that the input graph is bipartite, and we denote it by $G=(A, B, E=E_A \cup E_B)$. Furthermore, we suppose that Alice's subgraph, i.e., the graph spanned by the edges $E_A$, contains a matching $M$ that is induced in the final graph. We say that $M$ is the {\em special} matching. 

Our goal is to complete our lower bound constructions so that:
\begin{enumerate}
    \item Every long path in the graph must contain many edges of the special matching $M$;
    \item The special matching $M$ is {\em hidden} among the edge set $E_A$ so that Alice cannot identify $M$, and, given a limited communication budget, Alice therefore cannot forward many of $M$'s edges to Bob.
\end{enumerate}
The two properties then imply a lower bound since, if Bob  knows only few edges of $M$, but every long path contains many such edges, then Bob cannot output a long path.

To achieve property $2$, in our insertion-deletion lower bound, we make use of edge deletions as part of Bob's input to turn a large matching in $E_A$ into an induced matching, and in our insertion-only lower bounds, we work with {\em Rusza-Szemer\'{e}di graphs} (RS-graphs in short), which have been extensively used for proving lower bounds for matching problems in the streaming setting (e.g., \cite{gkk12,akl16,kn21,kn24}). An $(r,t)$-RS-graph is a balanced bipartite graph on $2n$ vertices such that its edge set can be partitioned into $t$ induced matchings $M_1, \dots, M_t$, each of size $r$. Our insertion-only constructions are such that each of these $t$ matchings can take the role of the special induced matching $M$, which will allow us to argue that Alice essentially has to send many edges of each induced matching to Bob if Bob is able to report a long path.

Regarding establishing property $1$, we pursue different strategies that depend on the specific streaming setting, and on whether we work with directed or undirected graphs. These strategies are described below and illustrated in Figure~\ref{fig:techniques}.

\begin{figure}[h]
\fbox{\begin{minipage}{\textwidth}
\vspace{0.1cm}
\textbf{Directed Graphs} \hfill  \textbf{Undirected Graphs} \hfill \textbf{Insertion-deletion}

\vspace{0.3cm}
\includegraphics[width=\textwidth]{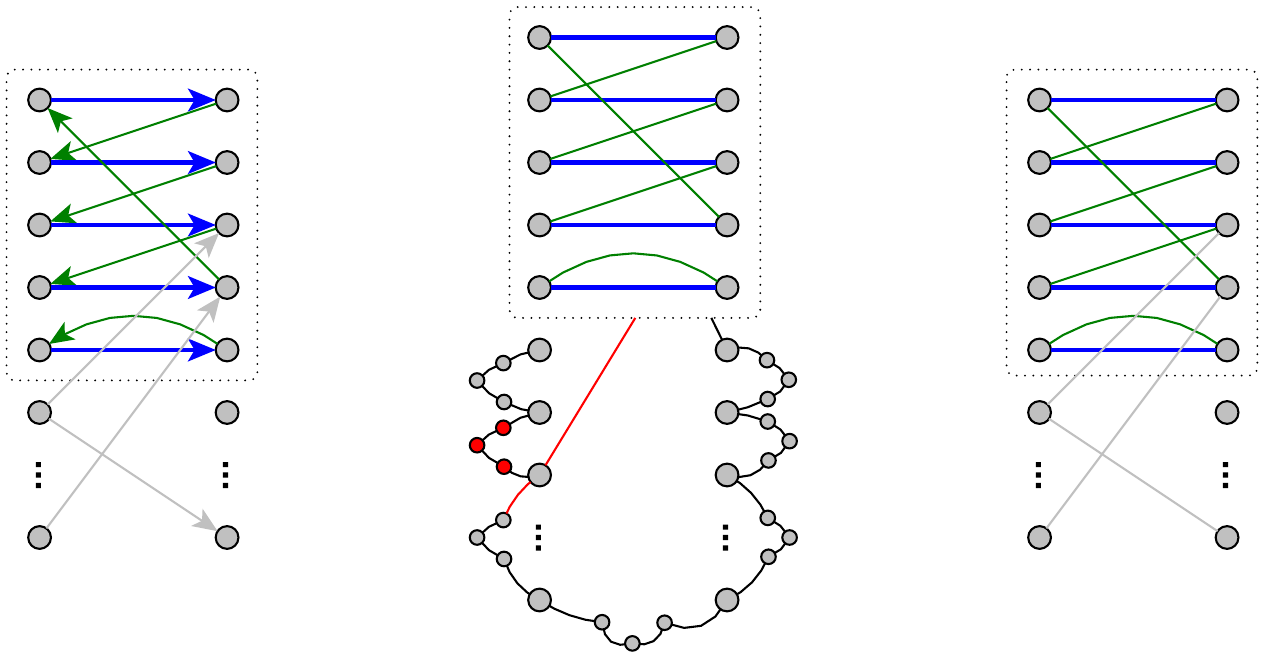}

\vspace{-6cm} \hspace{3.2cm} $\color{blue} M_J \color{black} \cup \color{darkgreen}N$ \hspace{6.2cm} 
$\color{blue} M \color{black} \cup \color{darkgreen}N$ 

\vspace{-1.3cm} \hspace{9.4cm} $\color{blue} M_J \color{black} \cup \color{darkgreen}N$

\vspace{3.7cm}

\hspace{6.9cm} $v$

\vspace{-1cm}
\hspace{11.2cm} $O(\frac{n}{\alpha})$ \scalebox{1.5}[1.5]{$\Biggl\{$}

\vspace{0.2cm}

\hspace{0.17cm} $A$ \hspace{1.8cm} $B$

\vspace{-0.1cm} \hspace{9.4cm} $\mathcal{P}$

\vspace{0.6cm}
\vspace{0.1cm} 

    \textbf{Directed Graphs:} Since $N$ constitutes the only edges directed from right to left, and $M_J$ is an induced matching, up to potentially the first and last edge, every long path alternates between edges of $M_J$ and $N$ and thus contains many edges of $M_J$. \vspace{0.1cm}\\
    \textbf{Undirected Graphs:} (Insertion-Only) The path $\mathcal{P}$ is connected to a longest path in $V(M_J)$. Our argument is based on the fact that it is detrimental to the length of the output path to include many edges between $V \setminus V(M_J)$ and $V(M_J)$. Indeed, suppose that the output path contains the two red edges incident on vertex $v$. Then, the vertices highlighted in red  cannot be visited by the output path. This allows us to argue that, for each edge going across the cut $(V \setminus V(M_J), V(M_J))$, multiple vertices on the path $\mathcal{P}$ are not visited. This in turn implies that every long path must contain many edges of $M_J$ in order to visit the vertices in $V(M_J)$. \vspace{0.1cm}\\
    \textbf{Insertion-Deletion Streams:} (Undirected Graphs) Our construction is such that there are only $O(\frac{n}{\alpha})$ vertices outside $V(M)$. Hence, if Bob does not know many edges of $M$ then there are at most $O(\frac{n}{\alpha})$ vertices of $V \setminus V(M)$ available to connect to vertices within $V(M)$. Since this number is small, Bob is required to learn many edges of $M$ in order to output a long path.  \vspace{0.1cm}
    \end{minipage}}
    \caption{Illustrations of our three lower bound constructions.  \label{fig:techniques}}

\end{figure}

\textit{Directed Graphs in the Insertion-only Model.}
The cleanest case is our lower bound for directed graphs. In this setting, Alice holds the edges of an RS-graph $G'=(A, B, E)$, and we assume that all edges are  directed from $A$ to $B$. We then pick a matching $M_J$, for a uniform random $J \in [t]$, and Bob inserts a random matching $N$ that matches $B(M_J)$ to $A(M_J)$ so that all edges in $N$ are directed from $B$ to $A$. We then leverage the well-known result that the longest cycle in a random permutation of the set $[r]$ is of length $\Theta(r)$ \cite{gwg59,g64} in order to argue that $M_J \cup N$ contains a path of expected length $\Theta(|M_J|)$. It then remains to argue that every path of length $\ell$, for any $\ell$, in the input graph $G = G' \cup N$ must contain $\Omega(\ell)$ edges of $M_J$, which uses the fact that the matching $M_J$ is induced. Since, however, Alice did not know that $M_J$ is the special matching, Alice is required to send a large number of edges of each induced matching to Bob so that Bob can output a long path. 

For technical reasons, our graph construction is slightly more involved than described above since we need to turn Alice's RS-graph into an input distribution, see Section~\ref{sec:lb-directed} for details. While it is easy to argue that if Alice sends $k$ edges of $M_J$ to Bob then Bob can compute a path of length at most $O(k)$ in $M_J \cup N$, we need to argue a much stronger bound. We show that, even if Alice sends as many as $r/100$ $M_J$ edges to Bob then Bob can still only find a path of size $O(\log r)$ in $M_J \cup N$. 
We implement our approach using the information complexity paradigm, including  direct sum and message compression arguments. 

\textit{Undirected Graphs in the Insertion-only Model.} In our construction for undirected graphs, Alice also holds an RS-graph $G'=(A, B, E)$, and, given a random index $J \in [t]$, Bob also holds a random matching $N$ matching $A(M_J)$ to $B(M_J)$. Similar as above, the matchings $M_J \cup N$ contain a path of length $\Omega(|M_J|)$. However, it is no longer true that any long path in $G' \cup N$ must contain many edges of $M_J$ as, for example, the vertices in $V \setminus V(M_J)$ can now be used to visit the vertices within $V(M_J)$ as their incident edges are undirected.

Our aim is to ensure that it is detrimental for constructing long paths to include many edges across the cut $(V \setminus V(M_J), V(M_J))$ in the path. Once this property is established, we will argue that, if Bob only knows few of the $M_J$ edges then many vertices of $V(M_J)$ will remain unvisited in the output path, which implies that the path is bounded in length as not all vertices are visited. To argue that only few edges across the cut $(V \setminus V(M_J), V(M_J))$ are included in the output path, we ensure that all the vertices $v \in V \setminus V(M_J)$ serve as {\em gateways} to other parts of the graph that are introduced by Bob. The construction is so that these other parts cannot be visited if $v$ connects to a vertex in $V(M_J)$ in the output path, which will then be detrimental for the construction of  a long path. To achieve this, Bob introduces additional edges as follows. First, let $\mathcal{P}'$ be a path consisting of novel edges that visits every vertex in $V \setminus V(M_J)$, and let $\mathcal{P}$ be the path obtained from $\mathcal{P}'$ by subdividing every edge in $\mathcal{P}'$, $\ell$ times, for some integer $\ell$. Furthermore, this path is connected to the longest path within $M_J \cup N$. Then, the path $\mathcal{P}$ significantly contributes to the longest path in the input graph. Observe, however, if a vertex $v \in V \setminus V(M_J)$ connects to a vertex in $V(M_J)$ then the vertices on the subdivision on one of the edges of $\mathcal{P}$ incident on $v$ cannot be visited anymore. By setting $\ell$ large enough, we show that it is detrimental for the algorithm to use such vertices to connect to $V(M_J)$, which renders the edges of $M_J$ indispensable for a long path. 

As above, the actual lower bound construction is more involved since we need to turn Alice's RS-graph into an input distribution. On a technical level, we give a reduction to the two-party communication problem \textsf{Index} using ideas similar to those introduced by Dark and Konrad \cite{dk20}. For details, see Section~\ref{sec:lb-undirected}.

\textit{Undirected Graphs in the Insertion-deletion Model.} The key advantage that allows us to prove a much stronger lower bound for insertion-deletion streams than for insertion-only streams is that Bob can insert edge deletions that turn a large matching contained in Alice's edges $E_A$ into an  induced matching. We see that it is possible to achieve this so that: 
\begin{enumerate}
    \item The special (induced) matching $M$ is of size $n - O(n/\alpha)$; and
    \item The special matching $M$ is hidden among Alice's edges.
\end{enumerate}
Furthermore, we make sure that there exists a second matching $N$ such that $N \cup M$ forms a path of length $\Omega(n)$.

Property~1 significantly helps in proving our lower bound since, as opposed to our lower bound for undirected graphs in insertion-only streams, there are only $O(n/\alpha)$ vertices outside the set $V(M)$, and, hence, these vertices only allow us to visit $O(n / \alpha)$ vertices of $V(M)$. We can then argue that, for the remaining $n - O(n / \alpha)$ vertices of $V(M)$, Bob is required to know the edges of $M$ for those to be visited. This however requires Alice to send these edges to Bob. Then, Property~2 ensures that Alice cannot identify these edges and would therefore have to send most edges of the input graph to Bob, which exceeds the allowed communication budget.

On a technical level, we give a reduction to the \textsf{Augmented-Index} two-party communication problem, again, reusing many of the ideas given in the lower bound by Dark and Konrad \cite{dk20}. For details, please refer to Section~\ref{sec:lb-insert-delete}.

\subsection{Outline}
We provide notation, state important RS-graph constructions, discuss the  \textsf{Index} and \textsf{Augmented-Index} communication problems, introduce the information complexity framework, state important inequalities involving mutual information, and give a message compression theorem in our preliminaries section, Section~\ref{sec:prelim}. Then, we present our algorithm in Section~\ref{sec:ub} and our lower bound for directed graphs in insertion-only streams is given in Section~\ref{sec:lb-directed}. Our lower bound for undirected graphs in insertion-only streams is presented in Section~\ref{sec:lb-undirected} and our lower bound for undirected graphs in insertion-deletion streams 
is presented in Section~\ref{sec:lb-insert-delete}.
We conclude in Section~\ref{sec:conclusion} with open problems. 
 
\section{Preliminaries}\label{sec:prelim}

Given a graph $G=(V, E)$, for vertices $u,v \in V$, we denote an undirected edge between $u$ and $v$ by $\{u,v\}$, and a directed edge with tail $u$ and head $v$ by $(u,v)$.
\subsection{Rusza-Szemer\'{e}di Graphs}
In our lower bounds, we make use of Rusza-Szemer\'{e}di graphs.

\begin{definition}[Rusza-Szemer\'{e}di Graph]\label{def:rs-graphs}
A bipartite graph $G = (A, B, E)$ is 
an {\em $(r, t)$-Ruzsa-Szemer\'{e}di graph}, $(r, t)$-RS graph for short, if its edge set can be partitioned into $t$ induced matchings, each of size $r$. 
\end{definition}

We will use the RS-graph constructions of Alon et al. \cite{ams12} and Goel et al. \cite{gkk12}, the latter is based on the construction by Fischer et al. \cite{flnrrs02}.

\begin{theorem}[Ruzsa-Szemer\'{e}di Graph Constructions]\label{thm:rs-graphs}
 There are bipartite RS-graphs $G=(A, B, E)$ with $|A| = |B| = n$ 
 with the following parameters:
 \begin{enumerate}
  \item $r = n^{1-o(1)}$, and $t \cdot r = \Omega(n^2)$ \cite{ams12}; and 
  \item $r = (\frac{1}{2}-\epsilon) n$, for any constant $\epsilon > 0$, and $t = n^{\Omega(\frac{1}{\log \log n})}$ \cite{gkk12}. 
 \end{enumerate}
\end{theorem}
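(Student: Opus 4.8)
The statement collects two established constructions, so the plan is to present and verify each of them; I describe the route for both and isolate the single step that carries the weight.

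\textbf{Part (1), the nearly-complete regime \cite{ams12}.} The engine is a dense three-term-progression-free set: Behrend's classical construction gives $S \subseteq \mathbb{Z}_N$ with no nontrivial solution to $a + c = 2b$ and $|S| = N \cdot e^{-O(\sqrt{\log N})} = N^{1-o(1)}$. Such a set is turned into a bipartite graph on $\Theta(N)$ vertices per side in which the edges are partitioned into matchings, indexed so that each matching ``uses'' one element of $S$ through a fixed linear encoding; a putative chord between the two endpoints of a single matching edge unwinds, via that encoding, into a nontrivial three-term progression among elements of $S$, which is impossible, so each matching is induced. A direct implementation of this idea already yields induced matchings of size $N^{1-o(1)}$ partitioning a constant fraction of all possible edges; the contribution of \cite{ams12}, which I would reproduce, is a product/recursion that boosts this to a bipartite graph covering a $(1-o(1))$-fraction of all $n^2$ edges (with $n = \Theta(N)$) while each induced matching still has size $n^{1-o(1)}$. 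Since the $t$ induced matchings partition the edge set, $t \cdot r = (1-o(1)) n^2 = \Omega(n^2)$ and $r = n^{1-o(1)}$, as claimed.

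\textbf{Part (2), the near-perfect regime \cite{gkk12}.} Here the goal is the opposite corner of the trade-off: induced matchings that saturate a $(\tfrac{1}{2} - \epsilon)$-fraction of the vertices, at the price of only $t = n^{\Omega(1/\log\log n)}$ of them. The construction of \cite{gkk12}, refining \cite{flnrrs02}, is recursive: fix a small bipartite gadget that is an RS graph with some constant number $c \ge 2$ of induced matchings, each saturating a $(\tfrac{1}{2} - \epsilon')$-fraction of the gadget's vertices, and iterate a bipartite substitution (``blow-up'') product of the gadget with itself $L = \Theta(\log n / \log\log n)$ times. Each product step multiplies the number of induced matchings by $c$ and degrades the saturation fraction of each by a factor $1 - O(\epsilon')$; choosing $\epsilon'$ so that the cumulative loss $(1-O(\epsilon'))^L$ stays above $1 - 2\epsilon$ yields, at the top level, $r = (\tfrac{1}{2} - \epsilon) n$ and $t = c^{L} = n^{\Omega(1/\log\log n)}$.

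\textbf{The main obstacle.} In both parts the delicate point is the induced-matching property, and in both it is a ``closure'' statement about the construction. For (1) one must check that the linear encoding of $S$ into the graph is chosen so that the only edge falling between the two endpoints of a matching edge is that edge itself: any other such chord must yield a genuine three-term progression in $S$, and pinning down the encoding so that ``unwanted chord'' and ``nontrivial three-AP'' correspond exactly (with trivial progressions corresponding precisely to the matching edge) is the heart of the matter. For (2) one must prove that the bipartite substitution product of two RS graphs is again an RS graph whose induced matchings are exactly the products of the induced matchings of the two factors; this reduces to showing that any edge of the product that would spoil inducedness projects to a spoiling edge in one of the two factors, contradicting the inductive hypothesis. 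Granting these two closure lemmas, the remainder — Behrend's density bound, the per-level arithmetic for the recursion depth and matching sizes, and relabelling the $\mathbb{Z}_N$-indexed graphs as clean balanced bipartite graphs on $n+n$ vertices — is routine bookkeeping that I would carry out but not dwell on.
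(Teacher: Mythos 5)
The paper gives no proof of this theorem; it is a direct citation of the constructions in~\cite{ams12} and~\cite{gkk12}, so there is no ``paper proof'' to compare against. What you offer is an attempt to reconstruct those constructions, which is a reasonable thing to do, but your sketch has at least one concrete inaccuracy that is worth flagging.

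For part~(1), you assert that a ``direct implementation'' of Behrend's set $S$ already produces induced matchings of size $N^{1-o(1)}$ partitioning a \emph{constant} fraction of all possible edges, and that the role of~\cite{ams12} is merely to ``boost'' this by a product/recursion. This is not right. The classical Behrend/Ruzsa--Szemer\'{e}di construction, taking $M_s = \{(x, x+s) \bmod N\}$ for $s \in S$, gives $t = |S| = N^{1-o(1)}$ induced matchings, each of size $r = \Theta(N)$, so it covers $r \cdot t = N^{2-o(1)}$ edges --- a sub-constant ($N^{-o(1)}$) fraction, not a constant fraction, and the matchings have linear size, not $N^{1-o(1)}$. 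The whole point of~\cite{ams12} is to achieve simultaneously $r = n^{1-o(1)}$ \emph{and} $rt = \Omega(n^2)$, i.e.\ a constant (in fact $1-o(1)$) edge fraction, which the classical construction does not give, and which cannot be obtained by a naive product of copies of it. Alon, Moitra and Sudakov use a genuinely different construction over high-dimensional vector spaces, not an iteration of the basic Behrend gadget; describing it as a boosting step understates what must be verified. For part~(2), your picture of the~\cite{flnrrs02}/\cite{gkk12} construction as a recursive blow-up with depth $\Theta(\log n / \log\log n)$ is closer in spirit, but your ``closure lemma'' --- that the product of two RS graphs has exactly the product matchings as its induced matchings --- is the whole theorem in disguise, and you would need to prove it rather than announce it.

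In short, the proposal is an outline of plausible ideas, not a proof, and the specific claim about where the classical Behrend construction sits on the $(r,t)$ trade-off curve is factually wrong. If you want a self-contained argument, you would need to actually reproduce the constructions of~\cite{ams12} and~\cite{gkk12} and verify their parameters, rather than gesture at a boosting step that does not exist in the first reference.
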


\subsection{Streaming Models}
Given an input graph $G = (V, E)$, an {\em insertion-only} stream describing $G$ is an arbitrarily ordered sequence of the edges $E$. Then, an {\em insertion-deletion} stream is a sequence of edge insertions and deletions so that, at the end of the stream, the surviving edges constitute the edge set $E$. Furthermore, the stream is such that only edges that have previously been inserted are deleted, i.e., the multiplicity of every edge is never negative. We also assume that, at any moment, the multiplicity of any edge is polynomially bounded. The latter property is a standard assumption and required so that sampling methods, such as $\ell_0$-sampling, require only $\poly \log(n)$ space.  

\subsection{Communication Complexity}
We consider the one-way two-party model of communication for proving our space lower bounds. In this setting, two parties, denoted Alice and Bob, share the input data $X = (X_A, X_B)$ so that Alice holds $X_A$ and Bob holds $X_B$.
They operate as specified in a protocol $\Pi$ in order to solve a problem $\mathcal{P}$. Alice and Bob can make use of both private and public randomness. Randomness is provided via infinite sequences of uniform random bits. Private randomness can only be accessed by one party. The sequence of public randomness can be accessed by both Alice and Bob, and we denote this sequence by $R$.

The protocol $\Pi$ instructs the parties to operate as follows. First, Alice computes a message that we (ambiguously) also denote by $\Pi$ as a function of $X_A$, $R$, and her private randomness, sends the message $\Pi$ to Bob, who then computes the output of the protocol as a function of $X_B, \Pi, R$ and his private randomness.

The {\em communication cost} of a protocol $\Pi$ is the maximum length of the message sent from Alice to Bob in any execution of the protocol. Then, the {\em communication complexity} of a problem $\mathcal{P}$ is the minimum communication cost of any protocol that solves $\mathcal{P}$. 

\subsection{The \textsf{Index} and \textsf{Augmented-Index} Problems}
In this work, we will exploit known lower bounds on the well-known  two-party communication problems \textsf{Index} and \textsf{Augmented-Index}.

\begin{definition}[\textsf{Index}$(N)$ and \textsf{Augmented-Index}$(N)$]\label{def:index}
 In the one-way two-party communication problem \textsf{Index}$(N)$, Alice holds a bit-vector $X \in \{0, 1\}^N$ of length $N$ and Bob holds an index $J \in [N]$. Alice sends a message to Bob, who, upon receipt, outputs the bit $X[J]$. 
 In \textsf{Augmented-Index}$(N)$, the setup is identical, however, Bob additionally knows the suffix $X[J+1, N]$. 
\end{definition}
It is well-known that both problems require a message of size $\Omega(N)$, even if the error probability is as large as $\frac{1}{2} - \delta$, for any constant $\delta > 0$.

 \begin{theorem}[Lower Bounds for \textsf{Index} and \textsf{Augmented-Index}]\label{thm:lb-index}
 Every randomized protocol for \textsf{Index}$(N)$ or \textsf{Augmented-Index}$(N)$ that succeeds with probability at least $\frac{1}{2} + \delta$, for any $\delta > 0$, has a communication cost of $\Omega(N)$.
\end{theorem}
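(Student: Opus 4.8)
\textbf{Proof plan for Theorem~\ref{thm:lb-index}.}
The plan is to give a single unified argument that handles both \textsf{Index}$(N)$ and \textsf{Augmented-Index}$(N)$: since any protocol for \textsf{Index}$(N)$ can be run verbatim on an \textsf{Augmented-Index}$(N)$ instance (Bob simply ignores the suffix), it suffices to prove the lower bound for \textsf{Augmented-Index}$(N)$. I would first reduce the distributional version to the worst-case/randomized version via Yao's principle: fix the hard distribution in which $X \in \{0,1\}^N$ is uniform and $J \in [N]$ is uniform and independent of $X$, and show that every deterministic one-way protocol of communication cost $c$ that succeeds on this distribution with probability $\tfrac12 + \delta$ must have $c = \Omega(N)$. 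By fixing the public randomness to its best value, the randomized bound follows.

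The core of the argument is an information-theoretic one. Let $\Pi = \Pi(X)$ be Alice's (deterministic, for now) message, of length $c$ bits, so $I(X;\Pi) \le H(\Pi) \le c$. The key step is to lower bound $I(X;\Pi)$ by $\Omega(N)$ using the success guarantee and the augmented structure. Write $X = X_1 X_2 \cdots X_N$ and expand by the chain rule:
\begin{equation}
I(X;\Pi) = \sum_{j=1}^N I(X_j ; \Pi \mid X_{<j}),
\end{equation}
where $X_{<j} = X_1\cdots X_{j-1}$. I claim each term is $\Omega(\delta^2)$. Indeed, in the \textsf{Augmented-Index} setting, when $J = j$ Bob knows exactly the prefix... wait --- here we must be careful with the convention: in Definition~\ref{def:index} Bob knows the \emph{suffix} $X[j+1,N]$, so the relevant expansion is over suffixes; replacing $X_{<j}$ by $X_{>j} = X_{j+1}\cdots X_N$ throughout, we get $I(X;\Pi) = \sum_j I(X_j;\Pi \mid X_{>j})$. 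When $J=j$, Bob must output $X_j$ from $(\Pi, X_{>j})$ with probability $\tfrac12+\delta$ over the uniform choice of $X_j$ and the remaining coordinates; since $X_j$ is a uniform bit independent of $X_{>j}$, Fano's inequality (or a direct argument) gives that the conditional distribution of $X_j$ given $(\Pi, X_{>j})$ has, on average, advantage $\delta$ over random, which forces $I(X_j ; \Pi \mid X_{>j}) \ge f(\delta)$ for some $f(\delta) = \Omega(\delta^2)$ (via Pinsker's inequality: an average statistical distance $\ge \delta$ from uniform implies average KL divergence $\ge 2\delta^2$, and $I(X_j;\Pi\mid X_{>j})$ equals the expected KL divergence of the posterior from the prior). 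Summing over $j$ yields $I(X;\Pi) \ge N \cdot f(\delta) = \Omega(N)$, hence $c = \Omega(N)$.

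For the randomized version with public randomness $R$: condition on $R = r$; for each fixed $r$ the protocol is deterministic and its success probability over the input distribution is $p_r$, with $\Exp_r[p_r] \ge \tfrac12 + \delta$, so there is some $r^*$ with $p_{r^*} \ge \tfrac12 + \delta$, and the deterministic bound applies to the protocol with $R$ fixed to $r^*$; private randomness is handled the same way (or absorbed, since it only makes Bob's decoder a fixed function of an enlarged input). I would then remark that the \textsf{Index} bound is immediate since an \textsf{Index} protocol is an \textsf{Augmented-Index} protocol that discards the suffix, and that these bounds are standard, citing e.g.\ \cite{ms82} or a textbook treatment.

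The main obstacle is getting the per-coordinate information bound right with the correct conditioning and the correct handling of the success probability being only $\tfrac12+\delta$ rather than, say, $\tfrac23$. The subtlety is that ``succeeds with probability $\tfrac12+\delta$'' is an average over $j$ and over the input, so one cannot assume a per-$j$ guarantee; instead one argues that the \emph{average} over $j$ of the decoding advantage is $\ge \delta$, applies Pinsker per coordinate, and sums --- this is clean but must be written carefully so that the $\Omega(\cdot)$ hides only a $\delta$-dependent constant and nothing depending on $N$. A secondary point is ensuring the argument does not secretly use that Bob knows the suffix \emph{beyond} what is needed: it is used precisely to make the conditioning on $X_{>j}$ ``free'' for Bob, which is what lets us apply the single-coordinate decoding bound conditionally; without it (plain \textsf{Index}) the same chain-rule expansion still works because $I(X_j;\Pi\mid X_{>j}) \ge I(X_j;\Pi\mid X_{>j}) $ trivially and one can instead lower bound $I(X;\Pi)\ge\sum_j I(X_j;\Pi)$ only up to losing a constant --- so stating it for \textsf{Augmented-Index} and deriving \textsf{Index} as a corollary is the cleanest route and avoids this annoyance.
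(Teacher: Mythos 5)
Your proposal is essentially correct and matches the standard argument; the paper itself gives no proof of Theorem~\ref{thm:lb-index} and simply cites the textbook treatment in \cite{ry20}, which proceeds via exactly the information-theoretic chain-rule decomposition you describe. A few small remarks. First, the crux of the argument—that you only have an \emph{average} advantage $\delta$ over $j$, not a per-$j$ guarantee—is correctly identified and correctly resolved: Fano gives $I(X_j;\Pi\mid X_{>j}) \ge 1 - h(p_j)$ (with $p_j$ the per-coordinate success probability), and since $1-h(\cdot)$ is convex, Jensen gives $\sum_j \bigl(1-h(p_j)\bigr) \ge N\bigl(1 - h(\tfrac12+\delta)\bigr) = \Omega(N\delta^2)$; the Pinsker route you also mention works equally well by convexity of the squared total-variation distance. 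It would be cleaner to commit to one of the two rather than gesturing at both. Second, your Yao step is fine but actually unnecessary for this particular bound: $I(X;\Pi) \le H(\Pi) \le c$ and the per-coordinate decoding bound both go through directly for randomized protocols (averaging over Alice's and Bob's coins only weakens the per-coordinate advantage, which is still bounded below by $\delta$ on average over $j$), so you can skip the derandomization if you wish. Third, the reduction direction is right—an \textsf{Index} protocol is an \textsf{Augmented-Index} protocol, so the lower bound for the latter implies it for the former—and your remark that plain \textsf{Index} also follows directly from superadditivity $I(X;\Pi) \ge \sum_j I(X_j;\Pi)$ for independent $X_j$ is correct (there is in fact no constant loss there). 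Finally, the reference key \texttt{ms82} you mention is not in this paper's bibliography; the paper cites \cite{ry20} for this result, so cite that instead.
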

A proof of this theorem is given in \cite{ry20}. This proof is stated for \textsf{Index} but also applies to \textsf{Augmented-Index} with minor modifications.

\subsection{Information Complexity and Message Compression}
We prove lower bounds using the {\em information complexity paradigm}, which is a framework that is based on information theory. Let $(A, B, C) \sim \mathcal{D}$ be jointly distributed random variables according to distribution $\mathcal{D}$. We denote the {\em Shannon entropy} of $A$ by $H_{\mathcal{D}}(A)$, the entropy of $A$ conditioned on $B$ by $H_{\mathcal{D}}(A \ | \ B)$, the {\em mutual information} of $A$ and $B$ by $I_{\mathcal{D}}(A \ : \ B)$, and the conditional mutual information between $A$ and $B$ conditioned on $C$ by $I_{\mathcal{D}}(A \ : \ B \ | \ C)$. We may also drop the subscript $\mathcal{D}$ in $H_{\mathcal{D}}(.)$ and $I_{\mathcal{D}}(.)$ if it is clear from the context.

We will use the following standard facts about entropy and mutual information: 
(let $(A,B,C,D) \sim \mathcal{D}$ be jointly distributed random variables.)
\begin{enumerate}
 \item[\textbf{P1:}] If $A$ and $C$ are independent conditioned on $D$ then: $I(A \ : \ B \ | \ D) \le I(A \ : \ B \ | \ C, D)  $ 
 \item[\textbf{P2:}] $I(A \ : \ B \ | \ C,D) = \Exp_{d \gets D}I(A \ : \ B \ | \ C,D = d)$
 \item[\textbf{P3:}] Let $E$ be an event independent of $A, B, C$. Then: $I(A \ : B \ | \ C, E) = I(A \ : B \ | \ C)$
 \item[\textbf{P4:}] $I(A,B \ : C \ | \ D) = I(A \ : C \ | \ D) + I(B \ : C \ | \ D,A)$
\end{enumerate}

Given a one-way two-party communication protocol $\Pi$ and an input distribution $(X_A, X_B) \sim \mathcal{D}$, we will measure the amount of information that the message $\Pi$ reveals about Alice's input under distribution $\mathcal{D}$. The following quantity is denoted the {\em external information cost} of $\Pi$:

\begin{definition}
 The (external) information cost $\ICost_{\mathcal{D}}(\Pi)$ of the one-way two-party communication protocol $\Pi$ under input distribution $\mathcal{D}$ is defined as:
 $$\ICost_{\mathcal{D}}(\Pi) = I_{\mathcal{D}}(X_A \ : \ \Pi \ | \ R) \ . $$ 
\end{definition}

Then, for a given problem $\mathcal{P}$, we denote the {\em information complexity} $\IC_{\mathcal{D}}(\mathcal{P})$ of $\mathcal{P}$ under distribution $\mathcal{D}$ as the minimum information cost of a protocol that solves $\mathcal{P}$.

It is well-known that information cost of a protocol is a lower bound on communication cost of the protocol. 

We will also use a {\em message compression} result, which is due to Harsha et al. \cite{hjmr07}. We follow the presentation of this result given in \cite{ry20}.

\begin{theorem}[Message Compression]\label{thm:compression}
 Let $\Pi$ be a protocol in the one-way two-party communication setting. Then, the protocol can be simulated with a different protocol that sends a message of expected size at most
 $$I_{\mathcal{D}}(X_A \ : \ \Pi) + 2 \cdot \log \left(1 + I_{\mathcal{D}}(X_A \ : \ \Pi)  \right) + O(1) \ . $$
\end{theorem}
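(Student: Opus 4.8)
The statement is the one-way message compression theorem of Harsha et al.~\cite{hjmr07}, and I would reconstruct it via \emph{correlated (rejection) sampling}, in three steps.

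\emph{Step 1 (reduction to one-shot sampling).} Since the message $\Pi$ of a terminating protocol takes values in a countable set, we may treat its law as discrete. Fix the public coins $R=r$; then $\Pi$ is a private-coin message, and Bob, who also sees $r$, knows $Q := \Pr[\Pi=\cdot \mid R=r]$ --- a quantity determined by the protocol and the prior $\mathcal{D}$ on $X_A$, hence free of communication. Conditioning further on $X_A=x$, Alice must convey to Bob a sample from $P := \Pr[\Pi=\cdot \mid X_A=x, R=r]$, a distribution she can sample from and whose likelihood ratio $P/Q$ she can evaluate. So it suffices to show: \emph{whenever Alice and Bob share public randomness and agree on $Q$, and Alice holds such a $P$, Alice can send Bob an exact sample $M\sim P$ using an expected number of bits at most $D(P\|Q) + 2\log(1+D(P\|Q)) + O(1)$.} Granting this, averaging over $x \sim X_A\mid R=r$ turns $D(P\|Q)$ into $I_{\mathcal{D}}(X_A:\Pi\mid R=r)$ (the expected divergence of a conditional law from its marginal equals the mutual information), while concavity of $\log$ (Jensen) bounds the average of $2\log(1+D(P\|Q))$ by $2\log(1+I_{\mathcal{D}}(X_A:\Pi\mid R=r))$; a further average over $r$, again by Jensen, yields the theorem with $\ICost_{\mathcal{D}}(\Pi)=I_{\mathcal{D}}(X_A:\Pi\mid R)$ in place of $I_{\mathcal{D}}(X_A:\Pi)$ --- the distinction is immaterial, the two coinciding for private-coin protocols and the conditional version being the one the applications use.

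\emph{Step 2 (the sampling protocol).} The shared randomness encodes an infinite i.i.d.\ sequence of candidates $Y_1,Y_2,\dots\sim Q$ together with independent uniform marks. Alice scans the candidates and, using $P(Y_i)/Q(Y_i)$ and the $i$-th mark, \emph{accepts} or \emph{rejects} each one; the accept rule is calibrated through a ``budget'' that doubles across the dyadic scales of the likelihood ratio, so that (i) some candidate is accepted almost surely and (ii) the accepted candidate $Y_I$, where $I$ is the accepted index, has law exactly $P$. Claim (ii) is the standard rejection-sampling identity: conditioned on $Y_I=m$ and on $I$ being the first accepted index, the induced weight on $m$ is $Q(m)$ times the acceptance probability at $m$, which the calibration makes proportional to $P(m)/Q(m)$. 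Alice then sends Bob a prefix-free encoding of the integer $I$, and Bob outputs $Y_I$; this exactly simulates handing Bob a sample from $P$, hence (with Step~1) simulates $\Pi$.

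\emph{Step 3 (communication bound, and the main obstacle).} Everything reduces to bounding $\Exp[\,|\mathrm{enc}(I)|\,]$. The key quantitative fact is that $I$ concentrates: conditioned on reaching step $i$, the acceptance probability is, by the calibration, essentially the $P$-mass that has survived the current budget, so $I$ is stochastically dominated by a geometric variable of mean $\approx 2^{D(P\|Q)}$ --- the factor $2^{-D(P\|Q)}$ being exactly the probability that a $Q$-distributed candidate ``looks $P$-typical''. This gives a tail bound $\Pr[\,I\ge 2^{\,D(P\|Q)+s}\,]\le 2^{-\Omega(s)}$, hence $\Exp[\log I]\le D(P\|Q)+O(1)$; encoding $I$ with an Elias-type prefix code (length $\log I + 2\log\log I + O(1)$) and applying Jensen once more ($\Exp[\log\log I]\le \log(1+\Exp[\log I])$) yields the per-instance bound $D(P\|Q) + 2\log(1+D(P\|Q)) + O(1)$ of Step~1. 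I expect the main obstacle to be exactly this tail estimate for $I$ when $P/Q$ is unbounded and $D(P\|Q)$ is non-integral: a union bound is too lossy, and one must control the per-step acceptance probabilities carefully, which is precisely what the dyadic budget-doubling handles; the remaining overhead --- the $2\log(1+\cdot)+O(1)$ rather than a clean $D(P\|Q)+O(1)$ --- comes jointly from this doubling and from the integer encoding.
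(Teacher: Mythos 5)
The paper does not prove this theorem: it is imported verbatim from Harsha et al.\ (and the paper explicitly states that it follows the presentation in the Rao--Yehudayoff text), so there is no ``paper's own proof'' to compare against. Your reconstruction is the standard correlated-sampling argument underlying that result, and the skeleton is right: (i) reduce to the one-shot task of Alice transmitting a sample from $P = \Pr[\Pi\mid X_A = x]$ when Bob knows $Q = \Pr[\Pi]$, using the identity $\Exp_{x}\bigl[D(P_x\|Q)\bigr] = I(X_A:\Pi)$ and two applications of Jensen to carry the $\log(1+\cdot)$ overhead through the averaging; (ii) run a shared-randomness rejection scheme over i.i.d.\ candidates $Y_1,Y_2,\dots\sim Q$ and send the accepted index with a prefix-free integer code; (iii) bound $\Exp[\log I]$ by $D(P\|Q)+O(1)$ via a tail estimate. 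Your aside on $I(X_A:\Pi)$ versus $I(X_A:\Pi\mid R)$ is correct and is exactly how the paper applies the theorem (via property P1, $I(M:\Pi)\le I(M:\Pi\mid R)$), so the distinction is indeed harmless. The one place your sketch is thin is the part you yourself flag: in Step 2 the phrase ``calibrated through a budget that doubles across dyadic scales'' does not by itself make the accepted sample exactly $P$-distributed, because capping the acceptance ratio at $1$ truncates the high-likelihood part of $P$ at each scale; the Harsha et al.\ construction handles this by sampling uniformly from the region under the curve $2^{j}Q$ and accepting the first point that also lies under $P$, increasing $j$ only after exhausting a batch of $\sim 2^{j}$ candidates, so that the union of accepted regions across scales tiles $P$ exactly and the acceptance probability at scale $j$ is the $P$-mass not yet covered. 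This detail is simultaneously what makes the output distribution exactly $P$ and what yields the geometric tail $\Pr[I\ge 2^{D(P\|Q)+s}]\le 2^{-\Omega(s)}$; without spelling it out, both claim (ii) and the tail bound in Step 3 are asserted rather than proved. With that one gap filled, the argument matches the cited proof.
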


\section{Algorithm}\label{sec:ub}
We will first describe and analyze our sampling-based algorithm in Subsection~\ref{sec:sampling-based}, and then discuss implementations of this algorithm in the streaming models in Subsection~\ref{sec:implementations}.
\subsection{Sampling-based Algorithm} \label{sec:sampling-based}
Let $G = (V, E)$ be the input graph with $n = |V|$, $m = |E|$, and average degree $d = 2 \frac{m}{n}$.

Consider the following algorithm:

\begin{algorithm}
 \begin{algorithmic}
  \REQUIRE $G = (V, E)$
  \STATE Sample $10 \cdot n \cdot \ln n$ random edges from $E$ and denote this set by $F$
  \RETURN Longest path in $G[F]$ 
 \end{algorithmic}
 \caption{Sampling-based algorithm for constructing a path of length $\Omega(d)$. \label{alg:sampling}}
\end{algorithm}

We show next that Algorithm~\ref{alg:sampling} constructs a path of length at least $d/3$ with high probability.

\begin{theorem}
 Algorithm~\ref{alg:sampling} constructs a path of length $d/3$ with probability at least $1-\frac{1}{n^2}$. It can also be regarded as an $O(\frac{n}{d})$-approximation algorithm for \textsf{Longest Path}.
\end{theorem}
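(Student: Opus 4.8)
The plan is to separate the probabilistic argument from the sampling: first pass to a dense subgraph of large \emph{minimum} degree, where a greedy path construction works deterministically, and then show that $O(n\log n)$ uniformly random edges are enough to realize it.

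\textbf{Step 1 (a min-degree $d/2$ subgraph).} I would first recall the standard fact that $G$ contains a non-empty induced subgraph $G'=(V',E')$ with minimum degree at least $d/2$, obtained by repeatedly deleting a vertex of current degree $<d/2$ until none remains. Tracking the potential $\Phi(H)=|E(H)|-\tfrac{d}{2}|V(H)|$ of the current graph $H$: initially $\Phi(G)=m-\tfrac{d}{2}n=0$ since $d=2m/n$, and each deletion of a vertex $v$ changes $\Phi$ by $\tfrac{d}{2}-\deg_H(v)>0$, so the final graph $G'$ has $\Phi(G')>0$, hence $|E'|>\tfrac{d}{2}|V'|\ge 0$, which forces $|E'|\ge 1$ and thus $G'\neq\emptyset$; by construction every $u\in V'$ has $\deg_{G'}(u)\ge d/2$.

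\textbf{Step 2 (a long path among the sampled edges of $G'$).} Let $F'=F\cap E'$. I would show that with probability at least $1-n^{-2}$ the greedy process that starts at a fixed $v_0\in V'$ and, given a partial path $v_0,\dots,v_\ell$ of length $\ell<\lceil d/3\rceil$, extends it to any unvisited $G'$-neighbour of $v_\ell$ reachable via an edge of $F'$, reaches length $\lceil d/3\rceil\ge d/3$. The deterministic input is that, since $\deg_{G'}(v_\ell)\ge d/2$ and at most $\ell<d/3$ of those neighbours already lie on the path, the candidate set $C_\ell$ of such extension edges has $|C_\ell|>d/6$. For a \emph{fixed} edge set of size $>d/6$, the probability that none of the $k=10n\ln n$ samples hits it is at most $(1-1/m)^{k|C_\ell|}\le e^{-kd/(6m)}=e^{-k/(3n)}=n^{-10/3}$, using $m=dn/2$; a union bound over the $\le\lceil d/3\rceil\le n$ extension steps then gives total failure probability at most $n^{-7/3}\le n^{-2}$. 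The one genuine subtlety — and the step I expect to be the main obstacle — is that $C_\ell$ itself depends on $F$ through the path built so far, so the per-step bound cannot be applied to it directly. I would fix this by writing ``greedy gets stuck at length $\ell$'' as $\bigcup_P \{\text{greedy path}=P\}\cap\{C(P)\cap F=\emptyset\}$ over all length-$\ell$ paths $P$ starting at $v_0$, where $C(P)$ is now a fixed set with $|C(P)|>d/6$, and observing that $C(P)$ is disjoint from the candidate sets of all proper prefixes of $P$ (those consist of edges incident to earlier path vertices, whereas $C(P)$ consists of edges from the last vertex of $P$ to vertices \emph{not} on $P$), so that the event $\{\text{greedy path}=P\}$ — determined only by the sampled status of edges in those earlier candidate sets — is independent of $\{C(P)\cap F=\emptyset\}$; hence $\Pr[\text{stuck at }\ell]\le\sum_P\Pr[\text{greedy}=P]\cdot n^{-10/3}\le n^{-10/3}$. (Edge inclusions may be treated as independent by a standard coupling/Poissonization argument; alternatively one uses $(1-|C|/m)^k\le(1-1/m)^{k|C|}$ directly in the with-replacement model.)

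\textbf{Step 3 (conclusion and approximation factor).} Since $G'[F']\subseteq G[F]$, the path produced in Step 2 is a path of $G[F]$, so Algorithm~\ref{alg:sampling}, which returns a longest path of $G[F]$, outputs a path of length at least $\lceil d/3\rceil\ge d/3$ with probability at least $1-n^{-2}$. For the approximation guarantee, any path in $G$ has at most $n$ vertices, hence length at most $n-1$, so $\lp(G)\le n-1$; dividing by the output length $\ge d/3$ yields an approximation factor of at most $3(n-1)/d=O(n/d)$ (and when $d=O(1)$ this bound is $\Omega(n)$, which is trivially matched by outputting a single sampled edge).
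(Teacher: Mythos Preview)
Your proposal is correct and follows essentially the same approach as the paper: pass to a subgraph $G'$ of minimum degree at least $d/2$, greedily extend a path within $G'$ using sampled edges, and show that each of the at most $d/3$ extension steps succeeds with probability $1-n^{-\Omega(1)}$ because the $\ge d/6$ candidate edges are hit by the $\Theta(n\log n)$ samples. The only difference is bookkeeping on the conditioning --- the paper works in the without-replacement model and bounds $\Pr[F\cap E(u_{i-1})=\emptyset \mid P_{i-1}\subseteq F]$ via a hypergeometric estimate (with a somewhat informal ``we have only learnt $P_{i-1}\subseteq F$''), whereas you decompose over all length-$\ell$ prefixes $P$ and use that $\{\text{greedy prefix}=P\}$ and $\{C(P)\cap F=\emptyset\}$ depend on disjoint edge sets, which is cleanest in the Poissonized model you flag --- but both routes are valid and yield the same bound.
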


\begin{proof}
Given the input graph $G = (V, E)$, let $U \subseteq V$ denote a subset of vertices of $V$ such that $G[U]$ has minimum degree at least $d / 2$. It is well-known that such a subset of vertices exists, and we give a proof of this statement for completeness in the appendix (Lemma~\ref{lem:avg-min}).

Let $u_0 \in U$ be any vertex, and let $P_0 = \{u_0 \}$ be the path of length $0$ with start and end point  $u_0$. We will extend $P_0$ using the edges in $F$ that are also contained in $G[U]$, as follows:

In step $i = 1, 2, \dots, d / 3$, we add the edge $(u_{i-1}, u_i)$ to $P_{i-1}$ and obtain the path $P_i$. The edge $(u_{i-1}, u_i)$ is an arbitrary edge in $F$ incident on $u_{i-1}$ that connects to a vertex $u_i \in U$ that has not yet been visited on the path $P_{i-1}$. We will now prove that such a vertex exists.

First, recall that, by definition of $U$, the vertex $u_{i-1}$ has at least $d / 2$ neighbors in $G[U]$. Since $i \le d / 3$, at least $d / 2 - d / 3 = d / 6$ of these neighbors are not visited by the path $P_{i-1}$. Denote by $E(u_{i-1})$ this set of at least $d / 6$ edges connecting $u_{i-1}$ to not yet visited neighbors in $G[U]$. We now claim that at least one of the edges of $E(u_{i-1})$ is contained in $F$ with high probability. Observe that at stage $i-1$, we have only learnt so far that the sample $F$ contains the $i-1$ edges of the path $P_{i-1}$. Hence, 
\begin{align*}
\Pr & \left[ F  \cap E(u_{i-1}) = \varnothing \ | \ P_{i-1} \subseteq F \right]  = \frac{{ m - (i-1) - |E(u_{i-1})| \choose |F| - (i-1)}}{ { m - (i-1) \choose |F| - (i-1)}}  \\
& \le \exp \left(-  \frac{|E(u_{i-1})| (|F| - (i-1))}{m-(i-1)} \right) \le \exp \left(-  \frac{\frac{d}{6} (10 n \ln(n) - n \ln(n))}{m} \right) \\
& = \exp(- \frac{\frac{m}{3n} (9 n \log n)}{m}) = \frac{1}{n^{3}} \ ,
\end{align*}
where we used the inequality $\frac{{a-c \choose b}}{{a \choose b}} \le \exp \left(-\frac{bc}{a} \right)$ (see Lemma~\ref{lem:tech-1} in the appendix) to obtain the first inequality and the bound $(i-1) \le n \ln(n)$ to obtain the second.

We can therefore extend the path with probability $1-\frac{1}{n^3}$ at any step $i$. Since we run $d / 3 \le n$ steps overall to create the final path $P_{d / 3}$ of length $d / 3$, by the union bound, we succeed with probability at least $1 - \frac{1}{n^2}$. 

Last, since a longest path in $G$ is of length at most $n$, the algorithm also constitutes an $O(n/d)$-approximation algorithm to \textsf{Longest Path}.
\end{proof}

\subsection{Implementation in Streaming Models}\label{sec:implementations}
Algorithm~\ref{alg:sampling} can easily be implemented in both the insertion-only and the insertion-deletion streaming models. In the insertion-only model, a uniform sample of the edges in the stream can be obtained using {\em reservoir sampling} \cite{v85}, and, in the insertion-deletion model, this can be achieved using $\ell_0$-samplers \cite{jst11} and rejection sampling. 

We obtain the following theorem:

\newcounter{thmsaved}
\setcounter{thmsaved}{\value{theorem}}
\setcounter{theorem}{\value{counterALG}}
\addtocounter{theorem}{-1}
\begin{theorem}
 In both the insertion-only and the insertion-deletion models, for undirected graphs, there are semi-streaming algorithms that compute a path of length at least $d/3$ with high probability, where $d$ is the average degree of the input graph.
\end{theorem}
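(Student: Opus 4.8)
The plan is to reduce the streaming statement to the already-proven correctness guarantee of Algorithm~\ref{alg:sampling}, which shows that a uniform sample of $10 n \ln n$ edges contains a path of length at least $d/3$ with probability at least $1 - \frac{1}{n^2}$. Thus all that remains is to argue that such a uniform sample can be maintained in the two streaming models using $O(n \poly \log n)$ space, and that, given the sample $F$, a longest path in $G[F]$ can be extracted (in post-processing, so this costs no extra streaming space beyond storing $F$, which is $O(n \log n)$ edges, i.e. $O(n \poly \log n)$ words).

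\medskip

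\noindent\textbf{Insertion-only model.} First I would invoke reservoir sampling \cite{v85}: to sample $k = 10 n \ln n$ edges uniformly without replacement from a stream of unknown length $m$, one keeps the first $k$ edges, and upon seeing the $i$-th edge ($i > k$) one includes it with probability $k/i$, evicting a uniformly random currently-stored edge if so. This maintains, at every prefix, a uniform sample of size $\min(i,k)$, so at the end of the stream $F$ is a uniform size-$k$ sample of $E$ (when $m \ge k$; if $m < k$ we simply store all of $E$, which only helps). The space is $O(k) = O(n \log n)$ edges plus $O(\log n)$ bookkeeping, hence $O(n \poly \log n)$. One subtlety worth a sentence: Algorithm~\ref{alg:sampling} is analysed with sampling \emph{without} replacement (the hypergeometric bound in the proof), and reservoir sampling indeed gives sampling without replacement, so the analysis applies verbatim.

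\medskip

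\noindent\textbf{Insertion-deletion model.} Here I would use $\ell_0$-samplers \cite{jst11}: an $\ell_0$-sampler returns a uniformly random element of the support of the current vector (here, the set of currently-present edges) using $\poly \log n$ space, succeeding with probability $1 - 1/\poly(n)$. Running $\Theta(n \log n)$ independent $\ell_0$-samplers in parallel yields $\Theta(n \log n)$ edges each drawn uniformly at random \emph{with} replacement from $E$; I would then apply rejection sampling (discard repeated draws) to obtain a uniform sample without replacement of size $10 n \ln n$ — noting that since $|E| \le \binom{n}{2}$ and the sample size is $O(n \log n)$, the probability of collisions is small enough that $\Theta(n \log n)$ samplers suffice with high probability, and a standard Chernoff/union bound argument makes this rigorous. (Alternatively, one can redo the analysis of Algorithm~\ref{alg:sampling} directly for with-replacement sampling, which only changes constants; I would mention this as a fallback.) The total space is $\Theta(n \log n) \cdot \poly\log n = O(n \poly \log n)$. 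Finally I would union-bound over the $O(n\log n)$ sampler failure events and the $\frac{1}{n^2}$ failure probability of the path-extraction guarantee to conclude the whole algorithm succeeds with high probability.

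\medskip

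\noindent\textbf{Main obstacle.} The only genuine technical point is the mismatch between the with-replacement sampling naturally produced by parallel $\ell_0$-samplers and the without-replacement sampling assumed in the analysis of Algorithm~\ref{alg:sampling}; everything else is a routine invocation of standard streaming primitives. I expect to resolve this either by the rejection-sampling argument sketched above (bounding the number of duplicates among $\Theta(n\log n)$ draws from a universe of size $\Omega(n)$) or, more cleanly, by observing that the hypergeometric tail bound used in the proof of Algorithm~\ref{alg:sampling} is dominated by the corresponding binomial (with-replacement) tail bound, so the path-existence conclusion holds for with-replacement samples of the same size up to a constant factor in the number of samples. Either way the space bound is unaffected.
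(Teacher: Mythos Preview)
Your proposal is correct and follows exactly the paper's approach: the paper's proof of this theorem is a two-sentence appeal to reservoir sampling for insertion-only streams and to $\ell_0$-samplers together with rejection sampling for insertion-deletion streams, precisely as you outline. Your treatment is in fact more careful than the paper's---the paper does not discuss the with-/without-replacement mismatch at all---and your fallback of redoing the hypergeometric bound for with-replacement sampling is the cleanest fix (your rejection-sampling collision argument has the inequality the wrong way round: you need $|E|$ \emph{large}, not merely bounded above by $\binom{n}{2}$, for collisions among $\Theta(n\log n)$ draws to be rare).
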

\setcounter{theorem}{\value{thmsaved}} 

Given a space bound $s = \Omega(n \poly \log n)$, we can run the previous algorithm in parallel with the trivial algorithm that stores $s$ edges. Then, if the input graph has at most $s$ edges then we obtain an exact solution (computed in exponential time as a post-processing step), and otherwise, we obtain an $O(n / d)$-approximation. Observe that, if the input graph has more than $s$ edges then its average degree $d$ is at least $d \ge 2s/n$, which implies that we find an $O(n^2 / s)$-approximation. Phrased differently, in order to obtain an $\alpha$-approximation algorithm, space $\tilde{O}(n^2 / \alpha)$ suffices. We thus obtain the following corollary:

\setcounter{thmsaved}{\value{corollary}}
\setcounter{corollary}{\value{counterCOR}}
\addtocounter{corollary}{-1}

\begin{corollary}
 In both the insertion-only and the insertion-deletion models, for undirected graphs, there are $\tilde{O}(n^2 / \alpha)$-space streaming algorithms that compute an $\alpha$-approximation to \textsf{Longest Path}.
\end{corollary}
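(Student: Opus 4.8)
The plan is to run two algorithms in parallel under a shared budget of $s := \lceil 3n^2/(2\alpha) \rceil$ stored edges together with $\tilde O(n)$ auxiliary space, and to output the longer of the two paths they produce. The first algorithm is the trivial one that simply retains (up to) the first $s$ distinct surviving edges of the stream; in the insertion-deletion model the edge multiset can be maintained directly since multiplicities are polynomially bounded, so $O(\log n)$ bits per stored edge suffice. If at the end of the stream the retained set equals all of $E$, i.e.\ $m := |E| \le s$, then in a post-processing step we compute an exact longest path of $G$ by exhaustive depth-first search over simple paths; this uses only $O(n)$ additional space (storing the current partial path and backtracking) and unbounded time, which the model permits. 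The second algorithm is Algorithm~\ref{alg:sampling}, implemented in $\tilde O(n)$ space via reservoir sampling in the insertion-only model and via $\ell_0$-samplers with rejection sampling in the insertion-deletion model, as described in Subsection~\ref{sec:implementations}.

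Next I would carry out the case analysis. If $m \le s$, the first algorithm returns an optimal path, which is trivially an $\alpha$-approximation. If $m > s$, then the average degree of $G$ satisfies $d = 2m/n > 2s/n \ge 3n/\alpha$, so, by the guarantee established for Algorithm~\ref{alg:sampling} in Subsection~\ref{sec:sampling-based}, with probability at least $1 - 1/n^2$ it returns a path of length at least $d/3 > n/\alpha \ge \lp(G)/\alpha$, again an $\alpha$-approximation. Taking the longer of the two outputs therefore yields an $\alpha$-approximation with high probability in both models. The total space is $O(n^2/\alpha)$ words for the stored edges plus $\tilde O(n)$ for the sampling routine; when $\alpha \le n$ the first term is $\Omega(n)$ and absorbs the polynomial part, giving the claimed bound $\tilde O(n^2/\alpha)$, while when $\alpha \ge n$ a single edge is already a valid $\alpha$-approximation (since $\lp(G) \le n \le \alpha$) and $O(\log n)$ space suffices.

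I do not expect a genuine obstacle; the statement is a routine combination of the two regimes. The only points requiring a little care are: (i) choosing the constant in the definition of $s$ so that any graph too large to be stored is forced to have average degree at least $3n/\alpha$, which is precisely what makes Algorithm~\ref{alg:sampling} produce a path of length $\ge n/\alpha \ge \lp(G)/\alpha$; (ii) noting that the exact post-processing on the stored graph can be done in polynomial (indeed $O(n)$) space rather than exponential space, so it does not inflate the space bound; and (iii) dealing with the degenerate large-$\alpha$ regime separately by outputting a single edge.
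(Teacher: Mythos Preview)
Your proposal is correct and follows essentially the same approach as the paper: run the sampling-based algorithm in parallel with the trivial store-$\Theta(n^2/\alpha)$-edges algorithm, use the exact solution when the graph fits, and otherwise exploit the lower bound on the average degree to invoke the $d/3$ guarantee. You are in fact more careful than the paper about the constant in the threshold $s$, the degenerate regime $\alpha\ge n$, and the post-processing space; the paper's own argument is the one-paragraph sketch preceding the corollary and does not spell out these details.
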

\setcounter{corollary}{\value{thmsaved}} 

\section{Insertion-only Lower Bound for Directed Graphs} \label{sec:lb-directed}
In this section, we  prove an $\Omega(n^2)$ space lower bound for one-pass streaming algorithms for $\textsf{LP}$ on directed graphs that compute an $(n^{1 - o(1)})$-approximation.

To this end, we work with two input distributions $\mathcal{D}_{\text{SLP}}$ (\textbf{S}imple \textsf{\textbf{L}ongest \textbf{P}ath}) and $\mathcal{D}_{\text{LP}}$ (\textsf{\textbf{L}ongest \textbf{P}ath}). In Subsection~\ref{sec:simple}, we  give a lower bound on the information cost of protocols that solves $\mathcal{D}_{\text{SLP}}$ well. We achieve this by, first, proving a lower bound on the communication cost of any protocol that solves $\mathcal{D}_{\text{SLP}}$ directly via combinatorial arguments, and then employ a message compression argument that allows us to conclude that the information cost of such protocols  must also be large. Then, in Subsection~\ref{sec:complex}, we present the distribution $\mathcal{D}_{\text{LP}}$, which makes use of an $(r, t)$-RS-graph, and we establish a direct sum argument, showing that the information cost of protocols that solve $\mathcal{D}_{\text{LP}}$ is at least $t$ times the information cost of protocols that solve $\mathcal{D}_{\text{SLP}}$, which bounds the information cost of protocols that solve  $\mathcal{D}_{\text{LP}}$ from below. Then, since information cost is a lower bound on communication cost, we obtain our result.

\subsection{A Simple Distribution} \label{sec:simple}
We will first work with the distribution denoted $\mathcal{D}_{\text{SLP}}(r)$, see Figure~\ref{fig:dist-slp}.

\begin{figure}[h]
\begin{center} 
\fbox{
\begin{minipage}{0.96\textwidth}
\textbf{Input Distribution $\mathcal{D}_{\text{SLP}}(r)$:}

\vspace{0.1cm}

 The directed graph $G=(A, B = B_1 \cup B_2, E)$ with $|A| = |B_1| = |B_2| = r$, $A = \{a_1, \dots, a_r\}$, $B_1 = \{b^1_1, \dots, b^1_r\}$, $B_2 = \{b^2_1, \dots, b^2_r\}$, and $E= E_A \cup E_B$, where $E_A$ are Alice's edges and $E_B$ are Bob's edges, is obtained as follows:
 \vspace{0.2cm}
 
 \textbf{Alice's Input:} Edge set $E_A$
 
 \vspace{0.1cm}
 
 For each $i \in [r]$, flip an unbiased coin $X_i \in \{0, 1\}$ and if it comes out heads then insert the directed edge $(a_i, b^1_i)$ into the graph. If it comes out tail then insert the edge $(a_i, b^2_i)$. Observe that this constitutes a directed matching $M$ that matches all $A$-vertices and exactly one of $b_i^1, b_i^2$, for all $i$.

 \vspace{0.1cm}

Alice holds the edges $E_A = M$.

   \vspace{0.2cm}
 
 \textbf{Bob's Input:} Edge set $E_B$
 
 Let $N^1$ be a uniform random matching between $A$ and $B_1$, directed from $B_1$ towards $A$. Let $N^2$ be a copy of $N^1$, but every $B_1$-vertex is replaced by the corresponding $B_2$-vertex.
 
 \vspace{0.1cm}
 
 Bob holds the edges $E_B = N^1 \cup N^2$.
 
\end{minipage}
} \caption{Input Distribution $\mathcal{D}_{\text{SLP}}(r)$ \label{fig:dist-slp}} \end{center}
\end{figure}

We first argue that the expected length of a longest path in $H \sim \mathcal{D}_{\text{SLP}}(r)$ is $\Omega(r)$. 
\begin{lemma}\label{lem:golomb}
 The expected length of a longest path in $H \sim \mathcal{D}_{\text{SLP}}(r)$ is bounded as follows:
 $$\mathop{\Exp}_{H \gets \mathcal{D}_{\text{SLP}}(r)} \lp(H) \ge 2 \lambda r \ge 1.24 r \ , $$ 
 where $\lambda = 0.62432 \dots$ is the Golomb-Dickman constant.
\end{lemma}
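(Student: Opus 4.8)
The plan is to reduce the statement to the classical fact that a uniformly random permutation of $[r]$ has a longest cycle of expected length at least $\lambda r$, where $\lambda$ is the Golomb--Dickman constant. The first step is to expose the permutation hidden in Bob's input: let $\pi \in S_r$ be the bijection with $N^1 = \{(b^1_j, a_{\pi(j)}) : j \in [r]\}$, and note that by construction $N^2 = \{(b^2_j, a_{\pi(j)}) : j \in [r]\}$ uses the \emph{same} $\pi$, which is distributed uniformly over $S_r$. The key observation is that Alice's coin $X_i$ only decides which of the two parallel vertices $b^1_i, b^2_i$ carries the matching edge out of $a_i$; whichever one it is, the unique outgoing edge from that $b$-vertex leads to $a_{\pi(i)}$. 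Hence, at the level of the $A$-vertices, following edges of $H$ realizes exactly the permutation $\pi$, independently of the coins. (One could phrase this by remarking that every vertex of $H$ has out-degree $1$, so $H$ is a functional graph whose components correspond to the cycles of $\pi$, but only a one-sided bound is needed.)

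Next I would exhibit an explicit long path. Let $C = (i_0, i_1, \dots, i_{L-1})$ be a longest cycle of $\pi$, so $\pi(i_j) = i_{(j+1) \bmod L}$ and $L = L_{\max}(\pi)$. For each index $i$, write $\beta(i) \in \{b^1_i, b^2_i\}$ for the $B$-endpoint of the matching edge incident to $a_i$, and $\bar\beta(i)$ for the other of the two; note $\bar\beta(i)$ has in-degree $0$ in $H$ but still retains its outgoing edge to $a_{\pi(i)}$. Then I claim that
\[
\bar\beta(i_{L-1}),\; a_{i_0},\; \beta(i_0),\; a_{i_1},\; \beta(i_1),\; \dots,\; a_{i_{L-1}},\; \beta(i_{L-1})
\]
is a simple directed path in $H$: each edge $a_{i_j} \to \beta(i_j)$ is the matching edge of $a_{i_j}$; each edge $\beta(i_j) \to a_{i_{(j+1) \bmod L}}$ and the initial edge $\bar\beta(i_{L-1}) \to a_{i_0}$ lie in $N^1 \cup N^2$ since $\pi(i_{L-1}) = i_0$; and all listed vertices are distinct because the $a_{i_j}$ are distinct along the cycle, the $\beta(i_j)$ sit on distinct indices, and $\bar\beta(i_{L-1}) \ne \beta(i_{L-1})$ by definition. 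This path has $2L+1$ vertices, hence $2L$ edges, so $\lp(H) \ge 2 L_{\max}(\pi)$ holds pointwise over the sample space.

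Finally I would take expectations: since the coins $X_i$ are irrelevant to the above bound, $\Exp_{H \gets \mathcal{D}_{\text{SLP}}(r)} \lp(H) \ge 2\, \Exp_{\pi \sim S_r}[L_{\max}(\pi)]$, and then invoke the known bound $\Exp_{\pi \sim S_r}[L_{\max}(\pi)] \ge \lambda r$ on longest cycles in random permutations \cite{gwg59,g64} to conclude $\Exp \lp(H) \ge 2\lambda r \ge 1.24 r$, using $2\lambda = 1.2486\ldots$. The combinatorics here are routine; the one point that needs a little care is the last citation, namely that $\Exp[L_{\max}(\pi_r)]/r$ remains at least $\lambda$ for finite $r$ (the ratio in fact decreases to $\lambda$) rather than merely converging to it. If one wishes to sidestep this, dropping the first vertex $\bar\beta(i_{L-1})$ from the path still gives $\Exp \lp(H) \ge 2\lambda r - o(r)$, which suffices for the intended application since $r$ is taken large. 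I therefore expect the ``main obstacle'' to be purely bookkeeping: correctly identifying that the right quantity to track is the longest cycle of $\pi$ and pinning down the constant $2\lambda$ (as opposed to an off-by-a-factor length), both of which the construction above resolves.
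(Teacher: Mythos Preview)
Your proposal is correct and follows essentially the same approach as the paper: identify the uniformly random permutation $\pi$ encoded by Bob's matchings, bound $\lp(H)$ below by (roughly twice) the length of the longest cycle of $\pi$, and then invoke the Golomb--Dickman result. The paper phrases this via a contraction $H \to H'$ that merges each pair $b_i^1, b_i^2$ and then argues $\lp(H') = 2\lc(\sigma) - 1$; your direct path construction in $H$ is slightly cleaner and in fact gains one extra edge (length $2L$ rather than $2L-1$) by prepending the spare $B$-vertex $\bar\beta(i_{L-1})$, though this makes no difference for the intended bound.
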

\begin{proof}
For an input graph $H=(A, B_1 \cup B_2, E) \sim \mathcal{D}_{\text{SLP}}(r)$ with $B_1 = \{b_1^1, \dots, b_r^1 \}$ and $B_2 = \{b_1^2, \dots, b_r^2 \}$, let $H'$ be the graph obtained from $H$ by {\em contracting} all the vertex pairs $b^1_i$ and $b^2_i$, 
for every $i$, and by treating parallel edges in the resulting graph as single edges, see Figure~\ref{fig:slp} for an illustration. It is then easy to see that  $\lp(H') = \lp(H)$. 
Furthermore, denote by $\mathcal{D}'_{\text{SLP}}(r)$ the distribution of $H'$. We observe that both $\mathcal{D}_{\text{SLP}}(r)$ and $\mathcal{D}'_{\text{SLP}}(r)$ are uniform distributions.

 Next, consider the set $\Sigma_r$ of permutations of the set $\{1, 2, \dots, r\}$. Consider now the bijection $f: \Sigma_r \rightarrow \text{range}(\mathcal{D}'_{\text{SLP}}(r))$, where a permutation $\sigma \in \Sigma_r$ is mapped to the graph that contains the edges $(b_i, a_{\sigma(i)})$, for all $i$.
 Then, we observe that, for a permutation $\sigma \in \Sigma_r$, the length of the longest cycle $\lc(\sigma)$ is related to $\lp(f(\sigma))$ as follows: 
 $$2 \cdot \lc(\sigma) - 1 = \lp(f(\sigma)) \ . $$
 
 It is known that the length of a longest cycle in a random permutation $\sigma \in \Sigma_{r}$ is at least $\lambda \cdot r$, where $\lambda = 0.624 \dots$ is the Golomb-Dickman constant \cite{gwg59,g64}. Hence, we obtain
 $$\mathop{\Exp}_{\mathcal{H}' \sim \mathcal{D}'_{\text{SLP}}(r)}  \lp(H') \ge 2 \cdot \lambda \cdot r - 1 \ge 1.24 \cdot r \ ,$$ using the assumption that $r$ is large enough. Since the longest paths in $H$ and $H'$ are identical, the result follows.
\end{proof}

\begin{figure}[h]
\begin{center} \begin{minipage}{0.53\textwidth}
 \hspace{0.4cm} \includegraphics[height=4.5cm]{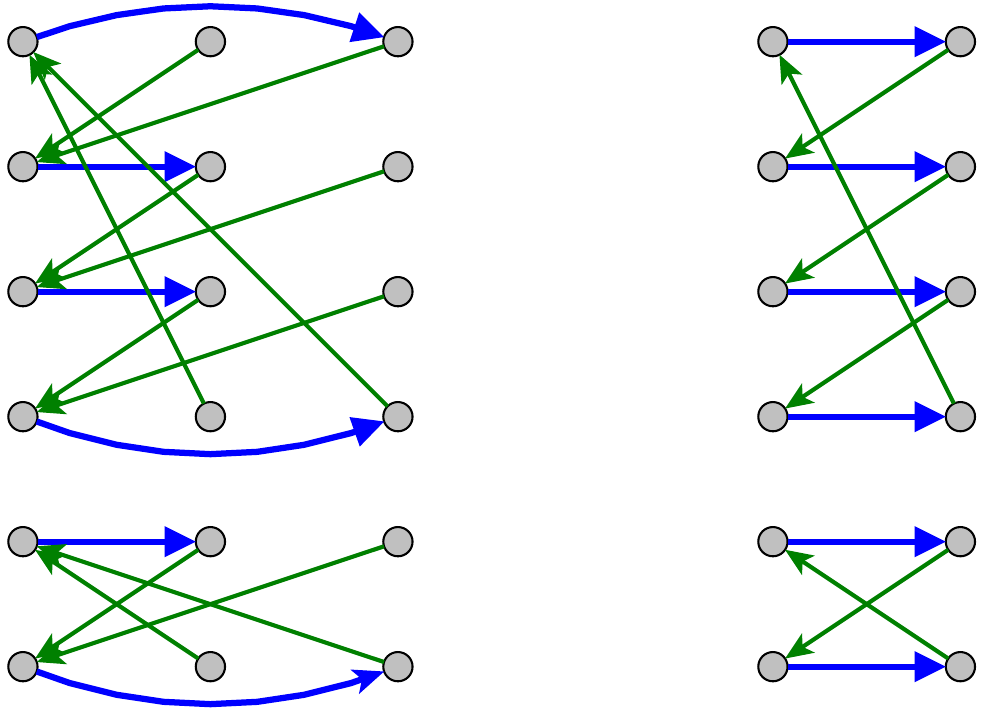}
 
 \vspace{-4.7cm} $a_1$ \hspace{2.7cm} $b_1^2$ \hspace{0.9cm} $a_1$ \hspace{1.45cm} $b_1$
 
 \vspace{0.33cm}
 $a_2$ \hspace{2.7cm} $b_2^2$ \hspace{0.9cm} $a_2$ \hspace{1.45cm} $b_2$
 
  \vspace{0.33cm}
 $a_3$ \hspace{2.7cm} $b_3^2$ \hspace{0.9cm} $a_3$ \hspace{1.45cm} $b_3$

\vspace{0.33cm}
 $a_4$ \hspace{2.7cm} $b_4^2$ \hspace{0.9cm} $a_4$ \hspace{1.45cm} $b_4$

\vspace{0.33cm}
 $a_5$ \hspace{2.7cm} $b_5^2$ \hspace{0.9cm} $a_5$ \hspace{1.45cm} $b_5$

 \vspace{0.33cm}
 $a_6$ \hspace{2.7cm} $b_6^2$ \hspace{0.9cm} $a_6$ \hspace{1.45cm} $b_6$
 
 \vspace{-5.1cm}
 
 \hspace{0.4cm} $A$ \hspace{0.55cm} $B_1$ \hspace{0.55cm} $B_2$ \hspace{1.7cm} $A$ \hspace{0.65cm} $B$
 
 \vspace{0.27cm}
 
 \hspace{1.9cm} $b_1^1$

 \vspace{0.25cm}
 
 \hspace{1.9cm} $b_2^1$

 \vspace{1.1cm}
 
 \hspace{1.9cm} $b_4^1$

 \vspace{0.28cm}
 
 \hspace{1.9cm} $b_5^1$

 \vspace{0.3cm}
 
 \hspace{1.9cm} $b_6^1$
 
 \vspace{0.3cm}
 
 \hspace{1.7cm} $H$ \hspace{3.3cm} $H'$
 \end{minipage}\begin{minipage}{0.2\textwidth}
 $$\sigma = \left( \begin{tabular}{cccccc} 
           1 & 2 & 3 & 4 & 5 & 6 \\           
           2 & 3 & 4 & 1 & 6 & 5
          \end{tabular}
\right) $$
 \end{minipage}
 \end{center}
 \caption{Illustration of the proof of Lemma~\ref{lem:golomb}. The vertices $b_i$ are obtained by contracting $b_i^1$ and $b_i^2$. We observe that the permutation $\sigma$ has a longest cycle of length $4$ ($1-2-3-4$), while $H'$ (and $H$) have longest paths of lengths $7$ ($a_1, b_1, a_2, b_2, \dots, a_4, b_4$). \label{fig:slp}} 
\end{figure}

Next, we show that any deterministic protocol on input distribution $\mathcal{D}_{\text{SLP}}(r)$ that communicates at most $\frac{1}{100} r$ bits outputs a path of length at most $O(\log r)$ with large probability over $\mathcal{D}_{\text{SLP}}(r)$.
\begin{lemma} \label{lem:det-alg}
 Let $\Pi_{\text{SLP}}$ be a deterministic one-way communication protocol for $\textsf{Longest Path}$ on distribution $\mathcal{D}_{\text{SLP}}(r)$ that communicates at most $ \frac{r}{100}$ bits. Then, the probability over the input distribution $\mathcal{D}_{\text{SLP}}(r)$ that $\Pi_{\text{SLP}}$  outputs a path of length $O(\log r)$ is at least $1 - \frac{1}{500}$.
\end{lemma}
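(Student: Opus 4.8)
The plan is to show that any deterministic protocol $\Pi_{\text{SLP}}$ sending at most $\tfrac{r}{100}$ bits, with probability $\ge 1-\tfrac1{500}$ over $\mathcal{D}_{\text{SLP}}(r)$, cannot output a \emph{valid} path of length $\ell$, for a suitable $\ell=\Theta(\log r)$. The starting point is a structural observation: in a graph drawn from $\mathcal{D}_{\text{SLP}}(r)$ all of Alice's edges go from $A$ to $B$ and all of Bob's edges go from $B$ to $A$, so every directed path alternates between Alice-edges and Bob-edges. Writing $\sigma$ for the permutation underlying Bob's matchings $N^1,N^2$ (Bob holds, for every $i$, both $(b_i^1,a_{\sigma(i)})$ and $(b_i^2,a_{\sigma(i)})$), every directed path, after discarding at most its first and last vertex, has the canonical form $a_{i_0},b_{i_0}^{X_{i_0}},a_{i_1},b_{i_1}^{X_{i_1}},\dots$ with $i_{j+1}=\sigma(i_j)$; that is, its $A$-vertices trace out a $\sigma$-run and the $b$-vertices it uses are \emph{forced} to be $b_{i_j}^{X_{i_j}}$. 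Hence, if Bob outputs a valid path of length $\ell$, his output determines a $\sigma$-run $R=(i_0,\dots,i_{k-1})$ on $k=\Theta(\ell)$ distinct coordinates together with a guess $g\in\{1,2\}^{k}$ for the coins $X_{i_0},\dots,X_{i_{k-1}}$, and validity forces $X_R=g$. Since Bob's output is a deterministic function of his message $\Pi$ and of $\sigma$, both $R$ and $g$ are functions of $(\Pi,\sigma)$.

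\textbf{Union bound over runs via the randomness of $\sigma$.} For a fixed ordered $k$-tuple $\rho$ of distinct coordinates, $\Pr_{\sigma}[\rho\text{ is a }\sigma\text{-run}]=\tfrac{(r-k+1)!}{r!}$, which for $k\ll r$ is of order $r^{-(k-1)}$ and hence tiny. Let $G_I$ denote the Bayes-optimal probability of guessing the bit-vector $X_I$ from Alice's message $\Pi$. Conditioned on $\Pi=\pi$, the guess that Bob attaches to a \emph{fixed} run $R$ is a fixed string, which therefore equals $X_R$ with probability at most the bias of $X_R$ under the conditional distribution; averaging this over $\pi$ and then union-bounding over runs gives
\[
\Pr_{\mathcal{D}_{\text{SLP}}(r)}\!\bigl[\Pi_{\text{SLP}}\text{ outputs a valid path of length}\ge\ell\bigr]\;\le\;\Exp_{\sigma}\Bigl[\sum_{\substack{\rho\text{ a }\sigma\text{-run}\\ |\rho|=k}}G_{\rho}\Bigr]\;=\;\frac{(r-k+1)!\,k!}{r!}\sum_{|I|=k}G_I\;\le\;\frac{r}{\binom{r}{k}}\sum_{|I|=k}G_I\,.
\]
So it suffices to show that the average guessing probability of a uniformly random $k$-subset of coordinates, $\bar G_k:=\binom{r}{k}^{-1}\sum_{|I|=k}G_I$, satisfies $\bar G_k\le c^{\,k}$ for some absolute constant $c<1$ (with $c$ depending only on the constant $\tfrac1{100}$): the failure probability is then at most $r\cdot c^{\,k}\le\tfrac1{500}$ once $\ell$, and hence $k=\Theta(\ell)$, exceeds a suitable multiple of $\log r$, which is exactly the claimed $O(\log r)$ bound. (Note $G_I\ge 2^{-k}$ always, so $\ell=\Theta(\log r)$ is unavoidable, consistent with the statement.)

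\textbf{Bounding $\bar G_k$.} Here I would use that $I(X:\Pi)\le|\Pi|\le \tfrac{r}{100}$. Messages $\pi$ with $|\Pi^{-1}(\pi)|<2^{0.98r}$ have total probability $2^{-\Omega(r)}$ and are discarded; for the rest, $X\mid\Pi=\pi$ has min-entropy $\ge 0.98r$. For such conditional distributions one argues that the average bias of a random $k$-subset of coordinates is essentially $(\tfrac12+o(1))^{k}$: building the subset one coordinate at a time, a freshly drawn uniform coordinate is, in expectation over its choice and over the message, still almost unbiased given the coordinates drawn so far, so the optimal guessing probability is multiplied by essentially $\tfrac12$ per step, the cumulative deviation from $\tfrac12$ over all $k$ steps being controlled by the mere $\tfrac{r}{100}$ bits of information $\Pi$ carries about $X$. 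Summing these per-step losses over all $k$ steps and all messages yields $\bar G_k\le c^{\,k}$ with $c$ bounded away from $1$, which combined with the displayed inequality completes the proof.

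\textbf{Main obstacle.} The hard part is precisely the bound $\bar G_k\le c^{\,k}$. Alice's message can encode arbitrary functions of up to $\tfrac{r}{100}$ of her coins — not merely reveal coin values — so one cannot simply say ``at most $\tfrac{r}{100}$ coordinates are known''; for instance, revealing parities creates correlations that inflate the guessing probability on certain subsets. The point that must be made rigorous is that such correlations are themselves expensive in information, so that, averaged over a uniformly random $k$-subset, the message's predictive power still decays geometrically in $k$. Carrying this out — carefully tracking how conditioning on the already-chosen coordinates can sharpen the conditional law of the next one, against a global budget of $\tfrac{r}{100}$ bits — is where the real work lies; the alternation structure, the permutation-randomness union bound, and the truncation of low-probability messages are all routine.
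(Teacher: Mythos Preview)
Your approach is substantially different from the paper's, and the difference is exactly at the point you flag as the ``main obstacle''.

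The paper's proof never touches anything like $\bar G_k$. It argues as follows: for a message $\pi$ with at least $\tfrac{1}{1000}2^{r-s}$ preimages ($s=r/100$), let $K$ be the set of Alice-edges lying in \emph{every} matching consistent with $\pi$. A counting argument gives $|K|\le s+10$. The paper then asserts that Bob's output can only use Alice-edges from $K$, so it suffices to bound the longest directed path in $K\cup E_B$. Since $\sigma$ is uniform and independent of $\pi$, the probability that a $\sigma$-orbit stays inside the $A$-indices of $K$ for $p$ consecutive steps is at most $\bigl(\tfrac{s+10}{r}\bigr)^{p-1}\le(1/50)^{p-1}$; a union bound over the $\le s+10$ starting points then gives $O(\log r)$ with probability $1-1/r$. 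The randomness of $X$ is not used at all in this step --- only the randomness of $\sigma$ against the small fixed set $K$.

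Your plan is in a sense more scrupulous: you do not assume Bob restricts himself to ``certain'' edges, and you explicitly allow gambling. That is a real issue the paper glosses over. But the price is your $\bar G_k\le c^k$ claim, and your sketch for it is not a proof. The chain-rule heuristic (``a freshly drawn coordinate is almost unbiased given the previous ones'') is precisely the delicate part: conditioning on earlier coordinates can sharply concentrate the posterior on later ones, and controlling the cumulative multiplicative bias over $k$ adaptively chosen coordinates from a mere $r/100$-bit budget is a genuine lemma you have not supplied. Note that the only pointwise bound available, $G_I(\pi)\le\min(1,2^{s+10-|I|})$, plugged into your displayed union bound yields path length $O(s)=O(r/100)$, not $O(\log r)$ --- so the geometric decay in $k$ really is needed and is the whole difficulty.

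If you want to close the gap, either (i) prove $\bar G_k\le c^k$ rigorously (this is a nontrivial statement about marginals of high-min-entropy distributions on the cube), or (ii) follow the paper's shortcut: show that for the purposes of the lemma one may restrict to the $\le s+10$ fully determined Alice-edges, after which only the randomness of $\sigma$ against a small fixed index set is needed and the $O(\log r)$ bound is immediate.
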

\begin{proof}
 Denote by $\mathcal{M}$ the set of possible directed input matchings $M$ for Alice in $\mathcal{D}_{\text{SLP}}(r)$. Then, $|\mathcal{M}| = 2^r$. Next, since every message sent by protocol $\Pi_{\text{SLP}}$ is of length at most $s := r / 100$, there are at most $2^s$ different messages. On average, a message therefore corresponds to $2^{r-s}$ inputs, and, at most $\frac{1}{1000} \cdot 2^r$ inputs yield messages that in turn only correspond to at most $\frac{1}{1000} 2^{r-s}$ inputs. Hence, at least $\frac{999}{1000} 2^r$ inputs yield messages that each correspond to at least $\frac{1}{1000} 2^{r-s}$ inputs. Consider now one of these messages $\pi$, and let $M_1, M_2, \dots$ be the matchings $M$ that correspond to $\pi$.
 
 Given $\pi$, the protocol can only output an edge if it is contained in all matchings $M_i$. Suppose that there are $k$ such edges. Then, there are at most $2^{r-k}$ input graphs that contain these edges. We then have:
 $$2^{r-k} \ge \frac{1}{1000} 2^{r-s} \ , $$
 which implies $k \le s + 10$. 
 
 Denote by $K$ this set of at most $k \le s+10$ edges. We will now prove that the longest path in $K \cup E_B$ is of length $O(\log r)$ with high probability. 

Denote by $A'$ the $A$-endpoints of the edges $K$, and let $a_0' \in A'$ be any vertex. We  bound the length of the path  with starting point $a_0'$. This path first uses the edge in $K$ incident on $a_0'$, and denote by $b_0'$ the other endpoint of this edge. Then, the path uses the edge of $N_1$ or $N_2$ incident on $b_0'$ to return to an $A$-vertex that we denote by $a_1'$. Observe that we can only continue on this path if $a_1' \in A'$. If this is the case then we can continue in the same fashion, visit a $B$-vertex $b_1'$, and return to another $A$-vertex $a_2'$. Again, we can only continue if $a_2' \in A'$, and so on. 

For any $i \ge 1$, the following holds:

$$\Pr [a_i' \in A' \ | \ a_0', \dots, a_{i-1}' \in A' ] \le \frac{s+10 - i}{r-i} \le \frac{s+10}{r} = \frac{1}{100} + \frac{10}{r} \le \frac{1}{50} \ , $$ 
 assuming that $r$ is large enough.
 Hence, the probability that the path contains $p+2$ $A'$-vertices and is thus of length $2(p+1)$ is at most $\frac{1}{50^p}$. This further implies that, with probability at least $1-\frac{1}{r^2}$, the path is of length $O(\log r)$. 
 
 Observe that the argument above applies when considering any start vertex in $A'$. Hence, by a union bound over all possibly start vertices in $A'$, all paths starting at an $A'$ vertex are of length at most $O(\log r)$ with probability at least $1- \frac{1}{r}$. Last, observe that the longest path that may be able to form could also start at a $B$-vertex. Such a path however is only by one edge longer than a path starting at an $A$-vertex. The $O(\log r)$ length bound therefore still holds.
 
 So far, we have proved that, for a message that corresponds to at least $\frac{1}{1000} 2^{r-s}$ inputs, Bob can only output a path of length at most $O(\log r)$ with high probability, where the probability is taken over Bob's input. Hence, overall, for a uniform input from $\mathcal{D}_{\text{SLP}}(r)$, the probability that a path of length $O(\log r)$ will be outputted is at least $\frac{999}{1000} \cdot (1-\frac{1}{r}) \ge \frac{998}{1000}$, where we assumed that $r$ is large enough.
 \end{proof}
We use the notation $\textsf{LP}^{\alpha}_{\epsilon}$ to denote  \textsf{LP}  with approximation factor $\alpha$ and error probability $\epsilon$.

\begin{theorem}\label{thm:cc-slp}
 The randomized one-way communication complexity of $\textsf{LP}^{O(r / \log r)}_{1/4}$ on inputs from $\mathcal{D}_{\text{SLP}}(r)$ is $\Omega(r)$. 
\end{theorem}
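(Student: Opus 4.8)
The plan is to deduce the randomized one-way communication lower bound from the two combinatorial facts already in hand: Lemma~\ref{lem:golomb}, which shows that a random instance $H\sim\mathcal{D}_{\text{SLP}}(r)$ typically has a longest path of length $\Theta(r)$, and Lemma~\ref{lem:det-alg}, which shows that a deterministic protocol communicating fewer than $r/100$ bits can only report a path of length $O(\log r)$ with high probability over $\mathcal{D}_{\text{SLP}}(r)$. The bridge is Yao's principle together with a union-bound contradiction. Concretely, suppose for contradiction that there is a randomized one-way protocol $\Pi$ solving $\textsf{LP}$ on $\mathcal{D}_{\text{SLP}}(r)$ with approximation factor $\alpha = c\cdot r/\log r$ (for a small absolute constant $c$ fixed below) and error probability at most $1/4$, whose communication cost is less than $r/100$. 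Averaging over the public and private randomness of $\Pi$ yields a fixing of the randomness giving a \emph{deterministic} one-way protocol $\Pi'$ whose worst-case message length is still below $r/100$ and which, on $H\sim\mathcal{D}_{\text{SLP}}(r)$, outputs a path of length at least $\lp(H)/\alpha$ with probability at least $3/4$.

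Next I would turn Lemma~\ref{lem:golomb} into a statement that holds with constant probability. Every instance satisfies $\lp(H)\le 2r$ (the contracted graph $H'$ has $2r$ vertices), so from $\Exp_H\lp(H)\ge 1.24\,r$ and
\[
1.24\,r \;\le\; \Exp_H \lp(H) \;\le\; \tfrac{r}{10}\cdot\Pr\!\big[\lp(H)<\tfrac{r}{10}\big] \;+\; 2r\cdot\Pr\!\big[\lp(H)\ge\tfrac{r}{10}\big]
\]
we get $\Pr_H[\lp(H)\ge r/10]\ge 1/2$. On any such instance, whenever $\Pi'$ is correct it must output a path of length at least $\lp(H)/\alpha \ge (r/10)/(c\,r/\log r) = \tfrac{1}{10c}\log r$. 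Let $C_0$ be the absolute constant hidden in the $O(\log r)$ bound of Lemma~\ref{lem:det-alg}; inspection of that proof shows $C_0$ (coming from the base-$50$ geometric tail and the $+1$ for a $B$-vertex start) is genuinely universal and does not depend on the protocol. Fixing $c$ small enough that $\tfrac{1}{10c}>C_0$, we conclude that whenever $\Pi'$ is correct and $\lp(H)\ge r/10$, its output path has length strictly more than $C_0\log r$.

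Finally I would combine the probabilities. By the union bound, the event ``$\Pi'$ is correct'' intersected with ``$\lp(H)\ge r/10$'' has probability at least $3/4 + 1/2 - 1 = 1/4$ over $\mathcal{D}_{\text{SLP}}(r)$, and on this event $\Pi'$ outputs a path longer than $C_0\log r$. But Lemma~\ref{lem:det-alg} applies to $\Pi'$ (its communication is below $r/100$) and bounds the probability of outputting a path longer than $C_0\log r$ by at most $1/500$. Since $1/4 > 1/500$, this is a contradiction, so no such $\Pi$ exists and the randomized one-way communication complexity of $\textsf{LP}^{c\,r/\log r}_{1/4}$ on $\mathcal{D}_{\text{SLP}}(r)$ is at least $r/100 = \Omega(r)$, which is the claimed $\textsf{LP}^{O(r/\log r)}_{1/4}$ bound.

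I expect the only real care to be in the bookkeeping of constants: one must verify that the additive constant and logarithm base swallowed by the $O(\log r)$ in Lemma~\ref{lem:det-alg} are universal (so that a single $c$ works for the approximation factor), and one must choose the ``$r/10$'' threshold — any fixed fraction below $1.24/2$ works — so that $\Pr[\lp(H)\ge r/10]$ is bounded below by a constant large enough that the union bound clears both the error probability $1/4$ and the $1/500$ slack from Lemma~\ref{lem:det-alg}; all of this is routine. Conceptually, the argument also explains why $\Theta(r/\log r)$ is the right approximation threshold here: a cheap protocol can always exhibit a path of length $\Theta(\log r)$, so no sharper approximation guarantee can be ruled out by this distribution.
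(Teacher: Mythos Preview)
Your proposal is correct and follows essentially the same approach as the paper: Yao's principle to pass to a deterministic protocol, Lemma~\ref{lem:golomb} plus a Markov-type argument to get $\lp(H)=\Omega(r)$ with constant probability, Lemma~\ref{lem:det-alg} to bound the deterministic output length by $O(\log r)$, and a union bound to reach a contradiction. The only differences are cosmetic---you use the threshold $r/10$ (yielding probability $\ge 1/2$) where the paper uses $r/2$ (yielding $\ge 1/3$), and you are somewhat more explicit than the paper about the communication assumption $<r/100$ needed to invoke Lemma~\ref{lem:det-alg} and about fixing the constant $c$ in the approximation factor.
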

\begin{proof}
 Towards a contradiction, let $\Pi_{\text{r}}$ be a randomized protocol for \textsf{Longest Path} on inputs from $\mathcal{D}_{\text{SLP}}(r)$ with approximation factor $o(r / \log r)$ that errs with probability at most $1/4$. Given $\Pi_{\text{r}}$, by Yao's Lemma, there exists a deterministic protocol $\Pi_{\text{d}}$ on distribution $\mathcal{D}_{\text{SLP}}(r)$ with distributional error $1/4$ and approximation factor $o(r /  \log r)$, i.e., on at least $3/4$ of the inputs, the protocol achieves a $o(r / \log r)$-approximation.
 
Lemma~\ref{lem:golomb} states that $\Exp_{H \gets \mathcal{D}_{\text{SLP}}(r)} \lp(H) \ge 1.24 r$. This allows us to bound the quantity $\Pr_{H \gets \mathcal{D}_{\text{SLP}}(r)}[\lp(H) \ge \frac{1}{2}r]$, as follows:
\begin{align*}
 1.24 r \le \mathop{\Exp}_{H \gets \mathcal{D}_{\text{SLP}}(r)} \lp(H)  & \le \Pr_{H \gets \mathcal{D}_{\text{SLP}}(r)}[\lp(H) \ge \frac{1}{2}r] \cdot 2r + (1 - \Pr_{H \gets \mathcal{D}_{\text{SLP}}(r)}[\lp(H) \ge \frac{1}{2}r])  \frac{1}{2} r \\
 & = r \cdot \left(1.5 \cdot \Pr_{H \gets \mathcal{D}_{\text{SLP}}(r)}[\lp(H) \ge \frac{1}{2}r] + \frac{1}{2}  \right) \ , 
\end{align*}
where we used the fact that the longest path in $H$ is at most the number of vertices in $H$, i.e., $2r$. The previous inequality then implies that $\Pr_{H \gets \mathcal{D}_{\text{SLP}}(r)}[\lp(H) \ge \frac{1}{2}r] \ge \frac{1.24 - 0.5}{1.5} \ge \frac{1}{3}$.

Next, Lemma~\ref{lem:det-alg} states that, with probability at least $1-\frac{1}{500}$, $\Pi_{\text{d}}$ outputs a path of length at most $O(\log r)$. Hence, with probability at least $1/3 - 1/500 > 1/4$, there simultaneously exists a longest path of length at least $\frac{1}{2} r$ and $\Pi_{\text{d}}$ outputs one of length $O(\log r)$. The approximation factor of $\Pi_{\text{d}}$ is therefore $\Omega(r / \log r)$, contradicting the fact that $\Pi_{d}$ achieves an approximation factor of $o(r / \log r)$ on at least $3/4$ of the instances. Protocols $\Pi_d$ and $\Pi_r$ therefore do not exist, which completes the proof.
\end{proof}

\begin{lemma}\label{lem:simple-comm}
Let $\Pi$ be a randomized protocol for $\textsf{LP}^{O(r / \log r)}_{1/4-\frac{1}{100}}$ on inputs from $\mathcal{D}_{\text{SLP}}(r)$. Then:
 $$\ICost_{\mathcal{D}_{\text{SLP}}(r)}(\Pi) = \Omega(r) \ . $$
\end{lemma}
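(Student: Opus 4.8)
The plan is to argue by contradiction, combining the communication lower bound of Theorem~\ref{thm:cc-slp} with the message compression theorem, Theorem~\ref{thm:compression}. Suppose some randomized one-way protocol $\Pi$ solves $\textsf{LP}^{O(r/\log r)}_{1/4 - \frac{1}{100}}$ on inputs from $\mathcal{D}_{\text{SLP}}(r)$ with $\ICost_{\mathcal{D}_{\text{SLP}}(r)}(\Pi) = I = o(r)$; by definition this means $I_{\mathcal{D}_{\text{SLP}}(r)}(X_A : \Pi \mid R) = I$. The goal is to compress $\Pi$ into a protocol with $o(r)$ worst-case communication and error still below $1/4$, contradicting Theorem~\ref{thm:cc-slp}.

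First I would eliminate the public randomness. By property \textbf{P2}, $I = \Exp_{\rho \gets R}\, I_{\mathcal{D}}(X_A : \Pi \mid R = \rho)$, so by Markov's inequality the event $I_{\mathcal{D}}(X_A : \Pi \mid R = \rho) > 100 I$ has probability less than $\tfrac{1}{100}$. Conditioned on its complement, the expected error probability of $\Pi$ is at most $(\tfrac14 - \tfrac{1}{100})/(\tfrac{99}{100}) = \tfrac{8}{33} < \tfrac14$. Hence there is a fixed value $\rho^\ast$ of the public randomness such that the protocol $\Pi'$ obtained from $\Pi$ by hard-wiring $R = \rho^\ast$ uses only private randomness, errs with probability at most $\tfrac{8}{33}$ on $\mathcal{D}_{\text{SLP}}(r)$, and satisfies $I_{\mathcal{D}}(X_A : \Pi') = I_{\mathcal{D}}(X_A : \Pi \mid R = \rho^\ast) \le 100 I = o(r)$.

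Next I would apply Theorem~\ref{thm:compression} to $\Pi'$, obtaining a protocol $\Pi''$ that faithfully simulates $\Pi'$ — so it still errs with probability at most $\tfrac{8}{33}$ — and whose expected message length is at most $I_{\mathcal{D}}(X_A : \Pi') + 2\log(1 + I_{\mathcal{D}}(X_A : \Pi')) + O(1) = o(r) + O(\log r) + O(1) = o(r)$. To convert this expected bound into a worst-case one, abort whenever the message of $\Pi''$ would exceed $200$ times its expected length; by Markov this occurs with probability at most $\tfrac{1}{200}$ and increases the error by at most $\tfrac{1}{200}$. The resulting protocol is a randomized one-way protocol for $\textsf{LP}^{O(r/\log r)}$ on $\mathcal{D}_{\text{SLP}}(r)$ with worst-case communication $o(r)$ and error at most $\tfrac{8}{33} + \tfrac{1}{200} < \tfrac14$, contradicting Theorem~\ref{thm:cc-slp}. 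Therefore $\ICost_{\mathcal{D}_{\text{SLP}}(r)}(\Pi) = \Omega(r)$.

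The main obstacle is purely the bookkeeping of the constant-factor slack in the error probability: fixing the public randomness to a good $\rho^\ast$, the $O(\log r)$ overhead in message compression, and the Markov-based truncation each erode the success probability a little, and one must verify that all these losses together still fit inside the $\tfrac{1}{100}$ gap between the error $\tfrac14 - \tfrac{1}{100}$ permitted by the lemma and the error $\tfrac14$ covered by Theorem~\ref{thm:cc-slp}. A secondary point worth noting is that message compression reintroduces public randomness, but this is harmless since Theorem~\ref{thm:cc-slp} is stated for randomized (public-coin) protocols.
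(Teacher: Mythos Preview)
Your proposal is correct and follows the same high-level strategy as the paper: compress $\Pi$ via Theorem~\ref{thm:compression}, truncate the message by Markov to obtain worst-case communication $o(r)$, and contradict Theorem~\ref{thm:cc-slp}. The one noteworthy difference is in how the public randomness $R$ is handled. You first fix $R=\rho^\ast$ via an averaging/Markov argument and only then compress, which costs an extra Markov step and forces the $\tfrac{8}{33}+\tfrac{1}{200}<\tfrac14$ bookkeeping you flag as the main obstacle. The paper instead compresses directly against the \emph{unconditioned} information $I_{\mathcal{D}}(M:\Pi)$, truncates once at $100s$ so the error rises by exactly $\tfrac{1}{100}$ to land on $\tfrac14$, and only at the end uses property \textbf{P1} (independence of $M$ and $R$) to pass from $I_{\mathcal{D}}(M:\Pi)$ up to $I_{\mathcal{D}}(M:\Pi\mid R)=\ICost(\Pi)$. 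This avoids the public-randomness fixing entirely and makes the constants work out with no slack to spare; your route reaches the same conclusion with slightly more effort.
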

\begin{proof}
 Denote by $\Pi$ a randomized one-way two-party communication protocol for $\textsf{LP}^{O(r / \log r)}_{1/4-\frac{1}{100}}$ on inputs from $\mathcal{D}_{\text{SLP}}(r)$. Given $\Pi$, using the message compression technique stated in Theorem~\ref{thm:compression}, we obtain a protocol $\Pi'$ for $\textsf{LP}^{O(r / \log r)}_{1/4-\frac{1}{100}}$ that sends a message of expected size (recall that $M$ is Alice's input matching)
\begin{align}
s := I_{\mathcal{D}_{\text{SLP}}(r)}(M \ : \ \Pi) + 2 \cdot \log \left(1 + I_{\mathcal{D}_{\text{SLP}}(r)}(M \ : \ \Pi)  \right) + O(1) \ ,  \label{eqn:392}
\end{align}
 where the expectation is taken over the randomness used by the protocol. Then, by the Markov inequality, the probability that the message is of size at least $100 \cdot s$ is at most $\frac{1}{100}$. 

Given $\Pi'$, we now construct a protocol $\Pi''$ with maximum message size $100 \cdot s$ that solves \textsf{LP} with slightly increased error: Whenever $\Pi'$ sends a message of size at most $100 \cdot s$, then $\Pi''$ also sends this message and the protocol behaves exactly the same as $\Pi'$. However, when $\Pi'$ would send a message of size at least $100 \cdot s$ then $\Pi''$ aborts and thus fails. Since the probability of sending a message of size larger than $100 \cdot s$ is $1/100$, the error probability of $\Pi''$ is $(\frac{1}{4} - \frac{1}{100}) + \frac{1}{100} = \frac{1}{4}$.

From Theorem~\ref{thm:cc-slp}, we obtain that the communication cost of $\Pi''$ is $\Omega(r)$, which implies that $100 \cdot s = \Omega(r)$. Hence, combined with Inequality~\ref{eqn:392}, we conclude that
\begin{align*}
  \Omega(r) = I_{\mathcal{D}_{\text{SLP}}(r)}(M \ : \ \Pi) \le  I_{\mathcal{D}_{\text{SLP}}(r)}(M \ : \ \Pi \ | \ R) = \ICost_{\mathcal{D}_{\text{SLP}}(r)}(\Pi) \ ,
\end{align*}
where the inequality follows from property \textbf{P1} stated in the preliminaries.

\end{proof}
\subsection{Hard Input Distribution and Direct Sum Argument} \label{sec:complex}
Our hard input distribution $\mathcal{D}_{\text{LP}}(n)$ that is used to obtain our main lower bound result is stated in Figure~\ref{fig:dist-lp}.



\begin{figure}[h]
\begin{center} 
\fbox{
\begin{minipage}{0.96\textwidth}
\textbf{Input Distribution $\mathcal{D}_{\text{LP}}(n)$:}

\vspace{0.2cm}

 Let $H=(A^H, B^H, E^H)$ be an $(r,t)$-RS graphs with $|A^H| = |B^H| = n$ and induced matchings $M^H_1, M^H_2, \dots, M^H_t$.
 
 \vspace{0.3cm}
 
 \textbf{Alice's Input:} Edge set $E_A$
 
 \vspace{0.1cm}

   Given $H=(A^H, B^H, E^H)$, we construct the directed bipartite graph $G = (A, B = B_1 \ \dot{\cup} \ B_2, E)$, where:
   
  \begin{itemize}
   \item $A$ is a copy of $A^H$, and $B_1$ and $B_2$ are copies of $B^H$.
   \item For every edge $e = \{a,b\} \in E^H$, we flip an unbiased coin $X_{a,b}$. If it comes out heads then the directed edge $(a,b)$ between the sets $A$ and $B_1$ is introduced, and if it comes out tail then the directed edge $(a,b)$ between the sets $A$ and $B_2$ is introduced.
  \end{itemize}
  
  \vspace{0.05cm}
  We denote the edges in $E$ that originated from the matching $M^H_i$, for any $i$, by $M_i$. Alice holds the edges $E_A = \cup_i M_i$.

   \vspace{0.3cm}
 
 \textbf{Bob's Input:} Edge set $E_B$
 \begin{itemize}
  \item Let $J \in [t]$ be a uniform random index.
  \item Consider the matching $M^H_J \in E_H$ and let $A_H' = A(M^H_J)$ and $B_H' = B(M^H_J)$. Let $N_J$ be a uniform random matching between $A'$ and $B'$. 
  \item Bob introduces two copies of $N_J$ into $G$: To this end, let $A' \subseteq A$ be the copy of $A'_H$ in $G$, let $B'_1 \subseteq B_1$ be the copy of $B'$ in $B_1$, and let $B'_2 \subseteq B_2$ be the copy of $B'$ in $B_2$. The first copy $N_J^1$ is introduced between $A'$ and $B'_1$, and the second copy $N_J^2$ is introduced between $A'$ and $B'_2$. The edges in $N_J^1$ and $N_J^2$ are directed such that the $A$-vertex constitutes the head and the $B$-vertex the tail of the directed edge.
 \end{itemize}  
 \vspace{0.05cm}
 Bob holds the edges $E_B = N_J^1 \cup N_J^2$.
\end{minipage}
} \caption{Input Distribution $\mathcal{D}_{\text{LP}}(n)$ \label{fig:dist-lp}} 
\end{center}
\end{figure}

We will now argue that a protocol $\Pi_{\text{LP}}(n)$ that solves $\text{LP}$ under distribution $\mathcal{D}_{\text{LP}}(n)$ can be used to obtain a protocol $\Pi_{\text{SLP}}(r)$ that solves $\text{LP}$ under distribution $\mathcal{D}_{\text{SLP}}(r)$, where $r$ is the size of the induced matchings of the RS-graph used in $\mathcal{D}_{\text{LP}}(n)$. This establishes a connection between the information costs of $\Pi_{\text{LP}}(n)$ and $\Pi_{\text{SLP}}(r)$. To this end, we use the reduction stated in Algorithm~\ref{alg:direct-sum}.

\begin{algorithm}[h]
 \begin{algorithmic}[1]
  \REQUIRE $ $ \\ \vspace{-0.2cm} \begin{enumerate}
            \item Protocol $\Pi_{\text{LP}}(n)$ that solves $\textsf{LP}^{\alpha}_{1/4}$ under distribution $\mathcal{D}_{\text{LP}}(n)$ \vspace{-0.2cm}
            \item Input $(M, N^1, N^2) \sim \mathcal{D}_{\text{SLP}}(r)$, Alice holds matching $M$, Bob holds matchings $N^1, N^2$
           \end{enumerate}
           \vspace{-0.2cm}
  \STATE Alice and Bob use public randomness to generate a uniform random index $J \in [t]$
  \STATE Alice samples graph $G \sim \mathcal{D}_{\text{LP}}(n)$ and updates matching $M_J$ in $G$ such that $M_J = M$
  \STATE Bob set $N_J^1 = N^1$ and $N_J^2 = N^2$ and adds these to $G$
  \STATE Alice and Bob run the protocol $\Pi_{\text{LP}}(n)$ on $G$, denote by $\mathcal{P}$ the output of $\Pi_{\text{LP}}(n)$
  \RETURN Longest sub-path of $\mathcal{P}$ that solely uses edges in $M_J \cup N_J^1 \cup N_J^2$
 \end{algorithmic}
 \caption{Construction of protocol $\Pi_{\text{SLP}}(r)$ \label{alg:direct-sum}}
\end{algorithm}

This reduction has the following properties:

\begin{lemma}\label{lem:struct}
 Consider the reduction given in Algorithm~\ref{alg:direct-sum}. Then, given any path $\mathcal{P}$ in $G$ of length at least $2$, the path $\mathcal{P}'$ with the first and last edge removed only uses edges from $M_J \cup N_J^1 \cup N_J^2$. 
\end{lemma}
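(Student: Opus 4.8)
The plan is to exploit the edge orientations in $G$. By construction, every edge that originates from the RS-graph $H$ --- that is, every edge of $\cup_i M_i$, which is exactly Alice's edge set (including the matching $M_J$, whose coin flips are overwritten in the reduction) --- is directed from $A$ to $B = B_1 \cup B_2$, whereas both of Bob's matchings $N_J^1$ and $N_J^2$ are directed from $B$ to $A$. Since $G$ is bipartite with parts $A$ and $B$, it follows that in \emph{any} directed path in $G$, every $B \to A$ edge is one of Bob's edges and every $A \to B$ edge is one of Alice's edges, and consecutive edges alternate between the two types. I would also record the two facts about Bob's edges that I will use: on the $A$-side they touch only $A' := A(M_J^H)$ (its copy in $A$), and on the $B$-side they touch only $B_1' \cup B_2'$, the two copies of $B' := B(M_J^H)$.

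Next I would set up notation by writing $\mathcal{P} = (v_0, v_1, \dots, v_\ell)$ with $\ell \ge 2$ and edges $e_i = (v_{i-1}, v_i)$, so that $\mathcal{P}'$ consists precisely of $e_2, \dots, e_{\ell - 1}$. Fix an edge $e_i$ of $\mathcal{P}'$. If $e_i$ is one of Bob's edges, then $e_i \in N_J^1 \cup N_J^2$ and we are done, so assume $e_i$ is one of Alice's edges, i.e.\ $v_{i-1} \in A$ and $v_i \in B$. Because $e_i$ is not the first edge of $\mathcal{P}$, its predecessor $e_{i-1}$ exists and ends at $v_{i-1} \in A$; being a $B \to A$ edge it is one of Bob's edges, which forces $v_{i-1} \in A'$. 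Symmetrically, because $e_i$ is not the last edge, its successor $e_{i+1}$ exists and starts at $v_i \in B$, is therefore one of Bob's edges, and forces $v_i \in B_1' \cup B_2'$. (This is exactly the point where both the first and the last edge must be discarded: the first and last edges of $\mathcal{P}$ lack a predecessor or successor and could well be Alice-edges from some $M_i$ with $i \ne J$.)

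The crucial step is then to invoke that $M_J^H$ is an \emph{induced} matching in $H$. The Alice-edge $e_i$ originated from a unique edge $\{a, \bar b\} \in E^H$, where $a \in A^H$ is the preimage of $v_{i-1} \in A'$ and $\bar b \in B^H$ is the preimage of $v_i \in B_1' \cup B_2'$; thus $a \in A(M_J^H)$ and $\bar b \in B(M_J^H)$. Hence $\{a, \bar b\}$ is an edge of the induced subgraph $H[V(M_J^H)]$, whose edge set is precisely $M_J^H$ by definition of an induced matching, so $\{a, \bar b\} \in M_J^H$ and therefore $e_i \in M_J$. Combining the two cases, every edge of $\mathcal{P}'$ lies in $M_J \cup N_J^1 \cup N_J^2$, as claimed. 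I do not anticipate a substantive obstacle; the only things to handle carefully are (i) double-checking that the orientations are exactly as stated in the description of $\mathcal{D}_{\text{LP}}(n)$ and Algorithm~\ref{alg:direct-sum}, so that the alternation and the ``every $B \to A$ edge is Bob's'' claim are valid, and (ii) making the preimage/copy bookkeeping between $H$ and $G$ explicit so that $v_{i-1} \in A'$ and $v_i \in B_1' \cup B_2'$ translate correctly into membership in $A(M_J^H)$ and $B(M_J^H)$.
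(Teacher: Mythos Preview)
Your proposal is correct and follows essentially the same approach as the paper: both arguments use that the only $B\to A$ edges are Bob's, so every interior $A\to B$ edge is sandwiched between two edges of $N_J^1\cup N_J^2$, forcing its endpoints into $A'$ and $B_1'\cup B_2'$, whence the induced-matching property of $M_J^H$ identifies it as an $M_J$-edge. Your write-up is more explicit about invoking the induced property via $H[V(M_J^H)]=M_J^H$, whereas the paper states the equivalent fact directly (``the edges $M_J$ are the only edges with both endpoints in $A(N_J^1\cup N_J^2)$ and $B(N_J^1\cup N_J^2)$''), but the underlying argument is the same.
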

\begin{proof}
 Let $\mathcal{P}$ be a path of length at least $2$ in $G$. Since $G$ is a bipartite graph, every other edge in $\mathcal{P}$ must be an edge from $N_J^1 \cup N_J^2$ since these edges are the only edges with head in $A$. Observe that the edges $M_J$ are the only edges with both endpoints in $A(N_J^1 \cup N_J^2)$ and $B(N_J^1 \cup N_J^2)$. Hence, for any three consecutive edges $e,f,g$ in path $\mathcal{P}$, if $e$ and $g$ are edges from $N_J^1 \cup N_J^2$, then $f$ must be an edge from $M_J$. 
 
 These conditions imply that if an edge that is not contained in $M_J \cup N_J^1 \cup N_J^2$ is included in $\mathcal{P}$ then this edge must necessarily be either the first or the last edge of $\mathcal{P}$.
\end{proof}

Given our reduction, we now relate the information costs of $\Pi_{\text{SLP}}(r)$ and $\Pi_{\text{LP}}(n)$: 

\begin{lemma}\label{lem:direct-sum}
Let $\Pi_{\text{LP}}(n)$ be a protocol for $\textsf{LP}^\alpha_{\epsilon}$, for some parameters $\alpha$ and $\epsilon$, and let $\Pi_{\text{SLP}}(r)$ be the protocol obtained from $\Pi_{\text{LP}}(n)$ via the reduction given in Algorithm~\ref{alg:direct-sum}. Then, $\Pi_{\text{SLP}}(r)$ has approximation factor $O(\alpha)$, errs with the same probability $\epsilon$, and:
 \begin{align*}
\ICost_{\mathcal{D}_{\text{LP}}(n)}(\Pi_{\text{LP}}(n)) = t \cdot \ICost_{\mathcal{D}_{\text{SLP}}(r)}(\Pi_{\text{SLP}}(r)) \ .  
 \end{align*}
\end{lemma}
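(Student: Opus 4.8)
The plan is to prove the two claims about $\Pi_{\text{SLP}}(r)$ separately: first the correctness (approximation factor and error probability), then the information-cost identity, which constitutes a direct-sum argument.

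\textbf{Correctness of $\Pi_{\text{SLP}}(r)$.} First I would observe that the reduction in Algorithm~\ref{alg:direct-sum} produces, from an input $(M, N^1, N^2) \sim \mathcal{D}_{\text{SLP}}(r)$, a graph $G$ that is distributed exactly according to $\mathcal{D}_{\text{LP}}(n)$: the public-randomness index $J$ plays the role of Bob's secret index, Alice re-samples all the other matchings $M_i$ for $i \neq J$ internally and overwrites $M_J$ with $M$, and Bob plants the two copies of $N^1, N^2$ as $N_J^1, N_J^2$. Since $M$ is a uniform random matching of the prescribed form and $N^1, N^2$ are distributed as in $\mathcal{D}_{\text{SLP}}(r)$, the resulting $G$ has the correct marginal distribution. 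Consequently $\Pi_{\text{LP}}(n)$ run on $G$ succeeds with probability $1-\epsilon$ of returning an $\alpha$-approximate longest path $\mathcal{P}$ in $G$. Next I would use Lemma~\ref{lem:struct}: removing the first and last edge of $\mathcal{P}$ yields a path $\mathcal{P}'$ using only edges of $M_J \cup N_J^1 \cup N_J^2$, which is exactly the edge set of the $\mathcal{D}_{\text{SLP}}(r)$ instance (after identifying $M_J$ with $M$). This $\mathcal{P}'$ has length at least $\lp_G - 2$. It then remains to compare $\lp_G$ with $\lp$ of the simple instance $H = M \cup N^1 \cup N^2$: since $M_J \cup N_J^1 \cup N_J^2 \subseteq E(G)$ and this subgraph is isomorphic to $H$, we have $\lp_G \ge \lp(H)$, and by Lemma~\ref{lem:golomb} $\Exp \lp(H) = \Omega(r)$, so (as argued in the proof of Theorem~\ref{thm:cc-slp}) $\lp(H) \ge r/2$ with constant probability. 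Combining, $\mathcal{P}'$ has length at least $\lp(H)/\alpha - 2 = \Omega(\lp(H)/\alpha)$, so $\Pi_{\text{SLP}}(r)$ is an $O(\alpha)$-approximation, and it errs only when $\Pi_{\text{LP}}(n)$ errs, hence with probability $\epsilon$.

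\textbf{The information-cost identity.} This is the direct-sum step and I expect it to be the main obstacle. The key point is that in the reduction, Alice's input to $\Pi_{\text{LP}}(n)$ is $E_A = \bigcup_i M_i$ where $M_J = M$ is the "real" input and the matchings $M_i$ for $i \neq J$ are generated by Alice using fresh private randomness (together with the public index $J$). The message $\Pi_{\text{SLP}}$ is literally the message $\Pi_{\text{LP}}$, and the public randomness of $\Pi_{\text{SLP}}$ is the pair $(R, J)$ where $R$ is the public randomness of $\Pi_{\text{LP}}$. So
\[
\ICost_{\mathcal{D}_{\text{SLP}}(r)}(\Pi_{\text{SLP}}) = I(M : \Pi_{\text{LP}} \mid R, J) = I(M_J : \Pi_{\text{LP}} \mid R, J),
\]
and by averaging over the uniform $J$ (property \textbf{P2}) this equals $\frac{1}{t}\sum_{j=1}^t I(M_j : \Pi_{\text{LP}} \mid R, J = j)$. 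Since $J$ is independent of the $M_i$'s and of $R$, property \textbf{P3} lets us drop the conditioning on the event $\{J = j\}$ (the index only tells Alice which matching to overwrite, and all matchings are identically distributed), giving $\frac{1}{t}\sum_j I(M_j : \Pi_{\text{LP}} \mid R)$. Finally, because the matchings $M_1, \dots, M_t$ are mutually independent given $R$, the chain rule \textbf{P4} collapses the sum: $\sum_j I(M_j : \Pi_{\text{LP}} \mid R) = I(M_1, \dots, M_t : \Pi_{\text{LP}} \mid R) = I(E_A : \Pi_{\text{LP}} \mid R) = \ICost_{\mathcal{D}_{\text{LP}}(n)}(\Pi_{\text{LP}})$. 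Multiplying through by $t$ yields the claimed identity $\ICost_{\mathcal{D}_{\text{LP}}(n)}(\Pi_{\text{LP}}) = t \cdot \ICost_{\mathcal{D}_{\text{SLP}}(r)}(\Pi_{\text{SLP}})$.

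The delicate point — and where I would spend the most care — is justifying the use of \textbf{P4} to split $I(M_1,\dots,M_t : \Pi \mid R)$ into $\sum_j I(M_j : \Pi \mid R)$: the naive chain rule gives $\sum_j I(M_j : \Pi \mid R, M_{<j})$, and to replace $I(M_j : \Pi \mid R, M_{<j})$ with $I(M_j : \Pi \mid R)$ one needs that $M_j$ is independent of $M_{<j}$ given $R$ (true by construction, since the matchings are drawn independently) and then invokes \textbf{P1} in the direction that conditioning on an independent variable does not decrease mutual information — but here we actually need the reverse inequality as well, i.e. that conditioning on $M_{<j}$ does not help $\Pi$ learn about $M_j$ beyond what $R$ already provides. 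Since $M_{<j}$, $M_j$ are independent given $R$ and $\Pi$ is a function of $(E_A, R, \text{private coins})$, one direction of \textbf{P1} gives $I(M_j : \Pi \mid R) \le I(M_j : \Pi \mid R, M_{<j})$; for the matching lower bound on $\ICost_{\mathcal{D}_{\text{LP}}}$ we only need this direction, so in fact the identity should be read carefully — I would state it as the inequality $\ICost_{\mathcal{D}_{\text{LP}}(n)}(\Pi_{\text{LP}}) \ge t \cdot \ICost_{\mathcal{D}_{\text{SLP}}(r)}(\Pi_{\text{SLP}})$ if equality turns out to require an extra independence argument, since only the lower bound is needed downstream. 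I would double-check against the precise construction whether the matchings being overwritten/regenerated makes $\Pi_{\text{SLP}}$'s message depend on $J$ in a way that forces us to keep $J$ in the conditioning throughout; in that case the whole computation goes through verbatim with $R$ replaced by $(R, J)$ on both sides.
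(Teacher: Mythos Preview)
Your approach is essentially identical to the paper's: the correctness part uses Lemma~\ref{lem:struct} and $\lp(H)\le\lp(G)$ exactly as the paper does, and your information-cost chain (expand over $J$ via \textbf{P2}, drop $J$ via \textbf{P3}, then \textbf{P1} followed by the chain rule \textbf{P4}) reproduces the paper's derivation line by line. Two small remarks: the sentence ``$\mathcal{P}'$ has length at least $\lp_G - 2$'' should read $|\mathcal{P}|-2 \ge \lp(G)/\alpha - 2$ (you use the correct bound two lines later, so this is just a slip); and your suspicion that the stated equality is really only proved as the inequality $\ICost_{\mathcal{D}_{\text{LP}}}\ge t\cdot\ICost_{\mathcal{D}_{\text{SLP}}}$ is exactly right --- the paper's own proof also only establishes this direction (via \textbf{P1}), which is all that is needed downstream.
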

\begin{proof}
We denote by $R_{\text{SLP}}$ and $R_{\text{LP}}$ the public randomness used in $\Pi_{\text{SLP}}(r)$ and in $\Pi_{\text{LP}}(n)$, respectively. Then (see the rules \textbf{P1}, \dots, \textbf{P4} in the preliminaries):
 \begin{align*} \displaystyle
  \ICost_{\mathcal{D}_{\text{SLP}}(r)}(\Pi_{\text{SLP}}(r)) & = I_{\mathcal{D}_{\text{SLP}}(r)}(M \ : \ \Pi_{\text{SLP}}(r) \ | \ R_{\text{SLP}}) \\
  & = I_{\mathcal{D}_{\text{SLP}}(r)}(M_J \ : \ \Pi_{\text{LP}(n)} \ | \ R_{\text{LP}}, J) \\
  & = I_{\mathcal{D}_{\text{LP}}(n)}(M_J \ : \ \Pi_{\text{LP}(n)} \ | \ R_{\text{LP}}, J) \\
  & = \Exp_{j \gets J} I_{\mathcal{D}_{\text{LP}}(n)}(M_J \ : \ \Pi_{\text{LP}(n)} \ | \ R_{\text{LP}}, J=j) & \textbf{P2} \\
  & = \frac{1}{t} \cdot \sum_{j \in [t]} I_{\mathcal{D}_{\text{LP}}(n)}(M_j \ : \ \Pi_{\text{LP}(n)} \ | \ R_{\text{LP}}, J=j) \\
  & = \frac{1}{t} \cdot \sum_{j \in [t]} I_{\mathcal{D}_{\text{LP}}(n)}(M_j \ : \ \Pi_{\text{LP}(n)} \ | \ R_{\text{LP}}) & \textbf{P3}  \\
  & \le \frac{1}{t} \cdot \sum_{j \in [t]} I_{\mathcal{D}_{\text{LP}}(n)}(M_j \ : \ \Pi_{\text{LP}(n)} \ | \ M_1, \dots, M_{j-1}, R_{\text{LP}}) & \textbf{P1} \\  
  & = \frac{1}{t} \cdot I_{\mathcal{D}_{\text{LP}}(n)}(M_1, \dots, M_t \ : \ \Pi_{\text{LP}(n)} \ | R_{\text{LP}})  &  \textbf{P4}\\    
  & = \ICost_{\mathcal{D}_{\text{LP}}(n)}(\Pi_{\text{LP}}(n)) \ . 
 \end{align*}
 Regarding the approximation factor, observe that $\lp(H) \le \lp(G)$.  
 Furthermore, by Lemma~\ref{lem:struct}, the path found by $\Pi_{\text{SLP}}(r)$ is at most by $2$ shorter than the path found by $\Pi_{\text{LP}}(n)$, which in turn is of length at least $\lp(G) / \alpha$. Hence, the approximation factor of $\Pi_{\text{SLP}}(r)$ is bounded by:
 \begin{align*}
  \frac{\lp(H)}{\lp(G) / \alpha -2} \le \frac{\lp(G)}{\lp(G) / \alpha -2} = O(\alpha) \ .
 \end{align*}
\end{proof}
We are now ready to state our main lower bound result for directed graphs.
\begin{theorem}
 The randomized one-way communication complexity of $\textsf{LP}^{O(r / \log r)}_{1/4-\frac{1}{500}}(n)$ is $\Omega(r \cdot t)$. 
\end{theorem}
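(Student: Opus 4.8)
The plan is to assemble the ingredients already established in this section: the single-coordinate information lower bound (Lemma~\ref{lem:simple-comm}), the embedding/direct-sum reduction (Algorithm~\ref{alg:direct-sum} together with Lemmas~\ref{lem:struct} and~\ref{lem:direct-sum}), and the standard fact that the information cost of a protocol lower-bounds its communication cost. I would fix an arbitrary randomized one-way protocol $\Pi_{\text{LP}}(n)$ for $\textsf{LP}^{O(r/\log r)}$ on $\mathcal{D}_{\text{LP}}(n)$ whose error is the one in the statement, and show that its communication cost is $\Omega(r t)$; minimizing over all such protocols then yields the claimed communication complexity. (Lemma~\ref{lem:simple-comm} is stated for error $\frac14-\frac{1}{100}$; to accommodate the error $\frac14-\frac{1}{500}$ of the statement it suffices to rerun the Markov step in its proof with threshold $500 \cdot s$ in place of $100 \cdot s$, which changes only constants.)

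First I would feed $\Pi_{\text{LP}}(n)$ into the reduction of Algorithm~\ref{alg:direct-sum}, obtaining a protocol $\Pi_{\text{SLP}}(r)$ on the simple distribution $\mathcal{D}_{\text{SLP}}(r)$, where $r$ is the induced-matching size of the RS-graph plugged into $\mathcal{D}_{\text{LP}}(n)$. By Lemma~\ref{lem:direct-sum} (whose approximation guarantee uses Lemma~\ref{lem:struct} and $\lp(H)\le\lp(G)$), this $\Pi_{\text{SLP}}(r)$ has approximation factor $O(r/\log r)$ and the same error as $\Pi_{\text{LP}}(n)$, hence it is a legitimate protocol for $\textsf{LP}^{O(r/\log r)}_{1/4-1/100}$ on $\mathcal{D}_{\text{SLP}}(r)$. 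Lemma~\ref{lem:simple-comm} then gives $\ICost_{\mathcal{D}_{\text{SLP}}(r)}(\Pi_{\text{SLP}}(r)) = \Omega(r)$.

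Next I would invoke the information-cost relation of Lemma~\ref{lem:direct-sum}, namely $\ICost_{\mathcal{D}_{\text{LP}}(n)}(\Pi_{\text{LP}}(n)) = t \cdot \ICost_{\mathcal{D}_{\text{SLP}}(r)}(\Pi_{\text{SLP}}(r))$ (only the direction $\ge$ is needed), to conclude $\ICost_{\mathcal{D}_{\text{LP}}(n)}(\Pi_{\text{LP}}(n)) = \Omega(r t)$. Since information cost is a lower bound on communication cost, the communication cost of $\Pi_{\text{LP}}(n)$ is $\Omega(r t)$ as well, and since $\Pi_{\text{LP}}(n)$ was an arbitrary protocol solving the problem, the randomized one-way communication complexity of $\textsf{LP}^{O(r/\log r)}_{1/4-1/500}(n)$ is $\Omega(r t)$.

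I do not anticipate a genuine obstacle here: all the technical content (the Golomb-Dickman estimate of Lemma~\ref{lem:golomb}, the counting-and-union-bound argument bounding paths in $K \cup E_B$ in Lemma~\ref{lem:det-alg}, the message-compression step of Lemma~\ref{lem:simple-comm}, and the conditioning manipulations behind the direct sum) is already in place, so the theorem is essentially a bookkeeping assembly. The only points that require care are (i) choosing the hidden $O(\cdot)$ constants so that the constant-factor blow-up of the approximation ratio in Lemma~\ref{lem:direct-sum} still lands inside the regime covered by Lemma~\ref{lem:simple-comm}, and (ii) keeping the error probability inside the window demanded by Lemma~\ref{lem:simple-comm} (or widening that window via the Markov-threshold tweak noted above).
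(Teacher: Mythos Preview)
Your proposal is correct and follows essentially the same route as the paper: take an arbitrary protocol $\Pi_{\text{LP}}(n)$, apply the direct-sum reduction of Lemma~\ref{lem:direct-sum} to obtain $\Pi_{\text{SLP}}(r)$, invoke Lemma~\ref{lem:simple-comm} for the $\Omega(r)$ single-coordinate bound, multiply by $t$, and pass from information cost to communication cost. You are in fact slightly more careful than the paper about the error-probability bookkeeping (the $\tfrac{1}{4}-\tfrac{1}{500}$ versus $\tfrac{1}{4}-\tfrac{1}{100}$ discrepancy and the Markov-threshold fix), which the paper glosses over.
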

\begin{proof}
Let $\Pi_{\text{LP}}(n)$ be a protocol for $\textsf{LP}^{O(r / \log r)}_{1/4-\frac{1}{500}}(n)$. Then, from Lemmas~\ref{lem:direct-sum} and \ref{lem:simple-comm}, we obtain 
$$\ICost_{\mathcal{D}_{\text{LP}}(n)}(\Pi_{\text{LP}}(n)) = \Omega(r \cdot t) \ .$$
Since the choice of $\Pi_{\text{LP}}(n)$ was arbitrary, and information complexity is a lower bound on communication complexity, the result follows.
\end{proof}

Last using the well-known connection between streaming algorithms and one-way two-party communication protocols as well as the RS-graph construction by Alon et al. as stated in Theorem~\ref{thm:rs-graphs}, we obtain the following theorem:

\setcounter{thmsaved}{\value{theorem}}
\setcounter{theorem}{\value{counterLB-directed}}
\addtocounter{theorem}{-1}

\begin{theorem}
 Every one-pass streaming algorithms for $\textsf{LP}^{O(n^{1-o(1)} / \log(n^{1-o(1)})}_{1/4-\frac{1}{500}}(n) = \textsf{LP}^{O(n^{1-o(1)} )}_{1/4-\frac{1}{500}}(n)$ requires space $\Omega(n^2)$. 
\end{theorem}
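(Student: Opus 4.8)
The plan is to combine the direct-sum lower bound just established---namely that the randomized one-way communication complexity of $\textsf{LP}^{O(r/\log r)}_{1/4-1/500}(n)$ is $\Omega(r\cdot t)$---with the standard streaming-to-communication reduction and an explicit choice of RS-graph. First I would recall that any $p$-pass streaming algorithm using space $s$ yields a one-way two-party protocol of communication cost $O(s)$ for the associated communication problem: Alice runs the algorithm on her portion of the stream (the edges $E_A$, presented in an arbitrary order), sends the memory contents, and Bob continues the run on $E_B$; since our hard instances $\mathcal{D}_{\text{LP}}(n)$ are split into Alice's edges $\cup_i M_i$ and Bob's edges $N_J^1\cup N_J^2$, a single-pass streaming algorithm for \textsf{Longest Path} directly induces such a protocol. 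Hence a space-$s$ single-pass algorithm gives communication cost $O(s)$, and therefore $s = \Omega(r\cdot t)$.

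Next I would instantiate the RS-graph parameters. By Theorem~\ref{thm:rs-graphs}, part~1 (the Alon--Moitra--Sudakov construction), there are bipartite RS-graphs on $2n$ vertices with $r = n^{1-o(1)}$ and $r\cdot t = \Omega(n^2)$. Plugging these into the bound $s = \Omega(r\cdot t)$ immediately gives $s = \Omega(n^2)$. It remains only to check that the approximation factor matches the claim: the communication lower bound is stated for approximation factor $O(r/\log r)$, and with $r = n^{1-o(1)}$ we have $r/\log r = n^{1-o(1)}/\log(n^{1-o(1)}) = n^{1-o(1)}$ (the $\log$ factor is absorbed into the $o(1)$ in the exponent), so $\textsf{LP}^{O(r/\log r)}$ is exactly $\textsf{LP}^{O(n^{1-o(1)})}$ on these instances. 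One should also note that the total number of vertices in $\mathcal{D}_{\text{LP}}(n)$ is $\Theta(n)$ (the set $A$ of size $n$ together with two copies $B_1, B_2$ of the $n$-vertex side), so the bound $\Omega(n^2)$ is genuinely quadratic in the vertex count of the instance, and a rescaling of $n$ by a constant factor does not affect the statement.

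The only subtlety---and the step I would be most careful about---is bookkeeping around the error probability and the approximation factor through the chain of reductions. The error probability $1/4 - 1/500$ in the communication lower bound needs to be at least the success-probability complement that a purported streaming algorithm achieves; since the theorem is phrased for algorithms with a fixed constant error, this is fine, but I would state explicitly that any algorithm succeeding with probability, say, $3/4$ falls within scope. Similarly, I would make sure the $O(\cdot)$ in the approximation factor is consistent: the direct-sum argument (Lemma~\ref{lem:direct-sum}) only blew up the approximation factor by a constant, and Theorem~\ref{thm:cc-slp} started from approximation $O(r/\log r)$, so the composed bound is still $O(r/\log r) = n^{1-o(1)}$. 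I do not anticipate any genuine mathematical obstacle here---all the hard work was done in Sections~\ref{sec:simple} and~\ref{sec:complex}; this final theorem is essentially a matter of substituting the RS-graph parameters and invoking the streaming-communication correspondence. The proof would therefore be short: cite the preceding communication complexity theorem, apply the reduction from streaming algorithms to one-way protocols, and substitute $r = n^{1-o(1)}$, $r\cdot t = \Omega(n^2)$ from Theorem~\ref{thm:rs-graphs}.
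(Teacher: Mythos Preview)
Your proposal is correct and matches the paper's approach exactly: the paper proves this theorem simply by invoking the preceding $\Omega(r\cdot t)$ communication lower bound, the standard streaming-to-one-way-protocol reduction, and the Alon--Moitra--Sudakov RS-graph parameters from Theorem~\ref{thm:rs-graphs}. If anything, your write-up is more detailed than the paper's, which dispatches the theorem in a single sentence citing those two ingredients.
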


\setcounter{theorem}{\value{thmsaved}}



\section{Insertion-only Lower Bound for Undirected Graphs} \label{sec:lb-undirected}
We will prove our lower bound for undirected graphs via a reduction to the \textsf{Index}$(N)$ problem in the one-way two-party communication model, see Definition~\ref{def:index} and Theorem~\ref{thm:lb-index} for its hardness.
We will show that a one-way two-party protocol for \textsf{LP}$(n)$ on undirected graphs with small approximation factor can be used to solve \textsf{Index}$(N)$. This allows us to obtain a lower bound on the message size of any protocol that solves \textsf{LP}$(n)$, which in turn establishes a space lower bound for one-pass streaming algorithms.

Our reduction is presented in Figure~\ref{fig:reduction}.

\begin{figure}
 \fbox{\begin{minipage}{0.98\textwidth}
  \textbf{Input:} \vspace{-0.2cm} 
  \begin{itemize}
   \item Protocol $\Pi_{\text{LP}}$ that solves $\textsf{LP}^{\alpha}_{\epsilon}(n)$,
        with approximation ratio $\alpha$ and error probability $\epsilon$
   \vspace{-0.2cm}
   \item  Input $X \in \{0, 1\}^N, J \in [N]$ for $\textsf{Index}(N)$ such that Alice holds $X$ and Bob holds $J$ \vspace{-0.2cm}
   \item Let $H = (A^H, B^H, E^H)$ be an $(r,t)$-RS graph with $|A^H| = |B^H| = n$ with induced matchings $M_1^H, M_2^H, \dots, M_t^H$ each of size $r$ such that $N = r \cdot t$ and denote by $e_{ij}$ the $j$th edge in $M_i^H$. 
  \end{itemize}

  \textbf{Alice and Bob:} Alice and Bob use public randomness to compute a uniform random permutation
  $\pi:[r] \rightarrow [r]$ and a uniform random binary vector $Y \in \{0, 1\}^{N}$. 
  
  \vspace{0.3cm}
  
  \textbf{Alice:} Given input $X \in \{0, 1\}^N$, random bits $Y$, and random permutation $\pi$, Alice constructs the following graph $G' = (A \cup B_1 \cup B_2, E)$ with $|A| = |B_1| = |B_2| = n$:
  
  \begin{enumerate}
   \item  For each $i \in [t], j \in [r]$, if $(X \oplus Y)_{(i-1) \cdot r + \pi(j)} = 0$ then introduce the edge $e_{ij}$ into $G'$ between $A$ and $B_1$, otherwise introduce the edge $e_{ij}$ between $A$ and $B_2$. Denote by $M_i$ the set of edges that originated from matching $M_i^H$.
   \item Alice runs the protocol $\Pi_{\text{LP}}$ on $G'$ and sends the respective message to Bob.
  \end{enumerate}
  
\textbf{Bob:} Given input $J \in [N]$, Bob extends the construction as follows. Let $i^* \in [t],j^* \in [r]$ be such that $(i^*-1) \cdot r + j^* = J$. 

Furthermore, let $A^{H'} = A(M^H_{i^*})$ and $B^{H'} = B(M^H_{i^*})$, i.e., $A^{H'}$ (respectively, $B^{H'}$)  is the subset of $A$-vertices (respectively $B$-vertices) 
of $H$ that are incident to the matching $M^H_{i^*}$.
Let $A' \subseteq A$ be the copy of $A^{H'}$ in $G'$, let $B'_1 \subseteq B_1$ be the copy of $B^{H'}$ in $B_1$, and let $B'_2 \subseteq B_2$ be the copy of $B^{H'}$ in $B_2$. 

\begin{enumerate}
 \item Let $\mathcal{P}'$ be a path consisting of new edges (not in the edge set $E$ of Alice's graph) that visits all the vertices in $(A \setminus A') \cup (B_1 \setminus B_1') \cup (B_2 \setminus B_2')$ in this order. 
  Then, let $\mathcal{P}$ be the {\em subdivided} path $\mathcal{P}'$, where every edge in $\mathcal{P}'$ is replaced by a path of length $\ell$, for some constant integer $\ell$ whose value we will determine later. Note that these subdivisions introduce $\ell -1$ new vertices for every edge in $\mathcal P'$.
  Let $s(\mathcal{P})$ and $t(\mathcal{P})$ be the start and end vertices in $\mathcal{P}$. Let $F$ be the set of edges connecting $t(\mathcal{P})$ to every vertex in $A' \cup B_1' \cup B_2'$. 
 
 \item Consider the matching $M^H_{i^*} \in E^H$. Let $N_{i^*}$ be a uniform random matching between $A^{H'}$ and $B^{H'}$. 
  Bob obtains two copies of $N_{i^*}$ and adds them to $G'$ as follows: The first copy $N_{i^*}^1$ is introduced between $A'$ and $B'_1$, and the second copy $N_{i^*}^2$ is introduced between $A'$ and $B'_2$. 
  
 \item Bob continues the execution of $\Pi_{\text{LP}}$ by adding the edges $\mathcal{P} \cup F \cup N_{i^*}^1 \cup N_{i^*}^2$ to the input. This establishes the final input graph $G= G' \cup \mathcal{P} \cup F \cup N_{i^*}^1 \cup N_{i^*}^2$. 
 
 \item Denote by $\mathcal{Q}$ the path outputted by $\Pi_{\text{LP}}$. If $\mathcal{Q}$ contains the edge that corresponds to $e_{{i^*} \pi^{-1}({j^*})}$ then Bob can successfully output the result: Let $Z$ be the indicator random variable of the event ``$e_{i^* \pi^{-1}(j^*)}$ is incident on $B_2$ in $G$''. Then, output $Z \oplus Y_{(i^*-1)\cdot r + \pi^{-1}(j^*)}$. Otherwise, output a uniform random bit.
 
\end{enumerate}

 \end{minipage}}
\caption{Reduction between \textsf{Longest Path} and \textsf{Index}. \label{fig:reduction}}
\end{figure}

We will first argue that the graph $G$ constructed in our reduction (Figure~\ref{fig:reduction}) contains a long path with probability at least $1/3$, where the probability is over the distribution of $N_{i^*}^1$ and $N_{i^*}^2$:
\begin{lemma}
With probability at least $1/3$ (over the distribution of $N_{i^*}^1$ and $N_{i^*}^2$):
 $$\lp(G) \ge (3 \cdot (n-r) - 1) \cdot \ell + 1 + \frac{1}{2} r  \ . $$
\end{lemma}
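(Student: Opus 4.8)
The plan is to exhibit one explicit path in $G$ (notation as in Figure~\ref{fig:reduction}) of the claimed length, obtained by concatenating three pieces end to end: the subdivided path $\mathcal{P}$; a single edge of $F$; and a path of length at least $\tfrac{1}{2}r$ that lives inside the ``core'' subgraph $G^{\circ} := (A' \cup B_1' \cup B_2',\, M_{i^*} \cup N_{i^*}^1 \cup N_{i^*}^2)$. The first piece is deterministic: since $\mathcal{P}'$ is a path on $3(n-r)$ vertices it has $3(n-r)-1$ edges, and subdividing each of them into a path of length $\ell$ turns $\mathcal{P}$ into a path of length exactly $(3(n-r)-1)\ell$. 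Because $F$ joins the endpoint $t(\mathcal{P})$ of $\mathcal{P}$ to \emph{every} vertex of $A' \cup B_1' \cup B_2'$, and the vertex sets of $\mathcal{P}$ and of $G^{\circ}$ are disjoint, any path in $G^{\circ}$ can be hooked onto $t(\mathcal{P})$ through one edge of $F$ without destroying simplicity. It therefore suffices to show that, with probability at least $1/3$ over $N_{i^*}$, the core $G^{\circ}$ contains a path of length at least $\tfrac12 r$: such a path yields a final path in $G$ of length at least $(3(n-r)-1)\ell + 1 + \tfrac12 r$.

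To analyse $G^{\circ}$ I would first contract, for each RS-vertex $b$ incident to $M^H_{i^*}$, its two copies $b^1 \in B_1'$ and $b^2 \in B_2'$ into a single vertex $\tilde b$, and merge parallel edges. Under this contraction, the $M_{i^*}$-edge lying over $\{a,b\}$ (whichever of $\{a,b^1\}$ and $\{a,b^2\}$ it happens to be) becomes $\{a,\tilde b\}$, while $N_{i^*}^1$ and $N_{i^*}^2$ both collapse onto one and the same matching; hence the contracted core is exactly the union of the \emph{fixed} perfect matching $M^H_{i^*}$ and a \emph{uniformly random} perfect matching $N_{i^*}$ on two sides of size $r$. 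After relabelling so that $M^H_{i^*}$ is the identity matching, this is precisely the contracted graph $H'$ appearing in the proof of Lemma~\ref{lem:golomb}: a disjoint union of even cycles whose half-lengths are the cycle lengths of a uniformly random permutation of $[r]$. Writing $L$ for the length of a longest cycle in the contracted core, the Golomb--Dickman bound therefore gives $\Exp[L] \ge 2\lambda r \ge 1.24\,r$ for $r$ large enough, just as in Lemma~\ref{lem:golomb}.

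The remaining step is a lifting argument followed by a tail bound. A longest cycle $\tilde C$ in the contracted core alternates strictly between $M^H_{i^*}$- and $N_{i^*}$-edges; I lift it to $G^{\circ}$ by routing each of its $B$-vertices $\tilde b$ through the copy $b^{c}\in B_c'$ that is selected by the coin flip on the $M_{i^*}$-edge of $\tilde C$ incident to $\tilde b$. This is consistent because both copies of every $N$-edge are present in $G^{\circ}$, and the resulting cycle $C$ is simple: its $A$-vertices are inherited from $\tilde C$, and distinct contracted vertices $\tilde b$ give distinct copies $b^c$ regardless of the coins. Hence $G^{\circ}$ contains a cycle of length $L$, and deleting one of its edges gives a path of length $L-1$ in $G^{\circ}$. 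Finally, running on $L$ the same Markov-type computation as in the proof of Theorem~\ref{thm:cc-slp} --- using $L \le 2r$ and $\Exp[L]\ge 1.24\,r$, which leaves a constant-factor gap over $\tfrac12 r$ --- yields $\Pr[\,L \ge \tfrac12 r + 2\,] \ge 1/3$ for $r$ large enough. On that event $G^{\circ}$ has a path of length $L-1 > \tfrac12 r$, which, concatenated as above, gives a path in $G$ of length $(3(n-r)-1)\ell + 1 + (L-1) > (3(n-r)-1)\ell + 1 + \tfrac12 r$, proving the lemma.

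I do not expect a genuine conceptual obstacle: the lemma is in essence a repackaging of Lemma~\ref{lem:golomb}, together with the observation that the gadget $\mathcal{P}$ and the edge set $F$ contribute their lengths additively. The points that require care are all bookkeeping ones --- counting the edges of the subdivided path $\mathcal{P}$ correctly; checking that the ``$+1$'' from the single $F$-edge together with the ``$L-1$'' from opening the lifted cycle sum to at least $1+\tfrac12 r$, which is why the tail bound is taken at $\tfrac12 r + 2$ and why the constant slack in $\Exp[L]\ge 1.24\,r$ over $\tfrac12 r$ is exploited to keep the success probability above $1/3$; and confirming that the lifted cycle $C$ is genuinely simple.
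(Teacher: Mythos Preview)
Your proposal is correct and follows essentially the same approach as the paper. The paper's proof is terser: it simply cites the argument from the proof of Theorem~\ref{thm:cc-slp} (which in turn uses Lemma~\ref{lem:golomb}) to assert that with probability at least $1/3$ there is a path of length $\ge r/2$ in $M_{i^*}\cup N_{i^*}^1\cup N_{i^*}^2$, and then concatenates $\mathcal{P}$, one $F$-edge, and that path. You unpack exactly this citation --- the contraction to a random-permutation graph, the Golomb--Dickman expectation bound, the Markov tail bound, and the lifting back to $G^{\circ}$ --- and you are if anything more careful than the paper, since you explicitly verify simplicity of the lifted cycle and account for the ``$-1$'' lost when opening it into a path.
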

\begin{proof}
 A path of this length can be constructed as follows:
 
 First, we follow the path $\mathcal{P}$, which is of length $(3 \cdot (n-r) - 1) \cdot \ell$. This is because the path $\mathcal{P}'$ visits $3 \cdot (n-r)$ vertices  and thus consists of $3 \cdot (n-r) - 1$ edges. Each of these edges is then replaced by a path of length $\ell$ to give the path $\mathcal{P}$.
 
 Then, as argued in the proof of Theorem~\ref{thm:cc-slp}, we know that with probability at least $1/3$ (over $N_{i^*}^1$ and $N_{i^*}^2$), there exists a path $\mathcal{R}$ of length at least $0.5 r$ among the edges $M_{i^*} \cup N_{i^*}^1 \cup N_{i^*}^2$. We will now follow the edge in $F$ that connects the endpoint of the path $\mathcal{P}$, i.e.,  $t(\mathcal{P})$, to the path $\mathcal{R}$, which adds $1 + \frac{1}{2} r$ edges to the path and yields the result. 
\end{proof}
Next, we will argue that if the path $\mathcal{Q}$ outputted by the protocol is long then it must contain many edges of $M_{i^*}$.

\begin{lemma}
 Consider the output $\mathcal{Q}$ of the protocol. Then, for every $\ell \ge 4$:
 $$|\mathcal{Q}| \le 3 \cdot (n-r) \cdot \ell + 2 \cdot |\mathcal{Q} \cap M_{i^*}| + 2\ell \ . $$
\end{lemma}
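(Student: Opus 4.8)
The plan is to bound $|\mathcal{Q}|$ by splitting its edges into three groups and charging each against either $|\mathcal{P}|$ or $|\mathcal{Q}\cap M_{i^*}|$. Write $V_2 := A'\cup B_1'\cup B_2'$ for the "core" and call an edge of $\mathcal{Q}$ a \emph{connector} if it is an edge of $F$, an edge of $G'$ with exactly one endpoint in $V_2$, or an edge of $G'$ with both endpoints outside $V_2$. Thus every edge of $\mathcal{Q}$ is either an edge of $\mathcal{P}$, an edge inside $V_2$ (necessarily in $M_{i^*}\cup N_{i^*}^1\cup N_{i^*}^2$), or a connector. I would first record the two structural facts the argument rests on: the vertices created by subdividing $\mathcal{P}'$ have degree exactly $2$ in $G$ (their only incident edges are the two edges of $\mathcal{P}$), and — crucially — since $M^H_{i^*}$ is an \emph{induced} matching of $H$, the only edges of $G'$ with both endpoints in $V_2$ are the edges of $M_{i^*}$. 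Hence the subgraph of $G$ induced on $V_2$ is exactly $M_{i^*}\cup N_{i^*}^1\cup N_{i^*}^2$, every such edge runs between $A'$ and $B_1'\cup B_2'$, each vertex of $B_1'$ is incident only to its unique $N^1$-edge and possibly its $M_{i^*}$-edge, and symmetrically for $B_2'$.

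Using this I would bound the edges of $\mathcal{Q}$ inside $V_2$. They form a disjoint union of maximal sub-paths of $\mathcal{Q}$; each such sub-path alternates between $A'$ and $B_1'\cup B_2'$, and at every \emph{internal} $B'$-vertex its two used edges are exactly one $M_{i^*}$-edge and one $N$-edge. Walking along such a sub-path one sees that the number of $N$-edges exceeds the number of $M_{i^*}$-edges by at most $2$, so a sub-path carrying $m$ edges of $M_{i^*}$ has at most $2m+2$ edges; summing over the $p$ maximal core sub-paths gives at most $2\,|\mathcal{Q}\cap M_{i^*}| + 2p$ edges of $\mathcal{Q}$ inside $V_2$.

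The main work is the "stranding" argument, which controls the connectors and $p$. Because subdivision vertices have degree $2$, each length-$\ell$ segment $S$ of $\mathcal{P}$ lying between two consecutive original vertices of $\mathcal{P}'$ is, with respect to $\mathcal{Q}$, of exactly one of three types: fully contained in $\mathcal{Q}$, entirely disjoint from $\mathcal{Q}$, or partially traversed with an endpoint of $\mathcal{Q}$ strictly inside $S$ — and the last type occurs for at most two segments. Whenever $\mathcal{Q}$ uses a connector at an original vertex $u$, then $u$ uses at most one of its (at most two) incident $\mathcal{P}$-edges, so one segment incident to $u$ is not fully contained in $\mathcal{Q}$; charging each connector to such a "cut" segment-side and noting that a cut side forces its segment into the disjoint or the (at most twice occurring) partial case yields both $(\text{number of connectors}) \le 2g + O(1)$ and $p \le g + O(1)$, where $g$ is the number of segments entirely disjoint from $\mathcal{Q}$. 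Since these $g$ segments remove $\ell g$ edges from $\mathcal{Q}\cap\mathcal{P}$, we get $|\mathcal{Q}\cap\mathcal{P}| \le (3(n-r)-1)\ell - \ell g$. Adding the three bounds, $|\mathcal{Q}| \le (3(n-r)-1)\ell + 2\,|\mathcal{Q}\cap M_{i^*}| + (4-\ell)\,g + O(1)$; for $\ell\ge 4$ the term $(4-\ell)g$ is nonpositive and the additive constant is absorbed into the slack $2\ell$, giving the claimed inequality.

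I expect the stranding count to be the delicate step. One has to treat the two endpoints of $\mathcal{Q}$ and the two degree-one original vertices $s(\mathcal{P})$ and $t(\mathcal{P})$ as $O(1)$ exceptions, and one must be careful that a connector with both endpoints outside $V_2$ touches two original vertices while a single original vertex may carry two connectors, so that the incidence bookkeeping produces a constant small enough to be swallowed by $2\ell$ once $\ell\ge 4$ — which is precisely where the hypothesis $\ell\ge 4$ enters.
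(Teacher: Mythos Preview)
Your approach is essentially the paper's, just organized dually: the paper parameterizes by $f$, the number of edges of $\mathcal{Q}$ crossing the cut $(V\setminus V',V')$, and shows that at least $f/2-2$ hoops are unvisited, whereas you parameterize by $g$, the number of unvisited segments, and bound the connectors and the number $p$ of core sub-paths in terms of $g$. The underlying trade-off is identical. One organizational difference worth noting is that the paper bounds the contribution of $V\setminus V'$ by a \emph{vertex} count, namely $|V\setminus V'|$ minus the $(\ell-1)$ internal vertices of each missed hoop; this automatically absorbs the $G'$-edges with both endpoints outside the core, which you instead classify explicitly as connectors. Your per-sub-path bound $2m+2$ is in fact the correct worst case (a core sub-path can begin and end with an $N$-edge at $B'$-vertices), so your accounting here is actually slightly tighter than the paper's stated $2k'+1$.

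The one point that remains genuinely unfinished in your sketch is the constant-tracking you flag yourself. Carrying your inequalities through with concrete exception counts (two for the $\mathcal{Q}$-endpoints, two for the partial segments, one each for $s(\mathcal{P})$ and $t(\mathcal{P})$) gives an additive constant in the vicinity of $20$, so the slack $3\ell$ you have (from $(3(n-r)-1)\ell$ versus the target $3(n-r)\ell+2\ell$) only swallows it once $\ell\ge 7$ or so, not $\ell\ge 4$. The paper avoids this because its vertex-count formulation loses only $\ell-1$ per missed hoop rather than $\ell$, and because it does not double-count the connector contribution the way your edge-classification does. This is not a conceptual gap---your argument is sound---but to recover the stated threshold $\ell\ge 4$ you would need to sharpen the charging (e.g.\ observe that connectors with both endpoints outside $V_2$ cut two segment-sides and can be charged to both).
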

\begin{proof}
Let $V' = A' \cup B_1' \cup B_2'$. Our argument is based on the following observation: Consider a vertex $u \in (V \setminus V') \setminus \{s(\mathcal{P}), t(\mathcal{P}) \}$  that is connected to $v \in V'$ in the path $\mathcal{Q}$. Since $u$ is neither the start nor the end vertex of the path $\mathcal{P}$, it is incident to two {\em hoops}, i.e., subdivisions on the path $\mathcal{P}$. However, since $u$ connects to $v$ in $\mathcal{Q}$, $\mathcal{Q}$ cannot visit one of these subdivisions, and, hence, the $\ell-1$ vertices within this subdivision are not visited by $\mathcal{Q}$, which limits the overall length of $\mathcal{Q}$. On the other hand, if $\mathcal{Q}$ contains only few edges across the cut $(V, V')$ and many vertices of $V'$ are visited by $\mathcal{Q}$ then $\mathcal{Q}$ must necessarily contain many $M_{i^*}$ edges. We now quantify these observations.


 Let $\{u_1, v_1\}, \{u_2, v_2\}, \dots, \{u_f, v_f\}$ with $u_i \in V \setminus V'$ and $v_i \in V'$ be all the edges of $\mathcal{Q}$ that go across the cut $(V \setminus V', V')$. 
 
 We first observe that if $u_i$ is different to $s(\mathcal{P})$ and $t(\mathcal{P})$ then $u_i$ is incident to two hoops or subdivisions. Furthermore, since $\{u_i, v_i \}$ is an edge on $\mathcal{Q}$, one of the two hoops is not visited by $\mathcal{Q}$. Some care needs to be taken since it is possible that the path $\mathcal{Q}$ starts and ends with such a hoop. Hence, we have to account for overall two hoops that may be additionally visited. Next, observe that a hoop is incident on two vertices. Hence, overall, we have established that the $\ell-1$ vertices of at least $f/2 - 2$ hoops are not visited by $\mathcal{Q}$.
 
 Next, let $k = |\mathcal{Q} \cap M_{i^*}|$. We will now argue that at most $2k + f/2 + 1$ edges of $M_{i^*} \cup N_{i^*}^1 \cup N_{i^*}^2$ are contained in $\mathcal{Q}$. To this end, consider a {\em maximal} subpath $x_1, x_2, \dots, x_t$ of $\mathcal{Q}$ such that all vertices $x_1, \dots, x_t$ are contained in $V'$, and suppose that there are overall $q$ such paths. Suppose that the subpath $x_1, \dots, x_t$ uses $k'$ edges of $M_{i^*}$. Then, the subpath $x_1, \dots, x_t$ uses at most $2k' + 1$ edges of $M_{i^*} \cup N_{i^*}^1 \cup N_{i^*}^2$. Hence, all these $q$ maximal subpaths consist of at most $2k + q$ edges. Next, in order to connect two of these subpaths in $\mathcal{Q}$, at least two cut edges are needed, which implies that there are at most $q \le \frac{f}{2}+1$ such subpaths. Hence, these maximal subpaths are of combined length at most $2k + \frac{f}{2} + 1$. 

 The total length of $\mathcal{Q}$ is thus bounded as follows: 
 \begin{align*}
|\mathcal{Q}| & \le |V \setminus V'| + f + 2k + \frac{f}{2} + 1 -  (f/2 - 2) \cdot (\ell-1) \\  
& \le 3 (n - r) \cdot \ell + 2k + f(2  - \ell / 2) +2 \ell - 1 \\
& \le 3 (n - r) \cdot \ell + 2k + 2 \ell  \ ,
 \end{align*}
for every $\ell \ge 4$, which completes the proof.

\end{proof}

\setcounter{thmsaved}{\value{theorem}}
\setcounter{theorem}{\value{counterLBundirected}}
\addtocounter{theorem}{-1}

\begin{theorem}
 Let $\Pi_{\text{LP}}$ be a one-way two-party communication protocol for $\text{LP}^{\frac{25}{24} - \gamma}_{\epsilon}(n)$, for any $\gamma > 0$ and $0 < \epsilon < 1$. Then, $\Pi_{\text{LP}}$ has communication cost $n^{1 + \Omega(\frac{1}{\log \log n})}$.
\end{theorem}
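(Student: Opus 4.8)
The plan is to combine the two structural lemmas just proved with the hardness of \textsf{Index}$(N)$, where $N = r \cdot t$ and we instantiate the RS-graph from part~2 of Theorem~\ref{thm:rs-graphs}, so that $r = (\frac12 - \epsilon)n$ and $t = n^{\Omega(1/\log\log n)}$, giving $N = r\cdot t = n^{1 + \Omega(1/\log\log n)}$. The guiding intuition is that a long path $\mathcal{Q}$ output by $\Pi_{\text{LP}}$ must, by the second lemma, contain $\Omega(r)$ edges of $M_{i^*}$; in particular it must contain the specific edge $e_{i^* \pi^{-1}(j^*)}$ with constant probability, because $\pi$ is a uniform random permutation hiding which of the $r$ edges of $M_{i^*}$ is the ``queried'' one, and the message Alice sends is independent of $\pi$'s action on the relevant coordinates conditioned on what has been communicated. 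Once $\mathcal{Q}$ contains that edge, Bob reads off whether it lands in $B_1$ or $B_2$, which (after XOR-ing out the public random mask $Y$) reveals $(X \oplus Y)_{J} \oplus Y_J = X_J$, solving \textsf{Index}.

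First I would fix the value of $\ell$. Comparing the two lemmas: the graph contains a path of length at least $(3(n-r)-1)\ell + 1 + \tfrac12 r$ with probability $\ge 1/3$, while any output $\mathcal{Q}$ satisfies $|\mathcal{Q}| \le 3(n-r)\ell + 2|\mathcal{Q}\cap M_{i^*}| + 2\ell$. If $\Pi_{\text{LP}}$ achieves approximation factor $\alpha = \tfrac{25}{24} - \gamma$, then on the good event $|\mathcal{Q}| \ge \big((3(n-r)-1)\ell + 1 + \tfrac12 r\big)/\alpha$, and rearranging the two displayed inequalities forces
\[
  |\mathcal{Q} \cap M_{i^*}| \;\ge\; \frac{1}{2}\left(\frac{(3(n-r)-1)\ell + 1 + \tfrac12 r}{\alpha} - 3(n-r)\ell - 2\ell\right).
\]
Since $n - r = (\tfrac12 + \epsilon)n$ and $r = (\tfrac12-\epsilon)n$, choosing $\ell$ a sufficiently large constant makes the $\ell$-scaled terms dominate, and with $\alpha$ bounded below $\tfrac{25}{24}$ the right-hand side is $\Omega(r) = \Omega(n)$; this is exactly where the constant $\tfrac{25}{24}$ (equivalently, the $1/25$ in the theorem statement, up to the reparametrisation $\tfrac{25}{24}\approx 1 + \tfrac{1}{24}$) comes from — it is the threshold at which a constant $\ell$ can still force $|\mathcal{Q}\cap M_{i^*}|$ to be a positive fraction of $r$. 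So on an event of probability $\ge 1/3 - \epsilon$ (good path exists, and $\Pi_{\text{LP}}$ does not err), $\mathcal{Q}$ contains at least $cr$ edges of $M_{i^*}$ for some constant $c > 0$.

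Next I would argue Bob succeeds with probability bounded away from $1/2$. Condition on the event above; $\mathcal{Q}$ contains a set $S \subseteq M_{i^*}$ of size $\ge cr$. The crucial point is that $S$ is determined by Alice's message, Bob's input $J$ (equivalently $i^*$), the public randomness, and $N_{i^*}^1, N_{i^*}^2$ — but the identity of which physical edge of $M_{i^*}$ is $e_{i^* \pi^{-1}(j^*)}$ depends on $\pi^{-1}(j^*)$, and $\pi$ is a uniform random permutation on $[r]$ whose action Alice only ``sees'' through the XOR $(X\oplus Y)$ and whose relevant value $\pi^{-1}(j^*)$ is, by symmetry and the independence of $Y$, uniform on $[r]$ even conditioned on everything Bob and Alice have jointly generated. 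Hence $\Pr[e_{i^*\pi^{-1}(j^*)} \in \mathcal{Q}] \ge c$. When this happens Bob outputs $X_J$ correctly; otherwise he outputs a fair coin. Overall, the success probability is at least $\tfrac12 + \tfrac12\cdot c\cdot(\tfrac13 - \epsilon') = \tfrac12 + \delta$ for a constant $\delta > 0$ (absorbing the failure probability $\epsilon$ of $\Pi_{\text{LP}}$, the $1/3$ from the path-existence lemma, and the sampling loss into $\epsilon'$; one may boost $\Pi_{\text{LP}}$'s success probability to $1 - o(1)$ by constant-factor repetition first if $\epsilon$ is close to $1$). By Theorem~\ref{thm:lb-index}, any protocol solving \textsf{Index}$(N)$ with advantage $\delta$ has communication cost $\Omega(N) = n^{1 + \Omega(1/\log\log n)}$; since Alice's message in the reduction is exactly the message of $\Pi_{\text{LP}}$ (Bob's additions are to his own side of the input), $\Pi_{\text{LP}}$ inherits this cost.

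The main obstacle I anticipate is the independence argument in the last paragraph: one must carefully verify that, conditioned on Alice's message $\Pi_{\text{LP}}$ and all shared randomness, the position $\pi^{-1}(j^*)$ that Bob queries is still (close to) uniform over $[r]$, so that the "target" edge is uniformly distributed among the $\ge cr$ edges of $M_{i^*}$ that $\mathcal{Q}$ is guaranteed to contain. This is where the random permutation $\pi$ and the random mask $Y$ in Figure~\ref{fig:reduction} do their work — $Y$ makes the edge-to-side assignment within each matching look uniform regardless of $X$, so that Alice's message reveals nothing about $\pi^{-1}(j^*)$ specifically, while $\pi$ breaks any correlation between the index $J$ Bob holds and which of the $r$ induced-matching edges $\mathcal{Q}$ happens to pick up. Formalising "$\mathcal{Q}$ contains $\ge cr$ edges of $M_{i^*}$, chosen independently of $\pi^{-1}(j^*)$" into "$\Pr[\text{target edge} \in \mathcal{Q}] \ge c$" is the delicate step; the rest is the constant-chasing already sketched above.
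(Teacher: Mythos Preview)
Your overall approach matches the paper's: instantiate the Goel--Kapralov--Khanna RS-graph, combine the two structural lemmas to force $|\mathcal{Q}\cap M_{i^*}| = \Omega(r)$, use the random permutation $\pi$ and mask $Y$ to argue the target edge is hit with constant probability, and invoke the \textsf{Index} lower bound. The independence concern you flag in the last paragraph is handled in the paper exactly as you suggest, in one line.

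However, your treatment of $\ell$ is backwards, and this is not cosmetic. In the inequality
\[
|\mathcal{Q}\cap M_{i^*}| \ge \tfrac{1}{2}\Bigl(\tfrac{(3(n-r)-1)\ell + 1 + \tfrac12 r}{\alpha} - 3(n-r)\ell - 2\ell\Bigr),
\]
the $\ell$-scaled part is (up to lower order terms) $\tfrac{3(n-r)\ell}{2}\bigl(\tfrac{1}{\alpha}-1\bigr)$, which is \emph{negative} since $\alpha > 1$. Taking $\ell$ large therefore drives the right-hand side to $-\infty$, not to $\Omega(r)$. The correct choice is the \emph{smallest} admissible value $\ell = 4$ (the second lemma requires $\ell \ge 4$). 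With $\ell = 4$, $r = (\tfrac12-\epsilon)n$, and $n-r = (\tfrac12+\epsilon)n$, the condition for the right-hand side to be a positive constant times $n$ becomes, as $\epsilon \to 0$,
\[
\alpha \;<\; 1 + \frac{1}{6\ell} \;=\; 1 + \frac{1}{24} \;=\; \frac{25}{24},
\]
which is precisely the threshold in the statement. So the constant $\tfrac{25}{24}$ arises not from ``a constant $\ell$ can still force\dots'' but from the specific minimal choice $\ell = 4$; any larger $\ell$ yields a strictly weaker approximation threshold and would not prove the theorem as stated. Once you fix $\ell = 4$, the remainder of your sketch (including the final observation that $|V(G)| = \Theta(\ell n) = \Theta(n)$, which you should also mention so that the bound is stated in the right parameter) goes through exactly as in the paper.
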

\setcounter{theorem}{\value{thmsaved}}
\begin{proof}
 We employ the RS-graph construction of Goel et al. with parameters $r = (\frac{1}{2} - \epsilon)n$ and $t = n^{\Omega(\frac{1}{\log \log n})}$, for some small constant $\epsilon > 0$. We also set $\ell = 4$ in our construction. Then, with probability at least $1/3$, the approximation factor of the protocol is at least:
 \begin{align*}
  \frac{25}{24} - \gamma \ge & \frac{\lp(G) }{|\mathcal{Q}|} \ge \frac{\left( 3(n-(\frac{1}{2} - \epsilon) n)-1 \right) \cdot 4 + 1 + \frac{1}{2} \cdot (\frac{1}{2} - \epsilon) n}{(3 \cdot (n-(\frac{1}{2} - \epsilon) n)) \cdot 4 + 2 k + 2 \cdot 4} \\
  & = \frac{n( 6 + \frac{1}{4} + 11.5 \epsilon) - 3 }{n(6 + 12 \epsilon) + 2k + 8}  \ ,
 \end{align*}
which implies that $k = \Omega(n)$, for small enough $\epsilon > 0$. 

Next, observe that, due to the random permutation $\pi$ and the XOR operation with the random uniform bits $Y$, every edge of $M_{i^*} \cap \mathcal{Q}$ is equally likely to have originated from the bit $X[J]$. Hence, the probability that one of the $k$ output edges of $M_{i^*}$ corresponds to the bit $X[J]$ is $k / r = \Omega(1) =: p$. Hence, with potentially small but constant probability $p$, the algorithm successfully reports the output bit. Otherwise, the algorithm outputs a uniform random bit. The overall probability of learning the bit at position $J$ is thus at least:

$$\frac{1}{3} \cdot p \cdot (1 - \epsilon) + (1-\frac{1}{3} \cdot p \cdot (1 - \epsilon)) \cdot \frac{1}{2} =  \frac{1}{2} + \frac{1}{3} p \cdot (1-\epsilon) = \frac{1}{2} + \Omega(1) \ . $$

By Theorem~\ref{thm:lb-index}, the protocol thus is required to use a message of size $\Omega(N)$. Since $N = \Theta(r \cdot t) = n^{1+\Omega(\frac{1}{\log \log n})}$, and the graph $G$ has $\Theta(\ell \cdot n) = \Theta(n)$ vertices the result follows.
\end{proof}

\section{Insertion-deletion Lower Bound for Undirected Graphs} \label{sec:lb-insert-delete} 
Our lower bound for Insertion-deletion streams for undirected graphs is a reduction to the \textsf{Augmented-Index} problem in the one-way two-party communication setting, see Definition~\ref{def:index} and Theorem~\ref{thm:lb-index} for its hardness. 



Our reduction is such that Bob introduces edge deletions into the resulting \textsf{Longest Path} instance. This reduction therefore only implies a lower bound for insertion-deletion streaming algorithms.
See  Figure~\ref{fig:reduction-insert-delete} for our reduction.

\begin{figure}
 \fbox{\begin{minipage}{0.98\textwidth}
  \textbf{Input:} \vspace{-0.2cm} 
  \begin{itemize}
   \item Protocol $\Pi_{\text{LP}}$ that solves $\textsf{LP}^{\alpha}_{\epsilon}(n)$
    with approximation factor $\alpha$ and error probability $\epsilon$
   and can tolerate edge deletions \vspace{-0.2cm}
   \item  Input $X \in \{0, 1\}^N, J \in [N]$ for $\textsf{Augmented-Index}(N)$ such that Alice holds $X$ and Bob holds $J$ and $X[J+1, N]$
  \end{itemize}

\vspace{0.3cm}

  \textbf{Alice and Bob:} Alice and Bob use public randomness to compute two uniform random permutations
  $\pi_1, \pi_2:[n] \rightarrow [n]$ and a uniform random binary matrix $Z \in \{0, 1\}^{n \times n}$.  Let $Y' \in \{0, 1\}^{n \times n}$ be a binary matrix such that the bits $X$ of the index instance are embedded in the top-left $\sqrt{N} \times \sqrt{N}$ submatrix (ordered from left to right and then top to bottom). All other entries are uniform random bits chosen from public randomness that both Alice and Bob know. Then, let $Y = Y' \oplus Z$.

  \vspace{0.3cm}
  
  \textbf{Alice:} Given input $Y$ and random permutations $\pi_1, \pi_2$, Alice constructs the following graph $G' = (A \cup B_1 \cup B_2, E)$ with $|A| = |B_1| = |B_2| = n$ and $A = \{a^1, \dots, a^{n}\}$, $B_1 = \{b_1^1, \dots, b_1^n \}$, and $B_2 = \{b_2^1, \dots, b_2^n\}$:
  
  \begin{enumerate}
   \item  For each $i,j \in [n]$, if $Y_{i, j} = 0$ then introduce the edge $\{a^{\pi_1(i)},b_1^{\pi_2(j)} \}$ into $G'$, otherwise introduce the edge $\{a^{\pi_1(i)}, b_2^{\pi_2(j)} \}$ into $G'$. 
   
   \item Alice runs the protocol $\Pi_{\text{LP}}$ on $G'$ and sends the respective message to Bob.
  \end{enumerate}
  
\textbf{Bob:} Given input $J \in [N]$, $\pi_1, \pi_2$ and the portion of $Y$ known to Bob, Bob extends the construction as follows. Let $i^*, j^* \in [\sqrt{N}]$ be such that $(i^*-1) \cdot \sqrt{N} + j^* = J$.

Observe that, for every $i,j$ with $(i-1) \sqrt{N} + j > J$, Bob knows whether Alice inserted the edge $\{a^{\pi_1(i)}, b_1^{\pi_2(j)}\}$ or $\{a^{\pi_1(i)}, b_2^{\pi_2(j)} \}$ as this choice depends on the bit $Y_{i,j}$ which Bob knows.
\begin{enumerate}
 \item \textbf{Edge Deletions.} Consider the following set of indices:
 \begin{align*}
 \mathcal{I}  =  & \{(i, j) \ : \ i \ge i^*, j \ge j^* \} \\ & \setminus \biggl( \{(i^* + k,j^* + k) \ : \  k \ge 0\} \cup \{(i^* + 1 + k,j^* + k) \ : \  k \ge 0\} \biggr) \ .
 \end{align*}
 
 For every $(i,j) \in \mathcal{I}$, Bob deletes the edge $\{a^{\pi_1(i)}, b_1^{\pi_2(j)} \}$ if $Y_{i,j} = 0$, and the edge $\{a^{\pi_1(i)}, b_2^{\pi_2(j)} \}$ if $Y_{i,j} = 1$. Denote this set of edge deletions by $F$.
 
 \item \textbf{Edge Insertions.} For every $(i,j)$ such that $i = i^* + 1 + k$ and $j = j^* + k$, for some integer $k \ge 0$, Bob introduces the edge $\{a^{\pi_1(i)}, b_1^{\pi_2(j)} \}$ if $Y_{i,j} = 1$, and the edge $\{a^{\pi_1(i)}, b_2^{\pi_2(j)} \}$ if $Y_{i,j} = 0$. Denote this set of edges $E_2$.
  
 \item Bob continues the execution of $\Pi_{\text{LP}}$ by adding the edges $E_2$ and by deleting the edges $F$. This establishes the final input graph $G= G' \cup E_2 \setminus F$.  
 
 \item Denote by $\mathcal{Q}$ the path outputted by $\Pi_{\text{LP}}$. If $\mathcal{Q}$ contains the edge
$\{a^{\pi_1(i^*)}, b_1^{\pi_2(j^*)} \}$ then output $Z_{i^*,j^*}$ and if $\mathcal{Q}$ contains the edge $\{a^{\pi_1(i^*)}, b_2^{\pi_2(j^*)} \}$ then output $1 - Z_{i^*,j^*}$. Otherwise, output \texttt{fail}.
\end{enumerate}

 \end{minipage}}
\caption{Reduction between \textsf{Longest Path} and \textsf{Index} for insertion-deletion streams. \label{fig:reduction-insert-delete}}
\end{figure}

In our analysis, we make use of the following sets of edges: 
\begin{align*}
M  & := \biggl\{ \{a^{\pi_1(i)}, b_{Y_{i,j} +1 }^{\pi_2(j)} \} \ : \ i = i^* + k, j = j^* + k, j \ge 0   \biggr\}  \\
 N_1  & := \biggl\{ \{a^{\pi_1(i)}, b_1^{\pi_2(j)} \} \ : \ i = i^* + 1 + k, j = j^* + k, j \ge 0   \biggr\} \\
 N_2  & := \biggl\{ \{a^{\pi_1(i)}, b_2^{\pi_2(j)} \} \ : \ i = i^* + 1 + k, j = j^* +k, j \ge 0   \biggr\} \ .
 \end{align*} 
 We observe that these are the only edges in the vertex-induced subgraph $G[A(M) \cup B(N_1) \cup B(N_2)]$.

We first argue that every graph $G$ created in our reduction contains a longest path of length $\Omega(n)$.
\begin{lemma}
 Every instance $G$ created in the reduction of Figure~\ref{fig:reduction-insert-delete} is such that $$\lp(G) \ge 2 \cdot(n - \sqrt{N}) - 1 \ .$$
\end{lemma}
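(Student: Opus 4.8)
The plan is to exhibit, in \emph{every} instance $G$ produced by the reduction, an explicit path that alternates between the diagonal edges $M$ and the shifted-diagonal edges $N_1 \cup N_2$, and then to count its length; no probabilistic argument is needed since the construction is deterministic once $\pi_1,\pi_2,Z,X,J$ are fixed.

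First I would record which of these edges actually survive in $G$. The edges of $M$ sit on the indices $(i^*+k, j^*+k)$, which are explicitly removed from the deletion set $\mathcal{I}$ and are not touched by the insertions $E_2$ (those live only on the shifted indices $(i^*+1+k, j^*+k)$); hence every edge of $M$ is present in $G$. The shifted indices $(i^*+1+k, j^*+k)$ are likewise excluded from $\mathcal{I}$, so Alice's edge there is never deleted, and $E_2$ adds the \emph{complementary} edge (to $B_1$ if Alice's went to $B_2$, and vice versa) at exactly these indices. Consequently, for every admissible $k$, \emph{both} $\{a^{\pi_1(i^*+1+k)}, b_1^{\pi_2(j^*+k)}\}$ and $\{a^{\pi_1(i^*+1+k)}, b_2^{\pi_2(j^*+k)}\}$ are edges of $G$, i.e., both $N_1$ and $N_2$ survive in full.

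The key observation, which is the one place that needs care, is that the $M$-edge at diagonal position $k$ ends at $b_1^{\pi_2(j^*+k)}$ or at $b_2^{\pi_2(j^*+k)}$ depending on $Y_{i^*+k,j^*+k}$, and in \emph{either} case that endpoint is joined to $a^{\pi_1(i^*+1+k)}$ by a surviving edge of $N_1 \cup N_2$ (the one of matching parity). Writing $a_k := a^{\pi_1(i^*+k)}$ and letting $c_k$ denote the $B$-endpoint of the $M$-edge at position $k$, we obtain the walk $a_0, c_0, a_1, c_1, a_2, c_2, \dots$ that alternates $M$-edge, $N$-edge, $M$-edge, $N$-edge, $\dots$. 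It is a genuine path: the $a_k$ carry pairwise distinct $A$-indices $\pi_1(i^*+k)$, the $c_k$ carry pairwise distinct $B$-indices $\pi_2(j^*+k)$, and $A$ is disjoint from $B_1 \cup B_2$.

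Finally I would count how far this path extends. The $M$-edge at position $k$ exists precisely when $i^*+k \le n$ and $j^*+k \le n$, and the $N$-edge joining $c_k$ to $a_{k+1}$ requires the same bounds (together with the weaker $i^*+1+k \le n$), so $a_0, c_0, \dots, a_m, c_m$ is well defined for $m = \min\{n-i^*,\, n-j^*\}$, which is at least $n - \sqrt{N}$ because $i^*, j^* \le \sqrt{N}$. This path has $2m+1 \ge 2(n-\sqrt{N}) + 1 \ge 2(n-\sqrt{N}) - 1$ edges, and hence $\lp(G) \ge 2(n-\sqrt{N}) - 1$, as claimed. I do not anticipate a real obstacle; the only subtle point is the $b_1$-versus-$b_2$ case split that makes the continuation edge available \emph{without} knowing the bits $Y$ — precisely the feature that will later let us argue the special matching is hidden from Bob.
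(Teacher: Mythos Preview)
Your proof is correct and follows the same approach as the paper: exhibit the alternating $M$/$N$ path inside $M\cup N_1\cup N_2$ and count its edges. The paper's own proof is a one-line assertion that such a path of length $2(n-\sqrt{N})-1$ exists, whereas you spell out why each edge survives, why the continuation step works regardless of the value of $Y_{i^*+k,j^*+k}$, why the walk is simple, and why $m\ge n-\sqrt{N}$; you even obtain the slightly sharper $2(n-\sqrt{N})+1$. One tiny wording slip: the constraint $i^*+1+k\le n$ for the $N$-edge is not ``weaker'' than $i^*+k\le n$ as a condition on $k$, but since you only need $N$-edges up to position $m-1$ it reduces to $i^*+m\le n$, which you already have, so the argument is unaffected.
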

\begin{proof}
 We observe that the set of edges $M \cup N_1 \cup N_2$ where $M$ forms a matching and the edges $N_1 \cup N_2$ connect the edges of $M$ contains a path of length $2 \cdot(n - \sqrt{N}) - 1$.
\end{proof}

Next, we show that if the path outputted by the protocol is long then it must contain many edges of $M$.
\begin{lemma}
Consider the output path $\mathcal{Q}$ produced by the protocol. Then:
 $$|\mathcal{Q}| \le 6 \sqrt{N} + 4 + 2 |M \cap \mathcal{Q}| \ . $$
\end{lemma}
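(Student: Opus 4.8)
The plan is to follow the same template as the analogous lemma for insertion-only undirected streams, exploiting the structural observation already recorded in the excerpt: letting $V' := A(M) \cup B(N_1) \cup B(N_2)$, the induced subgraph $G[V']$ has edge set exactly $M \cup N_1 \cup N_2$, where $M$ is a matching and $N_1 \cup N_2$ decomposes into vertex-disjoint ``cherries'' centred at the $A$-vertices. The first thing I would establish is that $V'$ is almost all of $V$: $A \setminus A(M)$ consists only of the $i^*-1 < \sqrt{N}$ vertices on rows before $i^*$, and $B_c \setminus B(N_c)$ consists only of $B_c$-vertices on columns before $j^*$, so $|V \setminus V'| < 3\sqrt{N}$ (indeed at most $3\max(i^*,j^*)$).

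Then I would bound $|\mathcal{Q}|$ by splitting the edges of $\mathcal{Q}$ into those incident to $V \setminus V'$ and those lying entirely inside $V'$. Since $\mathcal{Q}$ is a path, every vertex of $V \setminus V'$ is incident to at most two edges of $\mathcal{Q}$, so the first group has size at most $2|V \setminus V'| \le 6\sqrt{N}$; this is the origin of the $6\sqrt{N}$ term.

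For the second group I would look at the maximal subpaths of $\mathcal{Q}$ that stay inside $V'$. Inside such a subpath, because $M$ is a matching no two $M$-edges are adjacent, and tracing the cherry/leaf structure of $G[V']$ one sees that two consecutive $M$-edges are separated by exactly one (necessarily an $N$-) edge, and that before the first and after the last $M$-edge there is at most one further edge (it must end at a degree-one $B$-vertex). Hence a maximal $V'$-subpath using $k'$ edges of $M$ has at most $2k'+O(1)$ edges, and summing over all such subpaths, with $\sum k' = |M \cap \mathcal{Q}|$, the second group has at most $2|M \cap \mathcal{Q}|$ plus a term proportional to the number of $V'$-subpaths. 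Since each maximal $V'$-subpath other than those at the two ``ends'' of $\mathcal{Q}$ is flanked by an excursion of $\mathcal{Q}$ into $V \setminus V'$, and each such excursion consumes at least one vertex of $V \setminus V'$, this overhead can be charged against the budget of the first group, leaving $|\mathcal{Q}| \le 6\sqrt{N} + 4 + 2|M \cap \mathcal{Q}|$.

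The main obstacle is precisely this last charging step. Unlike the insertion-only construction --- where the ``hoops'' make it actively costly to leave $V'$ --- here nothing prevents $\mathcal{Q}$ from oscillating many times between $V'$ and $V \setminus V'$, so one must carefully account for every transition and verify that the constant-per-subpath slack together with the number of subpaths really is paid for by the $2|V \setminus V'|$ budget and does not inflate the $\sqrt{N}$-coefficient. Everything else --- the size bound on $V \setminus V'$, the cherry/leaf description of $G[V']$, and the $2k'+O(1)$ bound on a single $V'$-subpath --- is routine once the structural observation from the excerpt is in hand.
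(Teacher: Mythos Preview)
Your subpath decomposition is a genuinely different route from the paper's. The paper never splits $\mathcal{Q}$ into $V'$-segments; instead it counts $A$-vertices on $\mathcal{Q}$ and uses bipartiteness to get $|\mathcal{Q}| \le 2\cdot|\{A\text{-vertices on }\mathcal{Q}\}|$. Concretely, let $A' \subseteq A(M)$ be the $A(M)$-vertices on $\mathcal{Q}$ whose $M$-edge is \emph{not} on $\mathcal{Q}$. The paper argues that each $a \in A'$, unless it is an endpoint of $\mathcal{Q}$, must be adjacent on $\mathcal{Q}$ to some vertex of $B \setminus B(N_1 \cup N_2)$; since $|B \setminus B(N_1 \cup N_2)| \le 2\sqrt{N}$ this yields $|A'| \le 2\sqrt{N}+2$. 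Adding the $|M \cap \mathcal{Q}|$ many $A(M)$-vertices that \emph{are} $M$-matched on $\mathcal{Q}$ and the at most $\sqrt{N}$ vertices of $A \setminus A(M)$ gives at most $3\sqrt{N}+2+|M\cap\mathcal{Q}|$ $A$-vertices, and doubling gives the stated bound.

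Your approach also delivers a bound of the right shape $C\sqrt{N}+O(1)+2|M\cap\mathcal{Q}|$, which is all the downstream lemma actually needs, but the charging you sketch (and rightly flag as the main obstacle) does not produce $C=6$ as written. Each $V'$-subpath contributes at most $2k'+c$ edges with $c \ge 3$, and the number of subpaths can be as large as $|V\setminus V'|+1$; since the $2|V\setminus V'|$ budget is already fully consumed by the edges incident to $V\setminus V'$, charging the per-subpath slack to those same vertices is double-counting, and the leading coefficient comes out closer to $6+3c$ than to $6$. The paper's vertex-counting argument sidesteps this bookkeeping entirely: the only slack it ever pays is the two endpoints of $\mathcal{Q}$, which is where the ``$+4$'' comes from.
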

\begin{proof}
 Let $A' \subseteq A(M)$ be the set of $A$-vertices visited in $\mathcal{Q}$ such that, for every $a \in A'$, the edge of $M$ with endpoint $a$ is {\em not} contained in $\mathcal{Q}$. Denote $B = B_1 \cup B_2$.
 Then, if $a \in A'$ is neither the start nor the end vertex of $\mathcal{Q}$ then $a$ is incident on an edge in $\mathcal{Q}$ with endpoint $b \in B-B(N_1 \cup N_2)$. Next, observe that $|B-B(N_1 \cup N_2)| \le 2 \cdot \sqrt{N}$. Hence, we obtain $|A'| \le 2 \cdot \sqrt{N} + 2$, where the additive $2$ accounts for potential start and end vertices in $A'$. 
 
 The path $\mathcal{Q}$ thus visits at most $2 \cdot \sqrt{N} + 2 + |M \cap \mathcal{Q}|$, $A(M)$-vertices and thus at most $2 \cdot \sqrt{N} + 2 + |M \cap \mathcal{Q}| + \sqrt{N}$, $A$-vertices. Last, since every other vertex visited in $\mathcal{Q}$ is a $B$-vertex, we obtain that the path $\mathcal{Q}$ is of length at most:
 
 $$2 \cdot \left( 2 \cdot \sqrt{N} + 2 + |M \cap \mathcal{Q}| + \sqrt{N} \right) = 6 \sqrt{N} + 4 + 2 |M \cap \mathcal{Q}| \ .$$
\end{proof}

Next, we show that the reduction is successful, i.e., it does not report \texttt{fail}, with probability at least $\frac{1}{4\alpha}$.

\begin{lemma}\label{lem:reductionProb}
 Let $N \le \beta \cdot \frac{n^2}{\alpha^2}$, where $\alpha$ is the approximation factor of protocol $\Pi_{\text{LP}}$ used in our reduction, and $\beta > 0$ a small enough constant. Then, the reduction succeeds with probability at least $\frac{1}{4\alpha}$.
\end{lemma}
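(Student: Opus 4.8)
The plan is to combine the two preceding structural lemmas with the approximation guarantee of $\Pi_{\text{LP}}$ to force the output path $\mathcal{Q}$ to contain a $\Theta(1/\alpha)$-fraction of the matching $M$, and then to use the symmetry built into the reduction to conclude that the single edge $e^*$ that encodes $X[J]$ lies on $\mathcal{Q}$ with probability at least $\tfrac{1}{2\alpha}$; the protocol's success probability $1-\epsilon$ contributes the remaining factor. We may assume $\epsilon \le \tfrac12$ throughout: if not, first amplify $\Pi_{\text{LP}}$ by a constant number of independent repetitions (taking the longest path returned), which reduces the error below $\tfrac12$ and changes its message size by only a constant factor.

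First I would condition on the event $\mathcal{E}$ that $\mathcal{Q}$ is an $\alpha$-approximate longest path of $G$, so $\Pr[\mathcal{E}] \ge 1-\epsilon$. On $\mathcal{E}$, the lemma lower-bounding $\lp(G)$ gives $|\mathcal{Q}| \ge \lp(G)/\alpha \ge (2(n-\sqrt{N})-1)/\alpha$, while the lemma upper-bounding $|\mathcal{Q}|$ gives $|\mathcal{Q}| \le 6\sqrt{N} + 4 + 2\,|M \cap \mathcal{Q}|$. Chaining these two inequalities, substituting $\sqrt{N} \le \sqrt{\beta}\,n/\alpha$ (from $N \le \beta n^2/\alpha^2$), and using $\alpha \ge 1$ yields
\[
|M \cap \mathcal{Q}| \;\ge\; \frac{n}{\alpha}\bigl(1 - 4\sqrt{\beta}\bigr) - \frac{1}{2\alpha} - 2 .
\]
Choosing $\beta$ a small enough absolute constant (say $\beta \le 2^{-8}$) and $n$ large, the right-hand side is at least $\tfrac{n}{2\alpha}$. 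Since $M$ has at most $n$ edges (and at least $n - \sqrt{N}$), $\mathcal{Q}$ therefore contains at least a $\tfrac{1}{2\alpha}$-fraction of $M$ whenever $\mathcal{E}$ holds.

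Next I would carry out the symmetrization. The reduction succeeds exactly when $\mathcal{Q}$ contains $e^*$ (the complementary edge at position $J$ is never inserted, so the two cases in Step~4 of Figure~\ref{fig:reduction-insert-delete} together test precisely this). The uniform random permutations $\pi_1,\pi_2$ relabel rows and columns, and the uniform random matrix $Z$ (XOR-ed into $Y$) randomizes which copy $b_1$ or $b_2$ each edge uses; consequently, under the appropriate conditioning, the location of $e^*$ within the diagonal matching $M$ is distributed uniformly over the edges of $M$ as far as the permutation-oblivious output $\mathcal{Q}$ can tell. Granting this, $\Pr[e^* \in \mathcal{Q} \mid \mathcal{E}] = \Exp\!\left[\,|M \cap \mathcal{Q}|/|M| \,\middle|\, \mathcal{E}\right] \ge \tfrac{1}{2\alpha}$, and hence
\[
\Pr[\text{reduction succeeds}] \;=\; \Pr[e^* \in \mathcal{Q}] \;\ge\; \Pr[\mathcal{E}]\cdot\frac{1}{2\alpha} \;\ge\; (1-\epsilon)\cdot\frac{1}{2\alpha} \;\ge\; \frac{1}{4\alpha}.
\]

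The hard part will be making the symmetry step rigorous: $\Pi_{\text{LP}}$ observes the whole instance $G$, which is not literally invariant under sliding the query position along the diagonal, so one cannot simply declare $e^*$ uniform over $M$ given $G$. Instead one must exhibit the right measure-preserving coupling over the reduction's public randomness $(\pi_1,\pi_2,Z$, and the public portion of $Y')$ — following the approach of Dark and Konrad \cite{dk20} — showing that for every fixed realisation of everything other than $(i^*,j^*)$, reassigning the role of the query to any other edge of $M$ produces an equally likely realisation and leaves the relevant conditional law of $\mathcal{Q}$ unchanged. Secondary care is then needed in tracking the $O(\sqrt{N})$ and $O(1)$ additive slack from the two structural lemmas together with the gap $|M| \ge n - \sqrt{N}$, so that the final constant is exactly $\tfrac{1}{4\alpha}$, and in discharging the assumption $\epsilon \le \tfrac12$.
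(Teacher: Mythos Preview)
Your proposal is correct and follows essentially the same route as the paper's own proof: chain the two structural lemmas with the approximation guarantee to force $|M\cap\mathcal{Q}|\ge n/(2\alpha)$, invoke the permutation/XOR symmetry to get $\Pr[e^*\in\mathcal{Q}]\ge 1/(2\alpha)$, and multiply by $1-\epsilon\ge\tfrac12$. The paper handles the symmetrization step with a one-sentence assertion (``due to the random permutations and the bitwise XOR \dots\ each of the edges of $M\cap\mathcal{Q}$ has the same probability to correspond to the index $J$''), so your concern about making that step rigorous via a coupling is well placed but goes beyond what the paper actually supplies.
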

\begin{proof}
 First, we assume that $\Pi_{\text{LP}}$ does not fail. Then, since the approximation factor of $\Pi_{\text{LP}}$ is $\alpha$, we obtain:
 \begin{align*}
  \alpha \ge \frac{2 \cdot(n - \sqrt{N}) - 1}{6 \sqrt{N} + 4 + 2 |M \cap \mathcal{Q}|} \,
 \end{align*}
which implies that

\begin{align*}
 |M \cap \mathcal{Q}| \ge \frac{1}{\alpha} \left( n - \sqrt{N} - \frac{1}{2} - 2\alpha - 3\alpha \sqrt{N} \right) \ge  n / 2\alpha \ ,
\end{align*}

using the assumption $N \le \beta \cdot \frac{n^2}{\alpha^2}$, for some small $\beta > 0$.

Next, we observe that, due to the random permutations and the bitwise XOR with a uniform random bitstring, the protocol cannot identify which of the edges of $M$ corresponds to the special index $(i^*, j^*)$, or, equivalently, to $J$. In other words, each of the edges of $M \cap \mathcal{Q}$ has the same probability to correspond to the index $J$. Since $|M \cap \mathcal{Q}| \ge  n / 2\alpha$ and $|M| \le n$, with probability at least $1 / 2\alpha$, we learn the bit $X[J]$. Last, since $\Pi_{\text{LP}}$ itself has an error probability of $\epsilon < \frac{1}{2}$, the result follows.
\end{proof}

\begin{theorem}
 Every one-way two-party communication protocol with approximation factor $\alpha$ for $\textsf{LP}^{\alpha}_{\epsilon}$ with deletions requires a message of size $\Omega(n^2 / \alpha^3)$.
\end{theorem}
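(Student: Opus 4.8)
The plan is to instantiate the reduction of Figure~\ref{fig:reduction-insert-delete} with the largest admissible index size, $N = \Theta(n^2/\alpha^2)$ (a perfect square that is at most $\beta n^2/\alpha^2$ for the constant $\beta$ of Lemma~\ref{lem:reductionProb}), and then to amplify the reduction's success probability from $\Theta(1/\alpha)$ up to a constant before invoking the hardness of $\textsf{Augmented-Index}$. First I would note that the graph $G$ produced has $3n = \Theta(n)$ vertices, that every edge Bob deletes was previously inserted by Alice, and that edge multiplicities stay bounded, so the instance is legal for an insertion-deletion protocol $\Pi_{\text{LP}}$. Combining the two structural lemmas above ($\lp(G)\ge 2(n-\sqrt{N})-1$ and $|\mathcal{Q}|\le 6\sqrt{N}+4+2|M\cap\mathcal{Q}|$) with the $\alpha$-approximation guarantee and the choice of $N$ — this is exactly Lemma~\ref{lem:reductionProb} — gives $|M\cap\mathcal{Q}|\ge n/(2\alpha)$ whenever $\Pi_{\text{LP}}$ does not err. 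Since the edges of $M$ are symmetrised by the random permutations $\pi_1,\pi_2$ and the XOR masks, the special index $J$ is uniform among the at most $n$ edges of $M$, so with probability at least $1/(2\alpha)$ the path $\mathcal{Q}$ contains the unique edge of $M$ sitting at position $J$; and when it does, Bob recovers $X[J]$ with certainty, because the identity of that edge (its $B_1$-copy versus its $B_2$-copy) encodes $Y_{i^*,j^*}=X[J]\oplus Z_{i^*,j^*}$. The point to highlight is that the reduction therefore has \emph{one-sided} error: on every input it returns either \texttt{fail} or the correct bit, never a wrong bit, so its success probability is at least $\tfrac{1}{2\alpha}(1-\epsilon)\ge\tfrac{1}{4\alpha}$ (using the standard assumption $\epsilon<\tfrac12$).

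Next I would exploit this one-sidedness to amplify. Run $k=\lceil 4\alpha\ln 4\rceil=\Theta(\alpha)$ independent copies of the reduction, each drawing fresh public and private randomness. Alice's side of the reduction does not depend on Bob's index $J$, so she can execute all $k$ copies and concatenate the $k$ resulting messages of $\Pi_{\text{LP}}$ into one message of total length $k$ times that of $\Pi_{\text{LP}}$. Bob runs all $k$ reductions and outputs the bit returned by any copy that does not report \texttt{fail} — every non-failing copy returns the same value $X[J]$, so there is no conflict — and a uniform random bit if all $k$ copies fail. Since the copies are independent and each succeeds with probability at least $\tfrac{1}{4\alpha}$, all of them fail with probability at most $(1-\tfrac{1}{4\alpha})^k\le e^{-k/(4\alpha)}\le\tfrac14$, so the amplified protocol solves $\textsf{Augmented-Index}(N)$ with probability at least $1-\tfrac12\cdot\tfrac14=\tfrac78>\tfrac12$.

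Finally, Theorem~\ref{thm:lb-index} forces the amplified protocol to communicate $\Omega(N)$ bits, hence the single message of $\Pi_{\text{LP}}$ has size $\Omega(N/k)=\Omega\!\left(\frac{n^2/\alpha^2}{\alpha}\right)=\Omega(n^2/\alpha^3)$, which is the claimed bound; via the standard streaming-to-communication correspondence the same bound becomes the $\Omega(n^2/\alpha^3)$ space lower bound for one-pass insertion-deletion streaming algorithms on the $\Theta(n)$-vertex graph $G$.

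I expect the amplification to be the crux. The bare reduction succeeds with probability only $\Theta(1/\alpha)$, far below the $\tfrac12+\delta$ threshold at which the $\textsf{Augmented-Index}$ lower bound bites, and the whole argument hinges on observing that the reduction has one-sided error, so that taking the first non-failing answer among $\Theta(\alpha)$ independent copies is a correct amplification that costs only a factor $\Theta(\alpha)$ in communication — exactly the factor that degrades the $\Omega(N)=\Omega(n^2/\alpha^2)$ bound for the amplified protocol into the $\Omega(n^2/\alpha^3)$ bound for $\Pi_{\text{LP}}$. Minor points to get right along the way are that $\epsilon$ is bounded below $\tfrac12$ so that $1-\epsilon$ is a positive constant, and that the constant hidden in $N=\Theta(n^2/\alpha^2)$ is small enough for the inequality $|M\cap\mathcal{Q}|\ge n/(2\alpha)$ of Lemma~\ref{lem:reductionProb} to hold.
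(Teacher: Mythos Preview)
Your proposal is correct and follows essentially the same approach as the paper: instantiate the reduction with $N=\Theta(n^2/\alpha^2)$, amplify the $\Theta(1/\alpha)$ success probability to a constant by running $\Theta(\alpha)$ independent copies in parallel, and then divide the $\Omega(N)$ \textsf{Augmented-Index} lower bound by the number of copies. The paper uses $4\alpha$ copies (reaching success probability $\ge 0.6$) where you use $\lceil 4\alpha\ln 4\rceil$, but this is immaterial; your version is in fact a bit more careful in making the one-sided error property and the $J$-independence of Alice's side explicit, both of which the paper relies on only implicitly.
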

\begin{proof}
 Consider a protocol as in the statement of the theorem. We will run the protocol $4\alpha$ times in parallel. It follows from Lemma~\ref{lem:reductionProb} 
 that the probability of not learning the output bit $X[J]$ is at most:
 $$(1 - \frac{1}{4\alpha})^{4\alpha} \le \exp \left( - \frac{1}{4\alpha} \cdot 4\alpha \right) \le 1 / e \le 0.3679 \ . $$

Hence, with probability at least $1-0.3679 \ge 0.6$ we solve the \textsf{Augmented-Index} instance. 

Since the \textsf{Augmented-Index} instance is of size $N = \Theta(\frac{n^2}{\alpha^2})$, by Theorem~\ref{thm:lb-index}, the $4\alpha$ parallel runs of our reduction must use an overall message of size $\Omega(\frac{n^2}{\alpha^2})$. This in turn implies that at least one of the messages used by $\Pi_{\text{LP}}$ in our reduction is of size $\Omega(\frac{n^2}{\alpha^3})$, which completes the proof.
\end{proof}

We are now ready to give our lower bound for insertion-deletion streams.
\setcounter{thmsaved}{\value{theorem}}
\setcounter{theorem}{\value{counterLB-deletion}}
\addtocounter{theorem}{-1}

\begin{theorem}
 Every one-pass insertion-deletion streaming algorithm for \textsf{LP} on undirected graphs with approximation factor $\alpha \ge 1$ requires space $\Omega(n^2 / \alpha^3)$. 
\end{theorem}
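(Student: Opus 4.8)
The plan is to derive this streaming lower bound from the one-way two-party communication lower bound established just above (the theorem stating that any protocol with approximation factor $\alpha$ for $\textsf{LP}^{\alpha}_{\epsilon}$ with deletions requires a message of size $\Omega(n^2/\alpha^3)$), using the standard simulation of a one-pass streaming algorithm by a one-way communication protocol.

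Concretely, I would argue by contradiction. Suppose $\mathcal{A}$ is a one-pass insertion-deletion streaming algorithm for $\textsf{LP}$ on undirected graphs with approximation factor $\alpha$ and error probability $\epsilon < \tfrac12$ that uses space $s$. Order the stream so that all of Alice's edge insertions (the edges of $G'$ in Figure~\ref{fig:reduction-insert-delete}) appear first, followed by all of Bob's updates, namely the insertions $E_2$ and the deletions $F$. This is a legal insertion-deletion stream in the sense of Section~\ref{sec:prelim}: every edge of $F$ is an edge that Alice inserted into $G'$, so no edge is deleted before being inserted, and edge multiplicities remain nonnegative and polynomially bounded. Now Alice runs $\mathcal{A}$ on her prefix of the stream and sends the contents of $\mathcal{A}$'s memory — a message of length at most $s$ — to Bob; Bob resumes $\mathcal{A}$ on the remaining updates and outputs the path $\mathcal{Q}$ returned by $\mathcal{A}$. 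This is precisely a one-way two-party protocol for $\textsf{LP}^{\alpha}_{\epsilon}$ with deletions whose communication cost is at most $s$, so the communication lower bound above forces $s = \Omega(n^2/\alpha^3)$, a contradiction with any smaller space usage. Finally, since the instance $G$ produced by the reduction has $3n = \Theta(n)$ vertices (the sets $A$, $B_1$, $B_2$, each of size $n$), re-denoting the vertex count by $n$ only rescales the hidden constant, and the claimed bound $\Omega(n^2/\alpha^3)$ holds with $n$ the number of vertices of the input graph.

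I do not anticipate a real obstacle: all the technical content — the reduction of Figure~\ref{fig:reduction-insert-delete}, the bound $|\mathcal{Q}| \le 6\sqrt{N} + 4 + 2|M\cap\mathcal{Q}|$, Lemma~\ref{lem:reductionProb}, and the reduction to $\textsf{Augmented-Index}$ together with its $\Omega(N)$ lower bound — has already been established. The only points requiring care are the bookkeeping noted above, namely that the concatenated Alice-then-Bob stream is a valid insertion-deletion stream and that the $\Theta(n)$-vs.-$n$ vertex count does not affect the asymptotics; both are immediate from the construction.
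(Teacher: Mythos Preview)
Your proposal is correct and is exactly the standard streaming-to-communication simulation that the paper relies on implicitly: the paper states this theorem immediately after the $\Omega(n^2/\alpha^3)$ communication lower bound without writing out a separate proof, treating the passage from one-way communication to one-pass streaming as the well-known black-box reduction you describe. Your additional bookkeeping remarks (validity of the Alice-then-Bob stream, the $3n=\Theta(n)$ vertex count) are accurate and fill in details the paper leaves unstated.
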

\setcounter{theorem}{\value{thmsaved}}

\section{Conclusion} \label{sec:conclusion}
In this paper, we studied one-pass streaming algorithms for the \textsf{Longest Path} problem. We showed that, in both insertion-only and  insertion-deletion streams, for undirected graphs, there are semi-streaming algorithms that find paths of lengths at least $\frac{d}{3}$ with high probability, where $d$ is the average degree of the input graph. The algorithm can also give an $\alpha$-approximation algorithm that uses space $\tilde{O}(n^2 / \alpha)$. 
We then showed that no such result can be obtained for directed graphs in that a $n^{1-o(1)}$-approximation requires space $\Omega(n^2)$, even in insertion-only streams. We also showed that semi-streaming algorithms in the insertion-only model for undirected graphs cannot yield an arbitrarily small constant factor approximation, and we showed that, in insertion-deletion streams, space $\Omega(n^2 / \alpha^3)$ is necessary to obtain an $\alpha$-approximation in undirected graphs. We conclude with two open questions: 

First, while we resolved the space complexity for one-pass streaming algorithms for directed graphs in both the insertion-only and the insertion-deletion models, the space complexity for undirected graphs, in particular, in the insertion-only model, remains wide open. While the offline setting is orthogonal to the streaming setting, one cannot help but wonder whether similar mechanisms are at work that prevent us from obtaining stronger NP-hardness results for \textsf{Longest Path} approximation in undirected graphs and from obtaining stronger space lower bounds in the streaming setting.
Can we either prove stronger lower bounds or give more space efficient algorithms for undirected graphs in the insertion-only setting? 

Second, for undirected graphs, are there multi-pass semi-streaming algorithms that allow us to compute longer paths than $\Theta(d)$, where $d$ is the average degree of the input graph?

\bibliography{kt25}

\appendix

\section{Technical Lemmas}

\begin{lemma}\label{lem:tech-1}
Let $a,b,c$ be positive with $a-c \ge b$. Then:
 $$\frac{{a-c \choose b}}{{a \choose b}} \le \exp(- \frac{b \cdot c}{a}) \ . $$
\end{lemma}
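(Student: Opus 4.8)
The plan is to expand the ratio of binomial coefficients as a product of $b$ elementary fractions and then bound each fraction individually using the inequality $1+x\le e^x$. Throughout I will assume $a,b,c$ are positive integers, which is the case in every application of the lemma in this paper (there $a=m-(i-1)$, $b=|F|-(i-1)$, $c=|E(u_{i-1})|$).

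First I would write
\[
\frac{\binom{a-c}{b}}{\binom{a}{b}} \;=\; \frac{(a-c)!\,(a-b)!}{a!\,(a-c-b)!} \;=\; \prod_{i=0}^{b-1} \frac{a-c-i}{a-i}\,,
\]
which is just the standard rewriting of a ratio of binomial coefficients as a ratio of two falling factorials of length $b$. The hypothesis $a-c\ge b$ is exactly what guarantees that every numerator satisfies $a-c-i\ge a-c-(b-1)\ge 1$ and every denominator satisfies $a-i\ge a-(b-1)\ge c+1\ge 1$, so all $b$ factors are strictly positive and the manipulations that follow are legitimate.

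Next I would bound each factor. Writing $\dfrac{a-c-i}{a-i}=1-\dfrac{c}{a-i}$ and applying $1+x\le e^{x}$ with $x=-\dfrac{c}{a-i}$ gives $\dfrac{a-c-i}{a-i}\le \exp\!\big(-\tfrac{c}{a-i}\big)$. Since $0\le i\le b-1$ we have $a-i\le a$, hence $\dfrac{c}{a-i}\ge \dfrac{c}{a}$, so each factor is at most $\exp(-c/a)$. Multiplying the $b$ factors yields
\[
\frac{\binom{a-c}{b}}{\binom{a}{b}} \;\le\; \exp\!\Big(-\sum_{i=0}^{b-1}\tfrac{c}{a-i}\Big) \;\le\; \exp\!\Big(-\tfrac{bc}{a}\Big)\,,
\]
which is the claimed bound.

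There is no genuine obstacle here; the only point needing a moment's care is verifying positivity of all $b$ factors so that the product rewriting and the exponential estimate are valid, and this is precisely what the assumption $a-c\ge b$ provides. If one wanted the statement for non-integer $a,b,c$, the identical argument goes through after replacing factorials by Gamma functions, using $\frac{\binom{a-c}{b}}{\binom{a}{b}}=\prod_{i=0}^{b-1}\frac{a-c-i}{a-i}$ interpreted via $\binom{x}{b}=\frac{x(x-1)\cdots(x-b+1)}{b!}$; but the integer version stated above suffices for all uses in this paper.
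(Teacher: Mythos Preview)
Your proof is correct and follows essentially the same approach as the paper: both expand the ratio of binomial coefficients as a product of $b$ factors $\frac{a-c-i}{a-i}$, bound each factor, and invoke $1+x\le e^{x}$. The only cosmetic difference is the order in which the two estimates are applied (the paper first bounds each factor by $\tfrac{a-c}{a}$ and then exponentiates, whereas you exponentiate each factor first and then bound the exponent), which is immaterial.
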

\begin{proof}
 We compute:
 \begin{align*}
  \frac{{a-c \choose b}}{{a \choose b}} & = \frac{\frac{(a-c)!}{(a-b-c)!b!}}{\frac{a!}{(a-b)! b!}} = \frac{(a-c) \cdot (a-c-1) \cdot \ldots \cdot (a-b-c+1)}{a \cdot (a-1) \cdot \ldots \cdot (a-b+1)} \\
  & \le \left( \frac{a-c}{a} \right)^b = (1- \frac{c}{a})^b \le \exp(-\frac{bc}{a}) \ ,
 \end{align*}
where we used the inequality $1+x \le \exp(x)$, which holds for all $x$.
\end{proof}

\begin{lemma} \label{lem:avg-min}
Let $G = (V, E)$ be a graph with $|V| = n$, $|E| = m$, and average degree $d = 2 \frac{m}{n}$. Then, there exists a subset of vertices $U \subseteq V$ such that the vertex-induced subgraph $G[U]$ has minimum degree greater than $\frac{d}{2}$.
\end{lemma}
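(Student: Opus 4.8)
The plan is to prove this by the standard \emph{peeling} (degeneracy) argument: repeatedly delete from $G$ any vertex whose degree in the current residual graph is at most $d/2$, and continue until no such vertex remains. Let $U$ be the vertex set of the graph at which this process halts. By the stopping rule, every vertex of $G[U]$ has degree greater than $d/2$ in $G[U]$, which is exactly the desired conclusion. Since each step deletes one vertex, the process terminates after at most $n$ steps, so the only thing left to establish is that $U \neq \varnothing$. (We may assume $m \ge 1$; the case $m = 0$ is vacuous, as no nonempty induced subgraph has positive minimum degree.)

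To show $U \neq \varnothing$, I would argue by contradiction. Suppose the process deletes all $n$ vertices, in the order $v_1, v_2, \dots, v_n$, and let $d_i$ be the degree of $v_i$ in the residual graph immediately before its deletion. First I would record the accounting identity $\sum_{i=1}^{n} d_i = m$: every edge of $G$ is deleted exactly once, namely at the moment the first of its two endpoints is removed, and so it contributes exactly $1$ to the count $d_i$ of that vertex. Because a vertex is removed only when its residual degree is at most $d/2$, we have $d_i \le d/2$ for every $i$, and therefore $m = \sum_{i=1}^{n} d_i \le n \cdot \frac{d}{2} = m$. Equality must then hold throughout, forcing $d_i = \frac{d}{2}$ for all $i$. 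But the last vertex deleted has no incident edges remaining, so $d_n = 0$, which gives $d = 0$ and hence $m = 0$, contradicting $m \ge 1$. Thus the process halts while $U$ is still nonempty.

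The main obstacle — indeed essentially the only non-routine step — is this counting argument: bounding $m$ above by $nd/2$ and then extracting the strict inequality needed for nonemptiness. The care required lies in setting up the per-deletion bookkeeping correctly (charging each edge to the \emph{first} of its endpoints to be removed, so that $\sum_i d_i = m$) and in observing that the final deleted vertex is necessarily isolated, which is precisely what rules out the borderline equality case. Everything else is immediate.
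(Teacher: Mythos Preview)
Your proof is correct and uses essentially the same peeling argument as the paper. The only difference is in how non-emptiness of $U$ is established: the paper maintains the invariant that the average degree of the residual graph stays at least $d$ after each deletion, whereas you do a single global accounting ($\sum_i d_i = m \le n\cdot d/2 = m$, with the last deleted vertex forcing strictness); both are standard and equivalent ways to package the same edge-counting idea.
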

\begin{proof}
 We iteratively remove vertices of degree at most $d/2$ from $G$ until no such vertex is left. Let $G_i$ be the graph with the first $i$ vertices removed, and let $G_0 = G$. We denote $m_i$ the number of edges in $G_i$, and $n_i = n - i$ the number of vertices in $G_i$. It can then be seen by induction that the average degree $d_i$ of every graph $G_i$ is still at least $d_0 = d$. Indeed, observe that removing a vertex of degree at most $d/2$ removes at most $d/2$ edges from the graph. Then, the average degree of graph $G_{i+1}$ is:
 \begin{align*}
     2 \cdot \frac{m_{i+1}}{n_{i+1}} \ge 2 \cdot \frac{m_i - \frac{d}{2}}{n - (i+1)} \ge 2 \cdot \frac{\frac{d}{2} (n - i) - \frac{d}{2}}{n-(i+1)} = d \ .
 \end{align*}  
 Then, since the average degree remains as high as $d$ throughout, and we only ever remove vertices of degree at most $d/2$, the process must leave a non-empty graph with minimum degree at least $\lfloor d/2 + 1 \rfloor$ behind. The set $U$ then is the set of vertices that are not removed by this process.
\end{proof}

\end{document}